\crefname{lemma}{Lemma}{Lemmas}
\crefname{fact}{Fact}{Facts}
\crefname{theorem}{Theorem}{Theorems}
\crefname{mtheorem}{Theorem}{Theorems}
\crefname{mcorollary}{Corollary}{Corollaries}
\crefname{mproposition}{Proposition}{Propositions}
\crefname{mquestion}{Question}{Questions}
\crefname{corollary}{Corollary}{Corollaries}
\crefname{claim}{Claim}{Claims}
\crefname{example}{Example}{Examples}
\crefname{algorithm}{Algorithm}{Algorithms}
\crefname{problem}{Problem}{Problems}
\crefname{definition}{Definition}{Definitions}
\newtheorem{theorem}{Theorem}[section]
\newtheorem*{theorem*}{Theorem}
\newtheorem{proposition}[theorem]{Proposition}
\newtheorem*{proposition*}{Proposition}
\newtheorem{lemma}[theorem]{Lemma}
\newtheorem*{lemma*}{Lemma}
\newtheorem*{conjecture*}{Conjecture}
\newtheorem*{fact*}{Fact}
\newtheorem*{hypothesis*}{Hypothesis}
\theoremstyle{definition}
\newtheorem{definition}[theorem]{Definition}
\newtheorem*{definition*}{Definition}
\newtheorem{mquestion}{Question}
\theoremstyle{remark}
\newtheorem{claim}[theorem]{Claim}
\newtheorem*{claim*}{Claim}
\newtheorem{remark}[theorem]{Remark}
\newtheorem*{remark*}{Remark}
\newtheorem*{observation*}{Observation}
\let\mathbb\varmathbb
\let\svthefootnote\thefootnote
\newcommand\blfootnote[1]{%
  \let\thefootnote\relax%
  \footnotetext{#1}%
  \let\thefootnote\svthefootnote%
}
\newcommand{\FormatAuthor}[3]{
	\begin{tabular}{c}
		#1 \\ {\small\texttt{#2}} \\ {\small #3}
	\end{tabular}
}
\newcommand{\keywords}[1]{\bigskip\par\noindent{\footnotesize\textbf{Keywords\/}: #1}}
\newcommand{\cM}{\ensuremath{\mathcal{M}}}
\newcommand{\cG}{\ensuremath{\mathcal{G}}}
\newcommand{\cE}{\ensuremath{\mathcal{E}}}
\newcommand{\R}{{\mathbb R}}
\newcommand{\N}{\ensuremath{\mathbb{N}}}
\newcommand{\cY}{\ensuremath{\mathcal{Y}}}
\newcommand{\cZ}{\ensuremath{\mathcal{Z}}}
\newcommand{\PR}[1]{\mathrm{Pr}\left[ #1\right]}
\newcommand{\PROver}[2]{\mathrm{Pr}_{#1}\left[ #2\right]}
\newcommand{\EE}{\mathbb{E}}
\newcommand{\Eover}[2]{\mathbb{E}_{#1}\left[ #2 \right]}
\newcommand{\norm}[1]{\left\|#1\right\|}
\newcommand{\smnorm}[1]{\|#1\|}
\newcommand{\abs}[1]{\lvert #1 \rvert}
\newcommand{\twonorm}[1]{\left\|#1\right\|_2}
\newcommand{\infnorm}[1]{\left\|#1\right\|_\infty}
\newcommand{\inabs}[1]{\left|#1\right|}
\newcommand{\inabset}[1]{\inabs{\inset{#1}}}
\newcommand{\inset}[1]{\left\{#1\right\}}
\newcommand{\inparen}[1]{\left(#1\right)}
\newcommand{\inbrak}[1]{\left[#1\right]}
\newcommand{\supp}{\mathrm{supp}}
\newcommand{\polylog}{\mathrm{polylog}}
\newcommand{\poly}{\mathrm{poly}}
\newcommand{\tr}{\mathrm{tr}}
\newcommand{\nomadicv}[1]{V_{\mathrm{Nomadic}\inparen{#1}}}
\newcommand{\ehikes}[1]{\cE_{#1}}
\newcommand\numberthis{\addtocounter{equation}{1}\tag{\theequation}}
\newcommand{\eps}{\varepsilon}
\renewcommand{\epsilon}{\varepsilon}
\renewcommand{\exp}[1]{\mathrm{exp}\left(#1\right)}
\newcommand{\Id}{I}
\newcommand{\cH}{\mathcal H}
\newcommand{\ceil}[1]{{\left\lceil #1 \right\rceil}}
\newcommand{\floor}[1]{{\left\lfloor #1 \right\rfloor}}
\newcommand{\fresh}{\textsc{f}}
\newcommand{\bound}{\textsc{b}}
\newcommand{\stale}{\textsc{s}}
\newcommand{\highmult}{\textsc{h}}
\newcommand{\unforced}{\textsc{u}}
\newcommand{\forced}{\textsc{r}}
\newcommand{\Fits}{\{1,-1\}}
\newcommand{\sign}{\mathsf{sign}}
\newcommand{\Spec}{\mathrm{Spec}}
\newcommand{\defeq}{:=}
\newcommand{\ball}{{\mathsf B}}
\begin{document}

	\title{$\ell_p$-Spread and Restricted Isometry Properties of \\ Sparse Random Matrices}

\author{
\begin{tabular}[h!]{ccc}
      \FormatAuthor{Venkatesan Guruswami\thanks{Supported in part by NSF grants CCF-1908125 and CCF-2210823, and a Simons Investigator Award.}}{venkatg@berkeley.edu}{UC Berkeley}
      \FormatAuthor{Peter Manohar\thanks{Supported in part by an ARCS Scholarship, NSF Graduate Research Fellowship (under grant numbers DGE1745016 and DGE2140739), and NSF CCF-1814603.}}{pmanohar@cs.cmu.edu}{Carnegie Mellon University}
    \FormatAuthor{Jonathan Mosheiff\thanks{Supported in part by NSF CCF-1814603.}}{jmosheif@cs.cmu.edu}{Carnegie Mellon University}
\end{tabular}
} %
\date{}

	\maketitle\blfootnote{Any opinions, findings, and conclusions or recommendations expressed in this material are those of the author(s) and do not necessarily reflect the views of the National Science Foundation.}
	\thispagestyle{empty}

	\begin{abstract}
	Random subspaces $X$ of $\R^n$ of dimension proportional to $n$ are, with high probability, well-spread with respect to the $\ell_2$-norm. Namely,
	every nonzero $x \in X$ is ``robustly non-sparse'' in the following sense: $x$ is $\eps \norm{x}_2$-far in $\ell_2$-distance from all $\delta n$-sparse vectors, for positive constants $\eps, \delta$ bounded away from $0$.
	This ``$\ell_2$-spread'' property is the natural counterpart, for subspaces over the reals, of the minimum distance of linear codes over finite fields, and corresponds to $X$ being a Euclidean section of the $\ell_1$ unit ball.  Explicit $\ell_2$-spread subspaces of dimension $\Omega(n)$, however, are unknown, and the best known explicit constructions (which achieve weaker spread properties), are analogs of \emph{low density parity check} (LDPC) codes over the reals, i.e., they are kernels of certain \emph{sparse} matrices.
	
	\smallskip
	 Motivated by this, we study the spread properties of the kernels of \emph{sparse random matrices.} We prove that with high probability such subspaces contain vectors $x$ that are $o(1)\cdot \norm{x}_2$-close to $o(n)$-sparse with respect to the $\ell_2$-norm, and in particular are \emph{not} $\ell_2$-spread. This is strikingly different from the case of random LDPC codes, whose distance is asymptotically almost as good as that of (dense) random linear codes.

	\smallskip
On the other hand, for $p < 2$ we prove that such subspaces \emph{are} $\ell_p$-spread with high probability. The spread property of sparse random matrices thus exhibits a threshold behavior at $p=2$. Our proof for $p < 2$ moreover shows that a random sparse matrix has the stronger restricted isometry property (RIP) with respect to the $\ell_p$ norm, and in fact this follows solely from the unique expansion of a random biregular graph, yielding a somewhat unexpected generalization of a similar result for the $\ell_1$ norm \cite{BerindeGI+08}. Instantiating this with suitable explicit expanders, we obtain the first explicit constructions of $\ell_p$-RIP matrices for $1 \leq p < p_0$, where $1 < p_0 < 2$ is an absolute constant.

	\keywords{Spread Subspaces, Euclidean Sections, Restricted Isometry Property, Sparse Matrices}
	\end{abstract}		

	
	\clearpage
\setcounter{tocdepth}{2}
\begin{spacing}{0.9}
{\small \tableofcontents}
\end{spacing}
	\thispagestyle{empty}
	
	\clearpage
	
	\pagestyle{plain}
	\setcounter{page}{1}
	\section{Introduction}
	\label{sec:intro}

Classical results in asymptotic geometric analysis on the Gelfand/Kolmogorov widths of $\ell_2$ balls~\cite{FLM77,kashin,garnaev-gluskin} show that random subspaces $X$ of $\R^n$  of dimension proportional to $n$ (say, defined as the kernel of random $n/2 \times n$ matrices with i.i.d.\ Gaussian or $\pm 1$ entries) are \emph{good Euclidean sections} of $\ell_1^n$: namely, $\norm{x}_1 \ge \Omega(\sqrt{n}) \norm{x}_2$ for every $x \in X$. An elementary proof of this fact also follows from the Johnson-Lindenstrauss (JL) property of random matrices, its connection to the restricted isometry property (RIP) and compressed sensing, and their relationship to the Euclidean sections property~\cite{BDDw07}. 

The condition $\norm{x}_1 \ge \Omega(\sqrt{n}) \norm{x}_2$ can equivalently\footnote{\label{footnote:equivalence}See \cref{prop:companddist}.}  be expressed as a ``well-spreadness'' criterion satisfied by every nonzero vector $x \in X$: the largest $\delta n$ entries of $x$ have at most $1-\eps$ of its $\ell_2$ mass, for some positive constants $\delta,\eps$ bounded away from $0$ as $n \to \infty$. Equivalently, this means that all nonzero vectors $x \in X$ are \emph{incompressible}---there is no sparse vector that approximates $x$ well in $\ell_2$ norm (in other words, $\norm{x-y}_2 \ge \eps \norm{x}_2$ for all $\delta n$-sparse vectors $y$). This can be naturally viewed as a robust analog, for subspaces of $\R^n$, of the distance property of linear error-correcting codes.
		    
The above well-spreadness criterion can naturally be imposed with respect to any $\ell_p$ metric: a subspace $X$ is said to be \emph{$\ell_p$-spread} if every nonzero vector $x \in X$ is $\eps \norm{x}_p$-far in $\ell_p$-distance from all $\delta n$-sparse vectors. The $\ell_p$-spread property is a more stringent requirement for larger $p$ (see~\cref{prop:pspreadtoqspread}).
For $p > 2$, the optimal asymptotic dimension of $\ell_p$-spread subspaces is at most $O_p(n^{2/p})$ and thus $o(n)$~\cite{gluskin}. In this work, we therefore focus on $p \in [1,2]$ where it is possible to have $\ell_p$-spread subspaces of dimension proportional to $n$.

For a subspace $X$ of $\R^n$, define its \emph{$\ell_p$-distortion} $\Delta_p(X)$ to be the following quantity:
\begin{equation*}
\Delta_p(X) := \sup_{x \in X \setminus \{0^n\}} \frac{n^{1 - \frac{1}{p}}\norm{x}_p }{\norm{x}_1} \enspace.
\end{equation*}
Note that $1 \le \Delta_p(x) \le n^{1-1/p}$. Good $\ell_p$-spread of $X$ can be captured by the condition that $\Delta_p(X)$ is bounded by a fixed constant independent of $n$; this generalizes the aforementioned equivalence\cref{footnote:equivalence} of $\ell_2$-spread and the Euclidean section property. 
The term \emph{distortion} is used because the natural inclusion of $X$ in $\R^n$ induces a bi-Lipschitz embedding of $X$, taken with the $\ell_p$ norm, into $\ell_1^n$, with distortion $\Delta_p(X)$.
The distortion/spread property of subspaces with respect to different $\ell_p$ norms has been extensively studied, owing to its connections to width properties in convex geometry~\cite{gluskin,KT07}, embeddings between metric spaces~\cite{Indyk06}, compressed sensing~\cite{Donoho06,CandesRT06,KT07}, error-correction over the reals~\cite{CandesT05, GLW08}, and the restricted isometry (RIP) and dimensionality-reduction/Johnson-Lindenstrauss (JL) properties~\cite{KT07,BDDw07, ZhuGR15}.

 Despite a lot of interest and the abundance of probabilistic constructions, an outstanding question is to construct an \emph{explicit} subspace $X \subseteq \R^n$ of dimension $\Omega(n)$ that is $\ell_2$-spread, or equivalently has $\Delta_2(X) \le O(1)$. By explicit, we mean deterministically constructing a basis for the subspace (or its dual) in $\text{poly}(n)$ time.\footnote{Explicit constructions of $\ell_p$-spread spaces of dimension $\Omega(n)$ are given in \cite{BerindeGI+08} ($p = 1$) and \cite{Karnin11} ($1 \leq p < 2$).}
 This is a counterpart, for subspaces of $\R^n$, of the problem of constructing asymptotically good binary linear codes $C \subseteq \{0,1\}^n$: namely, codes whose dimension and minimum distance are both proportional to $n$. In addition to being a natural and basic challenge in pseudorandomness, explicit constructions are also valuable in applications of spread subspaces such as compressed sensing, as they provide a \emph{guarantee} that the matrix will have the stipulated properties. This is particularly important since there are no known methods to efficiently certify the $\ell_2$-spread (or even $\ell_p$-spread) of random subspaces.

 \subsection{Kernels of sparse matrices} 
 
 In the case of $p = 2$, the best known explicit constructions of subspaces $X \subseteq \R^n$ with $\dim(X) \ge \Omega(n)$, in terms of their distortion $\Delta_2(X)$, are due to \cite{GLR10}. They give a construction analogous to Tanner codes from coding theory~\cite{tanner}, combining appropriately chosen unbalanced bipartite expanders and local subspaces, to produce $X$ with $\dim(X) \ge n-o(n)$ and $\Delta_2(X) \le (\log n)^{O(\log \log \log n)}$ (so almost poly-logarithmic).\footnote{For sublinear dimension, an explicit construction of $X \subseteq \R^n$ with distortion $\Delta_2(X) \le 1+o(1)$ and $\dim(X) \ge n/2^{O((\log \log n)^2)}$ was given in \cite{Indyk07}.}  A simpler construction analogous to Sipser-Spielman codes~\cite{SS96}, using $s$-regular spectral expanders and local well-spread subspaces of $\R^s$, was given in \cite{GLW08} and achieves\footnote{This construction is not explicit except for very small $s$, as the local subspace of $\R^s$ is either constructed by brute force or drawn at random.} $\Delta_2(X) \le n^{O(1/\log s)}$. An alternate probabilistic construction achieving similar parameters to \cite{GLW08} based on tensor products was given in \cite{IndykS10}. The approach of \cite{IndykS10} can further achieve distortion approaching $1$ at the expense of making $\dim(X)$ smaller, but still $\Omega(n)$.

One notable attribute of the constructions above is that the subspace $X$ can be expressed as the kernel of a matrix that is \emph{sparse}. For instance, the construction of \cite{GLW08} picks a matrix where each row is $s$-sparse with $\pm 1$ entries (that are chosen randomly for a probabilistic construction), and the construction in \cite{IndykS10} defines the subspace $X \subseteq \R^n$ as the $k$-fold tensor product of another subspace, and so $X$ can be defined as the kernel of an $n^{1/k}$-sparse matrix. Moreover, known explicit constructions of $\ell_p$-spread subspaces for $1 \leq p < 2$ \cite{BerindeGI+08,Karnin11} are also kernels of sparse matrices.

The sparsity of these constructions is inherited from the ``underlying constructions'' for codes; the constructions of \cite{GLR10, GLW08, IndykS10} come from ``lifting'' constructions of linear codes  (namely, Tanner codes~\cite{tanner}, Sipser-Spielman codes~\cite{SS96}, and tensor product codes, respectively) to this setting, and these constructions (for linear codes) are known to give good low density parity check (LDPC) codes: namely, codes that are the kernels of sparse matrices.

In light of these works, a natural question (and indeed one explicitly posed in \cite{GLW08}), is the following.
\begin{mquestion}
\label{q:sparsesections}
Does there exist an $m \times n$ matrix $A$ with $n - m\ge \Omega(n)$ whose rows are $s$-sparse for $s \le O(1)$ (or even $s \le \polylog(n)$) such that $\Delta_2(\text{ker}(A)) \le O(1)$?
\end{mquestion}
The approaches of \cite{GLW08,IndykS10} show that one can achieve $\Delta_2(\text{ker}(A)) \le \text{exp}(O(1/\delta))$ when $s=n^\delta$. A positive answer to \cref{q:sparsesections}, even via random matrices, would likely yield good progress towards explicit constructions, as $O(1)$-sparse matrices are likely easier to derandomize than dense random ones. A negative answer to \cref{q:sparsesections} would likely rule out explicit constructions based on the current state-of-the-art approaches of \cite{GLR10,GLW08,IndykS10}.

 In addition to exploring the potential of the approaches behind the current best constructions, sparsity is desirable from  a computational efficiency standpoint. Sparse matrices  lead to faster algorithms, for example when used as measurement matrices in compressed sensing or to compute a sparse JL transform for dimensionality-reduction.

Motivated by these considerations,  we study the $\ell_2$-spread, and, more generally, $\ell_p$-spread ($1\le p\le 2$) of subspaces defined as the kernel of \emph{sparse random matrices}. Such subspaces are the continuous analogues of random low density parity check (LDPC) codes.	Random LDPC codes have been studied in coding theory since Gallager's seminal work~\cite{gallager}, with a renaissance since the mid 1990s~\cite{RU-book} due to their fast iterative decoding algorithms and performance close to capacity.

Random LDPC codes are known to achieve rate vs.\ distance trade-offs approaching that of random (dense) linear codes~\cite{gallager}. Recently, even the list-decodability, and indeed any ``local'' property, of random LDPC codes was shown to be similar to that of random linear codes~\cite{MRRSW20}. Given that random subspaces are well-spread and that random LDPC codes achieve similar properties to random (dense) codes, one might naturally expect, by analogy, that the kernels of sparse random matrices are also well-spread.
		
\subsection{Our results}\label{sec:introResults}
Our results paint a precise picture of the $\ell_p$-spread of kernels $X$ of sparse random matrices. Before stating our results, we first define $\ell_p$-spread and state the random matrix model that we use.
			\begin{definition}[$\ell_p$-spread]
					\label{def:spread-intro}
		Fix $p \in [1, \infty]$, $\eps \in [0,1]$ and $k\le n\in \N$. A vector $y\in \R^n$ is \emph{$k$-sparse} if $\inabs{\supp(y)} \le k$. A vector $x\in \R^n \setminus \{0^n\}$ is said to be \emph{$(k,\eps)$-$\ell_p$-compressible} if there exists a $k$-sparse $y\in \R^n$ such that $\smnorm{x-y}_p\le \eps\smnorm{x}_p$. Otherwise, we say that $x$ is \emph{$(k,\eps)$-$\ell_p$-spread}. 
		
				A subspace $X \subseteq \R^n$ is \emph{$(k,\eps)$-$\ell_p$-spread} if every $x \in X \setminus \{0^n\}$ is $(k,\eps)$-$\ell_p$-spread.
	\end{definition}
		
		\parhead{The random matrix model.} 
		    A matrix $A \in \{0, 1, -1\}^{m \times n}$ is said to be \emph{$(s,t)$-biregular} if every row and column of $A$ has exactly $s$ and $t$ nonzero entries, respectively. Let $\cM_{m,n,s,t}$ denote the set of all $(s,t)$-biregular matrices in $\{0,1,-1\}^{m\times n}$.
		    
            All of our theorems for random matrices will be for a matrix $A$ drawn uniformly at random from $\cM_{m,n,s,t}$, where $\alpha = \frac{m}{n} = \frac{t}{s} \in (0,1)$ is a fixed constant and $n \to \infty$; for this exposition, we will use $A$ to denote a random matrix from $\cM_{m,n,s,t}$, and $B$ to denote an arbitrary matrix in $\{0,1,-1\}^{m \times n}$.            
            We additionally assume that $s := s(n) \leq n^c$ for some absolute constant $0 < c < 1$, and $t = \alpha s \geq 3$. An event $\cE$ is said to hold \emph{with high probability} if $\lim_{n\to \infty}\PR{\cE} = 1$. All asymptotic notation refers to the regime of $n\to \infty$ and constant $\alpha$. The constants implied by asymptotic notation are universal, unless stated otherwise. The symbols $c$, $c'$, $c_1$ and $c_2$ always stand for positive universal constants, which may differ across different lemma and theorem statements.
        We use the phrase ``in particular'' in theorem statements to refer to an implication that follows by either of the generic reductions of \cref{prop:companddist} ($\ell_p$-spread  implies $\ell_p$-distortion) or \cref{prop:riptospread} ($\ell_p$-RIP implies $\ell_p$-spread).
		
		\subsubsection{Poor $\ell_2$-spread of sparse random matrices}\label{sec:ell_2results}
		Our first theorem shows that, surprisingly, $\ker(A)$ is, with high probability, \emph{not} $\ell_2$-spread.			
	    \begin{restatable}[Poor $\ell_2$-spread of $\ker(A)$]{mtheorem}{elltwospreadneg}		\label{mthm:ell2spreadneg}
	    
		With high probability over $A$, there exists an $(m^{c},\frac{n^{-\Omega(\log(1/\alpha)/\log s)}}{1-\sqrt \alpha})$-$\ell_2$-compressible vector $x \in \ker(A)$, where $c < 1$ is an absolute constant. In particular, $$\Delta_{2}(\ker(A)) \geq (1-\sqrt\alpha)\cdot n^{\Omega(\log (1/\alpha)/\log s)} \enspace.$$
		Moreover, there is a $\poly(n)$-time algorithm that, on input $A$, outputs such an $x$. 
	    \end{restatable}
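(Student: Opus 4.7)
The plan is to construct an approximately sparse vector $x \in \ker(A)$ by first building a sparse vector $y \in \R^n$ with $\|Ay\|_2$ very small relative to $\sigma_{\min}(A)\cdot \|y\|_2$, and then projecting $y$ onto $\ker(A)$ via $x := y - A^\dagger A y$. The sparse $y$ will be built inside a locally tree-like neighborhood of the bipartite graph associated to $A$: with high probability, a ball of radius $\Theta(\log n/\log((s-1)(t-1)))$ around a typical variable vertex is isomorphic to the $(s,t)$-biregular tree, so restricting to such a tree $T$ of variable-depth $L = \Theta(\log n/\log s)$ keeps $|T|\leq s^L \leq m^c$ for the absolute constant $c<1$ of the theorem statement.

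The construction has three pieces. First, I would identify the tree neighborhood $T$ and construct a vector $y$ supported on $T$ with $\|Ay\|_2/\|y\|_2 \leq n^{-\Omega(\log(1/\alpha)/\log s)} \cdot \sqrt{s}$. Second, I would establish via random-matrix estimates for sparse biregular matrices that $\sigma_{\min}(A) \geq \Omega(\sqrt{s}\,(1-\sqrt{\alpha}))$ with high probability; after rescaling, $A/\sqrt{s}$ has unit-norm columns and entries of variance $\approx 1/n$, so a Bai--Yin-type analysis of $AA^T$ should yield this bound. Third, the projection $x = y - A^\dagger A y \in \ker(A)$ satisfies $\|x-y\|_2 \leq \|Ay\|_2/\sigma_{\min}(A)$ and $\|y\|_2 \approx \|x\|_2$, which certifies $x$ as $(m^c,\eps)$-$\ell_2$-compressible with $\eps = n^{-\Omega(\log(1/\alpha)/\log s)}/(1-\sqrt{\alpha})$. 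The distortion lower bound then follows from \cref{prop:companddist}, and all steps are algorithmic, running in $\poly(n)$ time.

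The main obstacle is the construction of $y$ with $\|Ay\|_2/\|y\|_2$ polynomially small in $n$. The naive attempt, putting value $1$ at the root of $T$ and dividing by $t-1$ at each successive variable level, satisfies every internal check of $T$ exactly but leaves boundary contributions at the depth-$L$ checks; a direct calculation gives $\|Ay\|_2^2/\|y\|_2^2 \approx (s-1)(s-t)/(t-1)$, which is only a constant independent of $L$. To push the ratio down to $n^{-\Omega(\log(1/\alpha)/\log s)}$, a layered refinement seems necessary: at each of the $L$ levels, introduce new variables outside the current support and choose their values to cancel part of the previous layer's boundary error, exploiting the asymmetry $s>t$ (i.e.\ $\alpha=t/s<1$) so that each round of correction shrinks the residual by a factor of $\sqrt{\alpha}$. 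Iterating this refinement $L$ times would compound to the target ratio $\sqrt{s}\cdot\alpha^{L/2}$ while only multiplying the support by $s^L \le m^c$. The technical heart of the argument is showing that such a refinement can be carried out inside a generic tree neighborhood, with the corrections at different layers not interfering destructively, and that the resulting ratio bound combines cleanly with the singular-value estimate.
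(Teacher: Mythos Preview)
Your three-step framework---locate a locally tree-like ball, build a sparse vector $y$ supported there with small $\|Ay\|_2/\|y\|_2$, then project onto $\ker(A)$ using the singular-value lower bound---is exactly the paper's approach. The gap is in the tree-vector step, where you have the division factor wrong, and this error is what leads you to believe a layered refinement is needed.

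In the tree rooted at a variable vertex $v^*$, each check node at depth $2k+1$ has one parent variable (at depth $2k$) and $s-1$ child variables (at depth $2k+2$), since check nodes have degree $s$. To make $(Ay)_r=0$ at such a check you need the parent's value to equal (up to signs) the sum of the $s-1$ children's values, so the correct assignment is $y_v = \prod \bigl(-\sign(\cdot)\sign(\cdot)/(s-1)\bigr)$ along the root-to-$v$ path, i.e.\ magnitude $(s-1)^{-k}$ at variable-depth $k$, not $(t-1)^{-k}$. With the factor $t-1$ the internal checks are \emph{not} satisfied (unless $s=t$), so your stated calculation cannot be right as described. With the correct factor $s-1$, every internal check is satisfied exactly, and only the $t(t-1)^\ell(s-1)^\ell$ leaf checks at depth $2\ell+1$ contribute, each with $|(Ay)_r|=(s-1)^{-\ell}$. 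This gives
\[
\frac{\|Ay\|_2^2}{\|y\|_2^2}\ \le\ \frac{t(t-1)^\ell(s-1)^\ell\cdot (s-1)^{-2\ell}}{1}\ =\ t\Bigl(\frac{t-1}{s-1}\Bigr)^{\ell}\ \le\ t\,\alpha^{\ell},
\]
which for $\ell=\Theta(\log m/\log(st))$ already yields $\|Ay\|_2/\|y\|_2 \le \sqrt{t}\cdot n^{-\Omega(\log(1/\alpha)/\log s)}$. Combined with $\sigma_{\min}(A)\ge \sqrt{s}(1-\sqrt\alpha-o(1))$ and the projection $x=y-A^\dagger Ay$, this gives the theorem directly. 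Your proposed iterative correction scheme is therefore unnecessary; it is solving a non-problem created by the $s\leftrightarrow t$ mix-up.
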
 
	Choosing $s = O(1)$ in \cref{mthm:ell2spreadneg} (and letting $\alpha$ be bounded away from $1$) implies\footnote{Note that since $\alpha s = t \geq 3$, we must have $\log s \geq \log \frac{1}{\alpha}$.} that $\Delta_2(\ker(A)) \geq n^{\Omega(1)}$ with high probability, and choosing $s = \polylog(n)$ implies $\Delta_2(\ker(A)) \geq n^{\Omega(\log (1/\alpha)/\log \log n)}$. We always trivially have $\Delta_2(\ker(A)) \leq \sqrt{n}$, so not only does \cref{mthm:ell2spreadneg} answer \cref{q:sparsesections} in the negative for sparse random matrices, but it also does so in a very strong sense. For instance, when $s = O(1)$, \cref{mthm:ell2spreadneg} shows that $\Delta_2(\ker(A))$ is ``maximally bad'', up to a constant factor in the exponent.
	
	Another point of interest is the choice $s = n^{\delta}$ for some fixed $\delta$. This yields the tradeoff of $\Delta_2(\ker(A)) \geq (\frac{1}{\alpha})^{\Omega(\frac{1}{\delta})}$, which precisely matches the tradeoff (in terms of $\delta$) achieved by both \cite{GLW08,IndykS10}. While our matrix ensemble is ``more random'' compared to those in \cite{GLW08, IndykS10}, \cref{mthm:ell2spreadneg} can nonetheless be interpreted as giving evidence that this $\exp{O(\frac{1}{\delta})}$ tradeoff from \cite{GLW08, IndykS10} is tight and inherent to sparse constructions.
	
	Our proof of \cref{mthm:ell2spreadneg} is \emph{constructive}, in the sense that we give a very simple, efficient algorithm to find such an $x \in \ker(A)$. This moreover shows that for sparse random matrices, one can efficiently \emph{refute} the claim that $\Delta_2(\ker(A)) = O(1)$, as the vector $x$ is a refutation witness. Our algorithm provides an interesting counterpoint to the work of \cite{BarakBHKSZ12}, who gave an algorithm based on the sum-of-squares SDP hierarchy to certify that $\Delta_2(\ker(A)) \leq O(1)$ with high probability for \emph{dense} matrices $A$ where $\dim(\ker(A)) \leq O(\sqrt{n})$. In contrast, our algorithm succeeds when $\dim(\ker(A)) = \Omega(n)$ and the matrix $A$ is \emph{sparse}. The two results taken together suggest an interesting relationship between the density and $\dim(\ker(A))$ of matrices $A$ for which we can efficiently certify or refute bounds on $\Delta_2(\ker(A))$.
	
	We also note that, by the well-known duality formula relating Kolmorogov and Gelfand widths (see [KT07] and the references therein), \cref{mthm:ell2spreadneg} implies that the row span of $A$ is far from approximating the $\ell_2$-sphere in $\ell_\infty$ distance. Concretely, with high probability over $A$ there exists $x\in \R^n$ with $\twonorm x=1$ that is $(1-\sqrt\alpha)\cdot n^{\Omega(\log (1/\alpha)/\log s)}/\sqrt{n}$-far in $\ell_\infty$ norm from all vectors of the form $A^{\top} y$, where $y\in \R^m$.

	The proof of \cref{mthm:ell2spreadneg} requires the following strong bound that we show on the singular values of $A$. 
	\begin{restatable}[Singular value bound]{mtheorem}{singvalue}
		\label{mthm:singvalue}
	    With high probability, the set of singular values $\sigma(A)$ of $A$ satisfy
		\begin{equation*}
			\sigma(A)\subseteq \inbrak{\sqrt{s-1} - (1+o(1))\cdot\sqrt{t-1}, \sqrt{s-1} + (1+o(1))\cdot\sqrt{t-1}} \enspace.
		\end{equation*}
		Moreover, the above bound holds even without our (otherwise global) assumption that $s \leq n^c$ for some absolute constant $c < 1$.
	\end{restatable}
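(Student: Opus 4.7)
The plan is to bound the spectrum of $AA^\top$ via the trace method with a suitable shift. First I would observe that the uniform distribution on $\cM_{m,n,s,t}$ factors as $A = G \circ S$ (entrywise Hadamard product), where $G \in \{0,1\}^{m \times n}$ is a uniformly random $(s,t)$-biregular bipartite adjacency matrix and $S \in \{-1,+1\}^{m \times n}$ is an independent uniformly random Rademacher signing supported on the nonzeros of $G$; this is valid because each unsigned biregular pattern appears in $\cM_{m,n,s,t}$ together with all $2^{ms}$ of its signings, each equally likely. After squaring and expanding, the target singular-value bound is equivalent to showing $\norm{AA^\top - (s+t-2)\Id}_{\mathrm{op}} \leq 2\sqrt{(s-1)(t-1)} + o(\sqrt{(s-1)(t-1)} + (t-1))$ with high probability, so it suffices to control the operator norm of this shifted matrix.

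Next I would estimate a high even moment $\mathbb{E}\inbrak{\tr\bigl((AA^\top - (s+t-2)\Id)^{2k}\bigr)}$ for $k = \omega(\log n)$, expanding it as a sum of $\pm 1$-weighted products of entries of $A$ along closed walks of length $4k$ in the bipartite graph underlying $G$. Taking expectation over $S$ first kills every walk that does not traverse each edge an even number of times. Taking expectation over $G$ then reduces the estimate to counting such ``even'' closed walks in a uniformly random $(s,t)$-biregular bipartite graph. Following Bordenave's refinement of the trace method for random regular graphs and its biregular-bipartite extension by Brito--Dumitriu--Harris, the dominant contribution comes from walks that are locally tree-like on the infinite $(s,t)$-biregular tree. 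The key role of the shift $(s+t-2)\Id$ is to cancel out the contribution from backtracking steps, leaving non-backtracking walks, which are counted by the non-backtracking spectral radius $\sqrt{(s-1)(t-1)}$ of the infinite tree. This yields a bound of $m \cdot (2\sqrt{(s-1)(t-1)} + o(\sqrt{(s-1)(t-1)}))^{2k}$, which after Markov and taking $k$-th roots delivers the desired operator-norm bound. Rearranging recovers the claimed singular-value interval.

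\textbf{Main obstacle.} The main technical difficulty lies in controlling the contribution of the \emph{non-tree-like} walks---those whose edge set is not a tree in the underlying bipartite graph---to the expected trace. These are classified by their ``tangle structure'' and each additional excess edge must be shown to cost a multiplicative factor of roughly $n^{-\Omega(1)}$ in probability, so that tangled walks are negligible compared to tree-like ones. Carrying this combinatorial accounting through in the biregular, signed setting, uniformly over all admissible $s$ (even beyond $n^c$, as claimed in the ``Moreover'' clause) is the delicate step; the restriction $s \leq n^c$ is convenient but not essential to the trace method, and in the extreme regime where $s$ is a non-trivial power of $n$, one may alternatively complete the proof via a matrix-Bernstein-style bound of the form $\norm{A}_{\mathrm{op}} = O(\sqrt{s\log n})$, which already suffices for the asymptotics stated.
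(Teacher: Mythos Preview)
Your overall framework---factor $A$ as an unsigned biregular graph times independent Rademacher signs, take expectation over signs to restrict to even walks, and count those walks via a trace moment---matches the paper's. But two specific steps in your plan do not go through, and the paper fills them in differently.

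\textbf{The ``shift cancels backtracking'' step is unjustified.} You assert that subtracting $(s+t-2)\Id$ from $AA^\top$ cancels the contribution of backtracking steps in the trace expansion, so that only non-backtracking walks remain with weight governed by the tree radius $\sqrt{(s-1)(t-1)}$. A diagonal shift does not accomplish this: subtracting $s\cdot\Id$ zeroes the diagonal of $AA^\top$ (immediate self-loops), but no scalar shift kills general backtracking inside longer walks. The paper instead passes through an \emph{Ihara--Bass identity} (from \cite{MohantyOP20b}) that algebraically relates $\det(L(z))$, where $L(z)=\Id - z(AA^\top - s\Id) + z(t-2)\Id + z^2(s-1)(t-1)\Id$, to $\det(\Id - zB)$ for the \emph{nomadic walk matrix} $B$; this matrix is non-backtracking by construction, and the trace method is then applied to $B^\ell(B^\top)^\ell$, whose expansion is a sum over genuinely non-backtracking hikes. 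The reduction via Ihara--Bass is what delivers the sharp constant $2\sqrt{(s-1)(t-1)}$ in the bound on $\Spec(AA^\top - s\Id)$; your proposed direct trace on $AA^\top - (s+t-2)\Id$ would instead face backtracking terms that do not cancel and would have to be handled combinatorially, which you do not explain.

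\textbf{The large-$s$ fallback is too weak.} You propose that when $s$ is a power of $n$, a matrix-Bernstein bound $\norm{A}_{\mathrm{op}} = O(\sqrt{s\log n})$ ``already suffices.'' It does not: the theorem claims the sharp interval $\sqrt{s-1}\pm(1+o(1))\sqrt{t-1}$, i.e., $\sigma_{\max}\le (1+\sqrt\alpha+o(1))\sqrt s$ and $\sigma_{\min}\ge (1-\sqrt\alpha-o(1))\sqrt s$. A bound of $O(\sqrt{s\log n})$ is off by $\sqrt{\log n}$ on the upper side and gives nothing on the lower side; even the tighter off-the-shelf bound $\sigma(A)\subseteq\sqrt s\pm O(\sqrt t)$ from \cite{BandeiraV16} loses the $(1+o(1))$ constant, which the paper explicitly needs (see the discussion after the theorem statement). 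The paper handles growing $s$ by a separate hike-counting argument in two regimes: for $s\le\polylog(n)$ it uses the bicycle-free radius and a fresh/boundary/stale encoding with the additional observation that left- and right-fresh steps must nearly alternate, forcing $c_L\approx c_R\approx c/2$ so the count is $(s-1)^\ell(t-1)^\ell$ rather than $(s-1)^{2\ell}$; for $s=\omega(\polylog(n))$ it abandons bicycle-freeness and instead uses that an even $2\ell$-hike can visit at most $\ell$ distinct left vertices, adapting a dense-matrix encoding in the style of \cite{Tao12}. Neither of these ingredients appears in your plan, and they are the technical heart of why the theorem holds uniformly in $s$.
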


	\cref{mthm:singvalue} should not be surprising, especially given the recent works of \cite{BritoDH18, Bordenave19, BordenaveC19, MohantyOP20a, MohantyOP20b, ODonnellW20, Zhu20}, and indeed our proof follows the same overall blueprint of these works. Most of these papers, however, only handle the case when the degree of the graph is \emph{constant} as $n \to \infty$;\footnote{The exceptions are \cite{MohantyOP20a}, which handles $\polylog(n)$ degree, and \cite{Zhu20}, which handles $n^c$ degree but does not obtain as sharp bounds.} this corresponds to the case of $s = O(1)$ in \cref{mthm:singvalue}. \cref{mthm:singvalue} thus differs as it allows for $s = \omega(1)$, and indeed we can even take $s = n^c$ for some absolute constant $c < 1$. On the other hand, most of the aforementioned works deal with the case of \emph{unsigned} adjacency matrices, whereas we only prove \cref{mthm:singvalue} for \emph{randomly signed} adjacency matrices. We note that proving the analogue of \cref{mthm:singvalue} for unsigned adjacency matrices  (where $\sigma(A)$ now denotes the set of singular values, excluding the trivial value of $\sqrt{st}$) and for all (non-constant) $s,t$ remains open.
	
	The singular value bound in \cref{mthm:singvalue} is challenging to prove because it is so sharp. Indeed, it is not too difficult to show that $\sigma(A) \subseteq[\sqrt{s} - O(\sqrt{t}), \sqrt{s} + O(\sqrt{t})]$ with high probability via black-box applications of known results, e.g., \cite{BandeiraV16}. However, this does not suffice for our use in the proof of \cref{mthm:ell2spreadneg}, as the aforementioned weaker bound would only suffice to prove \cref{mthm:ell2spreadneg} provided that $\alpha \leq c$ for some absolute constant $c$, where $c$ depends on the absolute constant $c'$ hidden in the ``$O(\sqrt{t})$''. We need the sharp bound of \cref{mthm:singvalue} in order to allow for $\alpha$ to be an arbitrary constant in $(0,1)$.
	
	As a counterpart to \cref{mthm:ell2spreadneg}, we give the following partial converse, which shows that $\ker(A)$ \emph{is} $(k,\eps)$-$\ell_2$-spread for a weak choice of parameters $k$ and $\eps$.
	
	\begin{restatable}[Converse to \cref{mthm:ell2spreadneg}]{mtheorem}{elltwospreadpos}
	\label{mprop:ell2spreadpos}
	Assume that $t\ge 9$. Then, with high probability over $A$, the space $\ker(A)$ is $\inparen{\Omega(\alpha^2 n/t^4), \alpha^{O(\log n/\log t)}}$-$\ell_2$-spread.
	\end{restatable}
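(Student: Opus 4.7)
The plan is a peeling argument driven by the \emph{unique-neighbor expansion} of the bipartite graph $G$ associated with $A$, where the $n$ columns are variable vertices and the $m$ rows are check vertices, with edges recording the nonzeros of $A$. Via a standard first-moment calculation plus a union bound, I would first establish that with high probability over $A \in \cM_{m,n,s,t}$, for every $S \subseteq [n]$ with $|S| \le k := \Omega(\alpha^2 n/t^4)$, the neighborhood satisfies $|N(S)| \ge (1-\eta)s|S|$ with expansion loss $\eta = O(1/t)$; the regime $\alpha^2 n/t^4$ (using $t \ge 9$) is the sharp range where this quantitative bound survives the union bound. An averaging consequence is that at most an $O(1/t)$ fraction of variables $i \in S$ fail to have $\Omega(s)$ \emph{unique-neighbor checks} relative to $S$, i.e., checks $r$ with $r \cap S = \{i\}$.

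The core step is to convert this expansion into a bound of the form $\|x_S\|_2^2 \le C_* \|x_{S^c}\|_2^2$ valid for every $x \in \ker(A)$ and every $S \subseteq [n]$ with $|S| \le k$, where $C_* = \alpha^{-O(\log n/\log t)}$. I would proceed by iterative peeling: set $S_0 := S$, and let $G_i \subseteq S_i$ be the set of variables with $\Omega(s)$ unique-neighbor checks relative to $S_i$, and $S_{i+1} := S_i \setminus G_i$; by the expansion, $|S_{i+1}| \le |S_i|/t$, so $S_L = \emptyset$ after $L = O(\log_t k) = O(\log n/\log t)$ iterations. For each $i \in G_i$ and each of its unique-neighbor checks $r$, the relation $(Ax)_r = 0$ gives $A_{r,i} x_i = -\sum_{j \in r \setminus \{i\}} A_{r,j} x_j$, where every such $j$ lies in $S_i^c$. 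Squaring, applying Cauchy--Schwarz, and summing over $i \in G_i$ and its unique checks $r$ (while carefully accounting for the multiplicity with which each $j \in S_i^c$ participates) yields a per-iteration estimate $\|x_{G_i}\|_2^2 \le C \cdot \|x_{S_i^c}\|_2^2$. Using the identity $\|x_{S_i^c}\|_2^2 = \|x_{S^c}\|_2^2 + \sum_{j<i} \|x_{G_j}\|_2^2$ and unfolding the resulting recursion gives $\|x_S\|_2^2 \le (1+C)^L \|x_{S^c}\|_2^2 =: C_* \|x_{S^c}\|_2^2$.

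The $\ell_2$-spread conclusion follows immediately from this bound: if some $x \in \ker(A) \setminus \{0^n\}$ were $(k, \eps)$-$\ell_2$-compressible, witnessed by a $k$-sparse $y$ with $\|x - y\|_2 \le \eps \|x\|_2$, setting $S := \supp(y)$ gives $\|x_{S^c}\|_2 \le \|x-y\|_2 \le \eps \|x\|_2$. Combining with $\|x\|_2^2 = \|x_S\|_2^2 + \|x_{S^c}\|_2^2 \le (1+C_*) \|x_{S^c}\|_2^2$ forces $\eps \ge (1+C_*)^{-1/2} = \alpha^{\Omega(\log n/\log t)}$, contradicting any smaller choice of $\eps$ and thereby establishing the claimed $(k, \alpha^{O(\log n/\log t)})$-$\ell_2$-spread.

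\textbf{Main obstacle.} The delicate part is calibrating the per-iteration constant $C$ so that the compounded factor $(1+C)^L$ matches the target $\alpha^{-O(\log n/\log t)}$. A crude application of Cauchy--Schwarz, bounding $\sum_{j \in S_i^c} c_j x_j^2$ (where $c_j$ counts the unique-neighbor pairs $(i,r)$ through $j$) by $t \cdot \|x_{S_i^c}\|_2^2$ using $\max_j c_j \le t$, would yield only $C = O(t)$ and thus the weaker $(1+C)^L = n^{O(1)}$, which does not suffice when $t$ is large. Achieving the correct $\alpha$-dependence should require exploiting the sharp expansion regime $|S| \le \alpha^2 n/t^4$ to bound the support of the multiplicity vector $c$---which has total mass $O(s^2 |G_i|)$ concentrated on $O(s |G_i|)$ variables in $S_i^c$---and leveraging this to track how the $\ell_2$ mass of $x$ redistributes across the $L$ peeling stages. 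Getting this accounting tight, rather than losing a spurious factor of $t$ per iteration, is the main technical hurdle.
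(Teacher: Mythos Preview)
Your peeling strategy is a reasonable route and is genuinely different from the paper's argument, which (following \cite[Lemma 3.4]{BasakR17}) sorts the coordinates of $x$ by magnitude, partitions the support into geometrically growing blocks, locates a threshold level $\ell^*$ at which the block mass first drops below $\beta^{\ell^*+1}$ for $\beta = \Theta(t/\norm{A}_2^2)$, and then directly lower-bounds $\norm{Ax}_2$ by combining a unique-neighbor estimate on the head with an operator-norm estimate $\norm{Av}_2 \le \norm{A}_2\norm{v}_2$ on the tail.

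That said, there is a real gap in your plan, and it is not the one you try to patch. Two points:

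\begin{itemize}
\item You have the degrees swapped: in the paper's convention each column (variable) has degree $t$ and each row (check) has degree $s$, so the expansion reads $|N(S)|\ge (1-\eta)\,t\,|S|$ and each $i\in G_\ell$ has $\Omega(t)$ unique-neighbor checks, not $\Omega(s)$.
\item Your Cauchy--Schwarz step gives a per-iteration constant $C=O(s)$, and the ``multiplicity-vector'' refinement you sketch does not fix this: bounding the support or total mass of $c$ cannot control $\max_j c_j$, which is what enters after Cauchy--Schwarz. The missing ingredient---which you never invoke, and which is equally central to the paper's proof---is the operator-norm bound $\norm{A}_2 \le 2\sqrt{s}$ from \cref{mthm:singvalue}. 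With it, your recursion closes cleanly \emph{without} Cauchy--Schwarz: for $i\in G_\ell$ and $r$ a unique check of $i$ relative to $S_\ell$, the relation $(Ax)_r=0$ gives $|x_i| = |(Ax_{S_\ell^c})_r|$ exactly. The unique checks of distinct $i\in G_\ell$ are disjoint, so summing over all such pairs yields
\[
\Omega(t)\,\norm{x_{G_\ell}}_2^2 \;\le\; \sum_{r}\bigl|(Ax_{S_\ell^c})_r\bigr|^2 \;=\; \norm{Ax_{S_\ell^c}}_2^2 \;\le\; \norm{A}_2^2\,\norm{x_{S_\ell^c}}_2^2,
\]
hence $C = O(\norm{A}_2^2/t) = O(s/t) = O(1/\alpha)$. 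Then $(1+C)^L = \alpha^{-O(\log n/\log t)}$ as needed.
\end{itemize}

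In short, your overall architecture can be made to work, but only once you replace the Cauchy--Schwarz/multiplicity idea by the operator-norm bound; this is the same spectral input the paper uses, just inserted at a different point in a structurally different argument.
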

	We note that in \cref{mprop:ell2spreadpos}, the parameter $k$ is $\Omega(\alpha^2 n/t^4) = m^{\Omega(1)}$ and the parameter $\eps$ is $\alpha^{O(\log n/\log s)}$.\footnote{This follows since $t = \alpha s \leq s$, $m = \alpha n$, and $s \leq n^c$ for some absolute constant $c$.} \cref{mprop:ell2spreadpos} thus shows that the parameters in  \cref{mthm:ell2spreadneg} are tight up to the universal constants in the exponent.
	 Our proof of \cref{mprop:ell2spreadpos} is an adaptation of the proof of \cite[Lemma 3.4]{BasakR17}.
	
	\subsubsection{$\ell_p$-RIP and $\ell_p$-spread for $p < 2$}
	We next focus on the $\ell_p$ norm for $p < 2$. For $p < 2$, there are known explicit constructions of $\ell_p$-spread subspaces \cite{Karnin11}. Because of this, for $p < 2$ we focus on the stronger,\footnote{\label{fn:riptospread}See \cref{prop:riptospread}.} well-studied \emph{Restricted Isometry Property (RIP)}. We also note that the constructions of \cite{Karnin11} are highly structured, and so even though they also come from sparse matrices, they do not tell us anything about the $\ell_p$-spread of sparse \emph{random} matrices. 
	
	We prove that sparse random matrices are not only $\ell_p$-spread, but are also $\ell_p$-RIP, and, moreover, this follows merely from the expansion of the underlying bipartite graph of the random matrix $A$. In particular, we prove that \emph{any} signed adjacency matrix $B$ of a left-regular bipartite expander graph $G$ is $\ell_p$-RIP, provided that the maximum right degree $s_{\max}$ is above a small threshold independent of $n$.

    The RIP is a well-studied property of matrices from the compressed sensing literature, defined as follows.
	\begin{definition}[$\ell_p$-RIP]\label{def:RIP}
	Let $B\in \R^{m \times n}$ be a matrix. We say that $B$ is \emph{$(k,\eps)$-$\ell_p$-RIP} if there exists $K > 0$ such that for every $k$-sparse $x \in \R^n$, it holds that\footnote{We note that the standard definition of RIP typically appears without the normalization factor $K$ above. We include the parameter $K$ for convenience, as the random sparse matrices we consider are not normalized.}
	\begin{equation*}
	K(1 - \eps) \norm{x}_p \leq \norm{Bx}_p \leq K (1 + \eps) \norm{x}_p \enspace.
	\end{equation*}
	\end{definition}
	We note that $\ell_p$-RIP implies $\ell_p$-spread,\cref{fn:riptospread} and in fact it is a strictly stronger property~\cite{KT07}.
	
	RIP matrices have been studied extensively in the context of compressed sensing, as they yield a polynomial-time algorithm based on linear programming for the \emph{robust sparse recovery problem}. Namely, given a ``noisy measurement sketch'' $y = Bx + e$ of a vector $x$, where $B$ is $(k, \eps)$-$\ell_p$-RIP and $\norm{e}_p \leq \eta$, there is a polynomial-time algorithm to recover an estimate $\hat{x}$ for $x$ with the so-called ``$\ell_p$-$\ell_1$ guarantee,'' namely the estimate $\hat{x}$ satisfies $\norm{\hat{x} - x}_p \leq O\inparen{k^{-(1 - \frac{1}{p})}\norm{x - x^*}_1 + \eta}$, where $x^*$ is a $k$-sparse vector minimizing $\norm{x - x^*}_1$ (see Appendix~A in \cite{ZhuGR15} for details). We note that if $B$ is merely $\ell_p$-spread, then $B$ suffices for the (non-robust) sparse recovery problem, i.e., when there is no noise $e$.

    We now turn to formally stating our results. We first recall the definition of a (unique) bipartite expander.

	\begin{definition}[Unique expanders]
	A bipartite graph $G = (V_L = [n], V_R = [m], E)$ is a \emph{$t$-left-regular $(\gamma, \mu)$-unique expander} if \begin{inparaenum}[(1)] \item $\deg(u) = t$ for all $u \in V_L$, and \item for all $S \subseteq V_L$, $\abs{S} \leq \gamma n$, there are at least $t(1 - \mu)\abs{S}$ vertices $v \in V_R$ which each have exactly one neighbor in $S$\end{inparaenum}.
	\end{definition}
	
	A matrix $B \in \{0, 1, -1\}^{m \times n}$ is a \emph{signed adjacency matrix} of a bipartite graph $G=(V_L = [n], V_R = [m], E)$ if 
	$$B_{r,u}\ne 0 \iff (u,r) \in E$$
	for all $u\in V_L$, $r\in V_R$.
	
	\begin{restatable}[$\ell_p$-RIP of expander graphs]{mtheorem}{ExpansionToRIP}\label{mthm:ExpansionToRIP}
        Let $G$ be a bipartite $t$-left-regular $(\gamma,\mu)$-unique expander with maximum right degree $s_{\max}$, and let $B$ be any signed adjacency matrix of $G$. Let $0 < \eps \leq 1$ and $1\le p < 2$ such that $\eps^2 \ge 9\mu s_{\max}^{p-1}$. Then, $B$ is $(\gamma n, \eps)$-$\ell_p$-RIP, i.e., for every $\gamma n$-sparse $x\in \R^n$,
        $$t^{\frac{1}{p}}(1-\eps) \norm{x}_p \le \norm{Bx}_p \le t^{\frac{1}{p}}(1+\eps)\norm{x}_p \enspace.$$
	\end{restatable}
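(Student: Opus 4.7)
The plan is to decompose $Bx = y + z$, where $y$ isolates each row's contribution from its largest-magnitude neighbor in $\supp(x)$ and $z = Bx - y$ collects the ``collision'' terms; the $\ell_p$ triangle inequality then reduces the theorem to bounding $\norm{y}_p$ and $\norm{z}_p$ separately. Concretely, let $S = \supp(x)$, note $\abs{S} \le \gamma n$, and order $S = \{u_1,\ldots,u_k\}$ so that $\abs{x_{u_1}} \ge \cdots \ge \abs{x_{u_k}}$. For each $r \in V_R$ with $N(r) \cap S \ne \emptyset$, set $\pi(r) = \min\{i : u_i \in N(r)\}$ and define $y_r = B_{r, u_{\pi(r)}} x_{u_{\pi(r)}}$; otherwise $y_r = 0$. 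Writing $T_i = \{u_1,\ldots,u_i\}$ and $r_i = \abs{\pi^{-1}(i)}$, one checks $r_i = \abs{N(u_i) \setminus N(T_{i-1})}$, so $\sum_{j \le i} r_j = \abs{N(T_i)}$; the unique-expansion hypothesis applies to every prefix $T_i$ since $\abs{T_i} \le \gamma n$.

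By construction $\norm{y}_p^p = \sum_i r_i \abs{x_{u_i}}^p$. Abel summation with the decreasing weights $\abs{x_{u_i}}^p$ converts this into a sum involving the partial sums $\abs{N(T_i)}$, which satisfy $t(1-\mu)i \le \abs{N(T_i)} \le t i$ (by the expansion property and the trivial bound respectively), yielding $t(1-\mu) \norm{x}_p^p \le \norm{y}_p^p \le t \norm{x}_p^p$. For $\norm{z}_p^p$, note each $z_r = \sum_{j > \pi(r),\, u_j \in N(r)} B_{r,u_j} x_{u_j}$ is a sum of at most $d_r - 1 \le s_{\max}-1$ terms, where $d_r = \abs{N(r)\cap S}$. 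The triangle and power-mean inequalities (valid for $p \ge 1$) give $\abs{z_r}^p \le s_{\max}^{p-1} \sum_{j > \pi(r),\, u_j \in N(r)} \abs{x_{u_j}}^p$. Exchanging the order of summation and using $\abs{N(u_j) \cap N(T_{j-1})} = t - r_j$ yields
\begin{equation*}
\norm{z}_p^p \;\le\; s_{\max}^{p-1} \sum_j (t-r_j)\abs{x_{u_j}}^p \;=\; s_{\max}^{p-1}\bigl(t\norm{x}_p^p - \norm{y}_p^p\bigr) \;\le\; \mu\, t\, s_{\max}^{p-1}\,\norm{x}_p^p.
\end{equation*}

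Combining via $\norm{Bx}_p \in [\norm{y}_p - \norm{z}_p,\, \norm{y}_p + \norm{z}_p]$ and the elementary bound $(1-\mu)^{1/p} \ge 1 - \mu$ (for $p \ge 1$) yields
\begin{equation*}
t^{1/p}\bigl[(1-\mu) - (s_{\max}^{p-1}\mu)^{1/p}\bigr]\norm{x}_p \;\le\; \norm{Bx}_p \;\le\; t^{1/p}\bigl[1 + (s_{\max}^{p-1}\mu)^{1/p}\bigr]\norm{x}_p.
\end{equation*}
The hypotheses $p < 2$ (so $1/p > 1/2$) and $s_{\max}^{p-1}\mu \le \eps^2/9 \le 1$ force $(s_{\max}^{p-1}\mu)^{1/p} \le (\eps^2/9)^{1/2} = \eps/3$ and $\mu \le \eps^2/9 \le \eps/9$, so both sides lie within $(1\pm\eps)\,t^{1/p}\norm{x}_p$. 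The main obstacle is the H\"older step in the $\norm{z}_p^p$ estimate, which loses a factor of $s_{\max}^{p-1}$; the hypothesis $\eps^2 \ge 9\mu s_{\max}^{p-1}$ is calibrated precisely to absorb this loss, and the condition $p<2$ is exactly what converts the $O(\eps^2)$ collision mass into an $O(\eps)$ contribution to $\norm{Bx}_p$ (via the exponent $1/p > 1/2$). The degeneration of this conversion at $p=2$ matches the impossibility for random sparse matrices from \cref{mthm:ell2spreadneg}.
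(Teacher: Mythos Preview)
Your proof is correct, and it takes a genuinely different route from the paper's.

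The paper proves the theorem via a technical lemma (\cref{lem:ellprip}) that relies on a \emph{peeling} matching (\cref{claim:MatchingExists}): one iteratively removes vertices from $S$, matching each to its $\ge t(1-\mu)$ remaining unique neighbors. The analysis then introduces, for each $r\in N(S)$, a potential $a_r = \abs{(Bx)_r}^p + \kappa^{p-1}(s_{\max}-1)^{p-1}\sum_{u\in W_r}\abs{x_u}^p$, and bounds $a_r$ from below via the one-variable optimization $\min_{\beta\in[0,1]}\bigl[(1-\beta)^p + \kappa^{p-1}\beta^p\bigr] = (1+1/\kappa)^{1-p}$, with a symmetric argument for the upper bound. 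The slack parameters $\delta_1,\delta_2$ are then tuned to $\eps/3$.

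Your argument instead uses the \emph{sort-by-magnitude} matching of \cite{BerindeGI+08}---exactly the one the paper's overview (\cref{sec:techs}) says ``does not generalize to larger $\ell_p$ norms.'' The reason it works in your hands is that you apply the triangle inequality at the level of $\norm{\cdot}_p$ rather than $\norm{\cdot}_p^p$: the obstruction the paper identifies is that $\norm{Bx}_p^p \ge \sum_{E_1}\abs{x_v}^p - \sum_{E_2}\abs{x_v}^p$ is false, but $\norm{Bx}_p \ge \norm{y}_p - \norm{z}_p$ is of course true. The Abel-summation step with the prefix bounds $t(1-\mu)i \le \abs{N(T_i)} \le ti$ is a clean replacement for the paper's potential-function machinery, and your identity $\sum_j (t-r_j)\abs{x_{u_j}}^p = t\norm{x}_p^p - \norm{y}_p^p$ neatly recycles the lower bound on $\norm{y}_p^p$ to control $\norm{z}_p^p$. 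The role of $p<2$ is isolated in the single step $(s_{\max}^{p-1}\mu)^{1/p} \le (s_{\max}^{p-1}\mu)^{1/2}$.

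In terms of what each approach buys: yours is shorter, uses only vertex expansion (since you only need $\abs{N(T_i)}\ge t(1-\mu)i$, not unique-neighbor counts), and shows that the \cite{BerindeGI+08} template does extend to $p\in[1,2)$ with one extra idea. The paper's potential-function approach produces \cref{lem:ellprip} with explicit free parameters $\delta_1,\delta_2$ valid for all $p\ge 1$, which may be more flexible for other parameter regimes, but for the theorem as stated your proof is the more economical one.
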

     \cref{mthm:ExpansionToRIP} generalizes a result of \cite{BerindeGI+08}, which shows that any signed adjacency matrix $B$ of $G$ is $\ell_1$-RIP, provided that $G$ is an expander. This is somewhat surprising, as the proof in \cite{BerindeGI+08} makes heavy use of properties specific to the $\ell_1$ norm.\footnote{They also show that their proof for $\ell_1$-RIP extends to $\ell_p$-RIP for $p \leq 1 + O(\frac1{\log n})$, because the ``H\"{o}lder factor'' of $n^{1 - \frac{1}{p}}$ is $O(1)$, but it does not extend to $\ell_p$ for any constant $p > 1$.}
    
    The $\ell_p$-RIP of matrices for general $p$ has been studied in other contexts, most notably in \cite{ZhuGR15}. As is typical when studying RIP matrices, they view the sparsity parameter $k$ as a fixed function of $n$, and determine $m$ as a function of $k,n$. However, the results in \cite{ZhuGR15} are incomparable to ours, as they hold only for the low-sparsity case of $k = O(n^{1/p})$ (so $k = o(n)$ if $p > 1$), but we are concerned with the case of $k = \Omega(n)$, when the sparsity is a small constant fraction of $n$.    
	
	 As a random $t$-left-regular bipartite graph is a good expander with high probability, we obtain the following corollary of \cref{mthm:ExpansionToRIP}, which shows that $\ker(A)$ for $A \gets \cM_{m,n,s,t}$ achieves very good $\ell_p$-spread for every $p \in [1,2)$. Thus, the poor $\ell_2$-spread of $\ker(A)$ is in fact specific to the case of $p = 2$.
		\begin{restatable}[Good $\ell_p$-RIP and $\ell_p$-spread of $A$]{mcorollary}{ellpspreadpos}
		\label{mthm:ellpspreadpos}
		Fix $p \in [1, 2)$, $0 < \eps < \frac 12$, and suppose that $s \ge \inparen{\frac{18}{\alpha \eps^2}}^{\frac 1{2-p}}$.
		Then, with high probability over $A$, the matrix $A$ is $\inparen{\Omega(\gamma n),\eps}$-$\ell_p$-RIP for $\gamma = \frac{\alpha^2}{t^4}$: for every $\Omega(\gamma n)$-sparse $x \in \R^n$, it holds that
		\begin{equation*}
		t^{\frac{1}{p}}(1 - \eps) \norm{x}_p \leq \norm{Ax}_p \leq t^{\frac{1}{p}}(1 + \eps) \norm{x}_p \enspace.
		\end{equation*}
		In particular, the subspace $\ker(A)$ is $\inparen{\Omega(\gamma n), \Omega\inparen{\gamma^{1 - \frac{1}{p}}}}$-$\ell_p$-spread and $\Delta_p(\ker(A)) \leq O\inparen{1/\gamma^{2 - \frac{2}{p}}}$.
	\end{restatable}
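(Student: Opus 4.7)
The plan is to derive the corollary from \cref{mthm:ExpansionToRIP} by showing that the bipartite graph $G$ underlying a uniformly random $(s,t)$-biregular matrix $A$ is, with high probability, a sufficiently strong unique expander. Note that $A$ is itself a signed adjacency matrix of $G$, and by biregularity the maximum right degree of $G$ equals $s_{\max} = s$.

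First, I would establish a standard random-graph lemma: with high probability $G$ is a $t$-left-regular $(\gamma,\mu)$-unique expander with $\gamma = \Omega(\alpha^2/t^4)$ and $\mu = 2/t$. The workhorse estimate is that, for a fixed set $S \subseteq V_L$ of size $k$, the number $C(S)$ of edge-collisions on the right (unordered pairs of edges from $S$ sharing a common right endpoint) has expectation $O(t^2 k^2/(\alpha n))$, and the number of unique neighbors of $S$ is at least $tk - 2 C(S)$. A moment bound on $C(S)$ for the uniform biregular distribution (in the spirit of the proof of \cref{mprop:ell2spreadpos}, adapted from \cite{BasakR17}), combined with a union bound over $S$ of each size $k \leq \gamma n$, then yields that every such $S$ simultaneously has at least $(1-\mu)tk$ unique neighbors.

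Second, I would verify that the hypothesis $s \geq (18/(\alpha\eps^2))^{1/(2-p)}$ supplies the RIP precondition $\eps^2 \geq 9 \mu s_{\max}^{p-1}$: the assumption rearranges to $\alpha \eps^2 s^{2-p} \geq 18$, equivalently $\eps^2/(9 s^{p-1}) \geq 2/(\alpha s) = 2/t = \mu$. Applying \cref{mthm:ExpansionToRIP} to $A$ then immediately gives the $\ell_p$-RIP bound $t^{1/p}(1-\eps)\norm{x}_p \leq \norm{Ax}_p \leq t^{1/p}(1+\eps)\norm{x}_p$ for every $\Omega(\gamma n)$-sparse $x$. The ``in particular'' clauses follow mechanically: the $\ell_p$-spread of $\ker(A)$ comes from \cref{prop:riptospread} (a sparse approximation to a nonzero $x\in \ker(A)$ would force $\norm{Ax}_p$ to be too small, contradicting the lower RIP bound), and $\Delta_p(\ker(A)) \leq O(1/\gamma^{2-2/p})$ follows from the generic spread-to-distortion implication \cref{prop:companddist}.

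The main obstacle is the random-graph estimate in the first step. A naive Markov-plus-union-bound over $\binom{n}{k}$ sets does not close cleanly, so some concentration beyond the first moment --- either a higher-moment computation exploiting the biregularity of the configuration model, or a Chernoff-type tail bound --- is needed to simultaneously control $C(S)$ for all sets of size up to $\gamma n$ while losing only $\mu = 2/t$ in the unique-expansion parameter. Everything else in the argument is either a one-line algebraic check or a direct appeal to a previously stated result.
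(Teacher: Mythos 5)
Your reduction is exactly the paper's: check that $s \ge \inparen{\frac{18}{\alpha\eps^2}}^{\frac{1}{2-p}}$ rearranges to $\eps^2 \ge 9\mu s^{p-1}$ with $\mu = \frac{2}{t}$ and $s_{\max}=s$, invoke \cref{mthm:ExpansionToRIP} on the (with high probability) $(\Omega(\alpha^2/t^4), \frac{2}{t})$-unique expander $G_A$, and then get the spread and distortion consequences from \cref{prop:riptospread} and \cref{prop:companddist}. The only place you deviate is the random-graph step, and there you have flagged, but not resolved, the real issue: a first-moment bound on the collision count $C(S)$ plus Markov gives only $\Pr[C(S) > k] \lesssim t^2k/(\alpha n)$ per set, which cannot survive the union bound over $\binom{n}{k}$ sets, so "some concentration beyond the first moment" is indeed mandatory and is the entire content of the step. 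The paper supplies it as a standalone result, \cref{prop:uniqueExpansion}, which its proof of this corollary simply cites; the proof there (\cref{lem:ProbSetNotExpanding}, following Vadhan) reveals the $tk$ edges of $S$ sequentially and shows, via the negative-association lemma \cref{lem:NAedges} (a nontrivial injection argument handling the dependencies of the uniform biregular model), that each reveal collides with an already-seen right vertex with conditional probability at most $\frac{tk}{m}$; this dominates $C(S)$ by a binomial and yields the exponential tail $\binom{tk}{2k}\inparen{\frac{tk}{m}}^{2k} \le \inparen{\frac{et^2k}{2m}}^{2k}$, which does close the union bound. So: modulo citing \cref{prop:uniqueExpansion} (or reproducing its proof), your argument is complete and is the same as the paper's; as a self-contained derivation, the missing Chernoff-type tail bound under the dependent biregular distribution is the one genuine gap, and the negative-association machinery is the idea needed to fill it.
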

    Fixing $p, \alpha,\eps$ to be constants and taking $s$ to be a large enough constant, this shows that $\ker(A)$ is $(\Omega(n), \Omega(1))$-$\ell_p$-spread with high probability, and therefore $\Delta_p(\ker(A)) = O(1)$.   
    Together with \cref{mthm:ell2spreadneg}, this shows that the $\ell_p$-spread property of $\ker(A)$ exhibits an interesting \emph{threshold phenomenon} at $p = 2$.

   We also combine \cref{mthm:ExpansionToRIP} with the explicit constructions of expander graphs of \cite{CapalboRVW02} to obtain the following corollary, which gives an explicit construction of $\ell_p$-RIP matrices for all $p \in [1,p_0)$, where $1 < p_0 < 2$ is an absolute constant. We thus obtain the first explicit construction of a matrix $B$ achieving the ``$\ell_p$/$\ell_1$ guarantee'' for the robust sparse recovery problem, and our matrices are for the regime $k = \Theta(n)$ and any $p \in [1, p_0)$.
   Previously, such constructions were only known for $p\le 1 + O\inparen{\frac 1{\log n}}$ \cite{BerindeGI+08}. Unlike \cref{mthm:ellpspreadpos}, our explicit constructions only extend up to some threshold $p_0 < 2$. This is because the expanders of \cite{CapalboRVW02} achieve weaker expansion than random graphs. Concretely, the ``expansion error'' $\mu$ of the \cite{CapalboRVW02} expanders is $\mu = O(1/t)^{\tau}$ for some constant $\tau < 1$, which yields the threshold of $p_0 = 1 + \tau$, whereas random graphs achieve $\mu = O(1/t)$, allowing for $p_0 = 2$.

        \begin{restatable}[Explicit construction of $\ell_p$-RIP matrices]{mcorollary}{ExplicitConstruction}\label{mthm:ExplicitConstruction}
        Let $0 < \eps < \frac 12$, $\alpha \in (0,1)$, and let $n \in \N$ be sufficiently large.
    For some universal constant $1 < p_0 < 2$, there exists a deterministic algorithm which, given $p\in [1,p_0)$, $\eps$, $\alpha$ and $n$, outputs in time $\poly(n/\delta) + 2^{O(1/\delta)}$ a matrix $B\in \{0,1\}^{m\times n}$, for some $m\le \alpha n$, such that 
 $B$ is $\inparen{\gamma n, \eps}$-$\ell_p$-RIP, for some $\delta, \gamma = \poly(\eps, \alpha)^{\frac{1}{p_0 - p}}$. In particular, $\ker(B)$ is $(\gamma n, \gamma^{1 - \frac{1}{p}})$-$\ell_p$-spread and $\Delta_p(\ker(B)) \leq 1/\gamma^{2 - \frac{2}{p}}$.
	\end{restatable}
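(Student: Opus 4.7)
The plan is to instantiate \cref{mthm:ExpansionToRIP} with the explicit expander family of Capalbo, Reingold, Vadhan, and Wigderson \cite{CapalboRVW02}, and simply take $B$ to be a signed (say, the $\{0,1\}$) adjacency matrix of the resulting bipartite graph. The entire argument is therefore a parameter-chasing exercise: choose the left-degree $t$ of the CRVW expander large enough that the hypothesis $\eps^2 \ge 9 \mu s_{\max}^{p-1}$ of \cref{mthm:ExpansionToRIP} is satisfied, and then read off $\gamma$ and $s_{\max}$ from the construction.

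Concretely, the CRVW construction delivers, for any target expansion error $\mu>0$, an explicit $t$-left-regular $(\gamma,\mu)$-unique expander on $n$ left vertices and at most $\alpha n$ right vertices, with $\gamma = \Omega(\mu/t)$, in time $\poly(n) + 2^{O(t)}$; crucially, it achieves $\mu$ as small as $O(1/t)^{\tau}$ for some universal constant $\tau \in (0,1)$. I would set $p_0 := 1 + \tau$. A $t$-left-regular bipartite graph with at most $\alpha n$ right vertices has maximum right-degree $s_{\max} \le t/\alpha$, so the RIP hypothesis reduces to
$$\eps^2 \;\ge\; 9 \cdot O(1/t)^{\tau} \cdot (t/\alpha)^{p-1} \;=\; O(1) \cdot \frac{t^{\,p-1-\tau}}{\alpha^{p-1}}.$$
Solving for $t$, it suffices to take $t \ge \bigl(C/(\eps^2 \alpha^{p-1})\bigr)^{1/(p_0-p)}$ for a suitable universal constant $C$, which matches the required form $t = \poly(1/\eps, 1/\alpha)^{1/(p_0-p)}$. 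Setting $\delta := 1/t$, the CRVW running time $\poly(n) + 2^{O(t)}$ is exactly the claimed $\poly(n/\delta) + 2^{O(1/\delta)}$.

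With these choices, \cref{mthm:ExpansionToRIP} yields that $B$ is $(\gamma n, \eps)$-$\ell_p$-RIP for $\gamma = \Omega(\mu/t) = \poly(\eps,\alpha)^{1/(p_0-p)}$, as required. The $\ell_p$-spread and distortion consequences follow immediately by the generic reductions \cref{prop:riptospread} and \cref{prop:companddist}: RIP implies spread with a shift of $1-\eps \mapsto \Omega(\gamma^{1-1/p})$ (the standard RIP-to-spread conversion), and $\ell_p$-spread with these parameters gives $\Delta_p(\ker(B)) \le O(1/\gamma^{2-2/p})$.

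The main conceptual content has already been absorbed into \cref{mthm:ExpansionToRIP}; the only real obstacle here is the bookkeeping that extracts the threshold $p_0 = 1 + \tau$ from the CRVW error bound $\mu = O(1/t)^{\tau}$. In particular, the reason we cannot push $p_0$ up to $2$ (as in the probabilistic \cref{mthm:ellpspreadpos}) is precisely that CRVW gives $\mu$ polynomial in $1/t$ with exponent $\tau < 1$ rather than the lossless $\mu = O(1/t)$ achieved by random graphs; any explicit expander construction with improved $\mu$ would automatically push $p_0$ closer to $2$ via the same argument.
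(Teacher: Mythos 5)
Your overall route is the same as the paper's (instantiate \cref{mthm:ExpansionToRIP} with the CRVW expanders, set $p_0 = 1+\tau$ from the error bound $\mu = O(1/t)^{\tau}$, then apply \cref{prop:riptospread} and \cref{prop:companddist}), but there is one genuine gap: you assert that ``a $t$-left-regular bipartite graph with at most $\alpha n$ right vertices has maximum right-degree $s_{\max} \le t/\alpha$.'' That is false in general. Left-regularity together with $|V_R| \le \alpha n$ only controls the \emph{average} right degree, which equals $tn/|V_R| \ge t/\alpha$; the maximum right degree of the CRVW graph is not bounded by the construction at all, and a single right vertex of degree, say, $n^{\Omega(1)}$ would blow up the term $9\mu s_{\max}^{p-1}$ and void the hypothesis of \cref{mthm:ExpansionToRIP}. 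Since \cref{mthm:ExpansionToRIP} genuinely needs a bound on $s_{\max}$ (its conclusion degrades polynomially in $s_{\max}$), your parameter computation $\eps^2 \ge 9\,O(1/t)^{\tau}(t/\alpha)^{p-1}$ is not justified as written.

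The paper closes exactly this hole with a preprocessing step (\cref{lem:BoundDegrees}): repeatedly split any right vertex of degree exceeding $t/\beta$ into two vertices, which enforces $s_{\max} \le t/\beta$ while preserving $t$-left-regularity and unique expansion, at the cost of at most doubling the number of right vertices added (so one runs CRVW with $\beta = \alpha/3$ to keep $m \le \alpha n$). With that lemma inserted, your calculation goes through essentially verbatim ($s_{\max} \le 3t/\alpha$ instead of $t/\alpha$, which only changes constants), and the rest of your argument---the choice of $\mu$ and $t$, the identification $p_0 = 1+\tau$ with $\tau = 1/c$ for the CRVW exponent $c$, the running time bookkeeping via $\delta$, and the reductions to spread and distortion---matches the paper's proof.
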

 	Note that as $\eps$, $\alpha$ and $p$ are constants, the matrix $B$ in \cref{mthm:ExplicitConstruction} is $(\Omega(n), O(1))$-$\ell_p$-RIP, $\ker(B)$ is $(\Omega(n), \Omega(1))$-$\ell_p$-spread, and $\Delta_p(\ker(B)) \leq O(1)$.
	
	As noted earlier, \cite{Karnin11} gives explicit constructions of $(\Omega(n), \Omega(1))$-$\ell_p$-spread subspaces, for \emph{all} $1 \leq p < 2$. This is incomparable to \cref{mthm:ExplicitConstruction}: on one hand, \cite{Karnin11} obtains the full range of $1 \leq p < 2$, but on the other hand, his matrices only are $\ell_p$-spread and do not satisfy the (strictly stronger) $\ell_p$-RIP. 
	Our construction is moreover the ``simplest'' black-box reduction to expansion: we show that the mere adjacency matrix of a bipartite expander is $\ell_p$-RIP. While the constructions in \cite{Karnin11} are themselves not too complicated, we think that this is nonetheless an interesting conceptual contribution of our work. 

    \medskip
    Finally, we also prove the following partial converse to \cref{mthm:ellpspreadpos}, which shows that when $s^{2-p} \lessapprox \frac{1}{\alpha}$ (i.e., $s^{2-p}$ is a constant factor below the threshold in \cref{mthm:ellpspreadpos}), then $A$ is \emph{not} $\ell_p$-RIP.
    
	\begin{restatable}[Partial converse to \cref{mthm:ellpspreadpos}]{mtheorem}{ellpspreadneg}
	\label{mprop:ellpspreadneg} 
	Let $p\in [1,2)$, $\eps > 0$. If $s-1 \le\inparen{\frac{1}{(1+\eps)\alpha}}^{\frac 1{2-p}}$, then with high probability over $A$, there exists an $n^c$-sparse vector $x \in \R^n\setminus\{0^n\}$ such that $$\frac{\norm{Ax}_p}{\norm{x}_p} \leq t^{\frac{1}{p}} \cdot m^{-\Omega\inparen{\frac{\eps}{\log s}}}\enspace.$$
	\end{restatable}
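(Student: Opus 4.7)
The plan is to exhibit a sparse witness $x$ explicitly as a ``Perron-like'' vector on a locally tree-like subgraph of $A$'s bipartite graph. On an infinite $(s,t)$-biregular tree rooted at a column $c_0$, the identity $1+(s-1)\beta=0$ with $\beta:=-1/(s-1)$ means that any column-supported vector of the form $x_c\propto\beta^{d(c_0,c)/2}$ is annihilated by $A$ at every interior row; only the ``leaves'' of a finite truncation contribute to $Ax$. The scalar that controls the geometric decay of this boundary contribution in the $\ell_p$-norm is $(t-1)(s-1)^{1-p}$, and the hypothesis $(s-1)^{2-p}\le 1/((1+\eps)\alpha)$ combined with $t\le s$ (since $\alpha<1$) forces $(t-1)(s-1)^{1-p}\le 1/(1+\eps)$.

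First I would locate a tree neighborhood. Because the hypothesis bounds $s$ by a constant depending only on $\alpha,\eps,p$, a standard first-moment calculation in the uniform $(s,t)$-biregular bipartite graph ensemble shows that the expected number of $2\ell$-cycles is $O(((s-1)(t-1))^\ell)$, independent of $n$, so at most $O((st)^{2(d+1)})$ columns have a non-tree depth-$(2d+2)$ BFS neighborhood. Choosing $d:=\lfloor C\log n/\log s\rfloor$ for an absolute constant $C<1/4$ makes this $o(n)$, and with high probability some column $c_0$ has a depth-$(2d+2)$ BFS neighborhood $T$ that is a tree with at most $(st)^d\le n^{2C}$ columns. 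I would then define $x_c:=\sigma(c)\beta^j$ for each column $c$ at level $2j$ of $T$ (with $0\le j\le d$), where $\sigma(c_0):=1$ and $\sigma(c):=\sigma(c^\star)\,A_{r,c^\star}\,A_{r,c}$ with $r$ the parent row and $c^\star$ the grandparent column of $c$ in $T$, and set $x_c:=0$ off $T$. This recursion equalizes $A_{r,c}\sigma(c)$ across all children $c$ of each interior row $r$, so that
\[
(Ax)_r \;=\; A_{r,c^\star}\sigma(c^\star)\,\beta^{j-1}\bigl(1+(s-1)\beta\bigr)\;=\;0
\]
for every interior tree row at level $2j-1$.

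The only nonzero entries of $Ax$ then come from the $t(s-1)^d(t-1)^d$ out-of-tree rows adjacent to the level-$2d$ columns of $T$, each contributing value $|\beta|^d=(s-1)^{-d}$ (all distinct, since any collision would force a cycle of length $\le 2(d+1)$), so $\|Ax\|_p^p=t\bigl((t-1)(s-1)^{1-p}\bigr)^d\le t(1+\eps)^{-d}$ by the hypothesis. Combined with $\|x\|_p\ge|x_{c_0}|=1$ and $d=\Omega(\log n/\log s)$, this yields
\[
\frac{\|Ax\|_p}{\|x\|_p}\;\le\;t^{1/p}(1+\eps)^{-d/p}\;=\;t^{1/p}\cdot m^{-\Omega(\eps/\log s)},
\]
while $|\supp(x)|\le(st)^d\le n^{2C}=n^{c'}$ for an absolute $c'<1$. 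The main obstacle is the first step: producing a tree neighborhood of depth $\Omega(\log n)$ with high probability over $\cM_{m,n,s,t}$ requires that $s=O_{\alpha,\eps,p}(1)$, which the theorem's hypothesis indeed guarantees, after which choosing $C$ small enough (e.g., $C=1/8$) makes the cycle-counting estimate $(st)^{2(d+1)}=o(n)$ go through.
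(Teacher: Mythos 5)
Your proposal is correct and takes essentially the same route as the paper: the witness is the identical signed tree vector with values $\bigl(-1/(s-1)\bigr)^{j}$ on a cycle-free radius-$\Theta(\log n/\log s)$ neighborhood of a left vertex (the paper's \cref{lem:treevector}), and the closing computation $t\bigl((t-1)(s-1)^{1-p}\bigr)^{d}\le t(1+\eps)^{-d}$ via $\tfrac{t-1}{s-1}\le\alpha$ is exactly the paper's. The only (minor) divergence is how the tree-like neighborhood is obtained: you use a standard first-moment count of short cycles, which suffices here because the hypothesis forces $s=O_{\alpha,\eps,p}(1)$, whereas the paper invokes \cref{prop:goodGraph}, proved via a negative-association argument that also handles $s$ as large as $n^{c}$.
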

	Note that $\norm{Ae_1}_p/\norm{e_1}_p = t^{\frac{1}{p}}$ always holds, so \cref{mthm:ellpspreadpos} demonstrates that, given small enough $s$, the ratio $\frac{\norm{Ax}_p}{\norm{x}_p}$ has a large range over different choices of $n^c$-sparse $x$.

	\section{Proof overview}
	\label{sec:techs}
	We outline the proofs of our results. For the purposes of this exposition, we will adopt the same convention as in \cref{sec:introResults} and use $A$ and $B$ to denote a uniformly sampled matrix from $\cM_{m,n,s,t}$ and arbitrary matrix from $\{0, 1, -1\}^{m \times n}$, respectively. Recall that $\cM_{m,n,s,t}$ denotes the set of $(s,t)$-biregular matrices with entries in $\{0,1,-1\}$, and that $\frac mn = \frac ts = \alpha$ for some constant $\alpha$, and $n\to \infty$. For simplicity of this exposition, in this section we restrict ourselves to the regime $s = O(1)$ unless stated otherwise. 
	
	We naturally associate with $B$ the bipartite graph $G = G_B = (V_L,V_R,E)$ with $n = |V_L|$ left vertices, $m = |V_R|$ right vertices, and an edge between $u\in V_L$ and $r\in V_R$ if $B_{r,u}\ne 0$. We view the rows and columns of $B$ as indexed by $V_R$ and $V_L$, respectively, and identify $\R^n$ with $\R^{V_L}$, and $\R^m$ with $\R^{V_R}$. In addition, we define the function $\sign = \sign_B :E\to \{1,-1\}$, which maps an edge $\{u,r\}$ as above to $B_{r,u}$. We note that the combination of $G_B$ and $\sign_B$ completely describes $B$. 
	
	\subsection{\cref{mthm:ell2spreadneg}: $\ker(A)$ in not $\ell_2$-spread}\label{sec:negtechinques}
	For simplicity, we only sketch here why $\ker(A)$ is likely to contain an $(o(n),o(1))$-compressible vector, and leave the more refined parameter setting for the actual proof of \cref{mthm:ell2spreadneg} in \cref{sec:negative}.
	
	The proof of \cref{mthm:ell2spreadneg} consists of two steps.
	In the first step we find an $o(n)$-sparse vector $x\in \R^n$ with $\twonorm x \ge 1$ and $\twonorm{Ax} \le o(1)$ (\cref{lem:treevector}). In the second step we find a vector $y\in \ker(A)$ with $\twonorm{y-x} \le o(1)\cdot \twonorm y$ (\cref{lem:singvalrounding,mthm:singvalue}). In particular, $y$ is $\inparen{o(n),o(1)}$-compressible, so $\ker(A)$ cannot be $\ell_2$-spread. 
	
    Below, we outline these two steps. It is straightforward to see, given the construction described below, that both $x$ and $y$ can be computed in polynomial time given $A$.
    
    We also note that an $\ell_p$ analog of Step 1 is the main technical component in the proof of \cref{mprop:ellpspreadneg}, and is also proven in \cref{sec:negative}.
	
	\parhead{Step 1: constructing a sparse $x$ with small $\twonorm {Ax}$}
	To obtain the vector $x$, we first prove (\cref{prop:goodGraph}) that $G$ is highly likely to contain a vertex $v^*\in V_L$ such that the ball of radius $2\ell+1$ about $v^*$, for some $\ell \le O(\log n)$, contains no cycles. That is, the radius-$\ell$ neighborhood of $v^*$ is a complete $(t,s)$-biregular tree $T$ rooted at $v^*$. Recall that a rooted tree is $(t,s)$-biregular if the even depth (resp.\ odd depth) inner vertices have degree $t$ (resp.\ s). The existence of such a vertex $v^*$ is the only random property of $A$ needed in this step of the proof. In particular, assuming that $G$ has the aforementioned property, our construction of $x$ is always possible, regardless of the $\sign$ function.
	
	To describe the construction of $x$ itself, we assume for simplicity that $\sign(e) = 1$ for all $e\in E$. Namely, all the non-zero entries of $A$ are $1$. In this setting, let $v\in V_L\cap T$ be a vertex of depth $2k$ in the tree for some $k\ge 0$ (note that a vertex in $V_L$ must have even depth), and set $x_v = (-(s-1))^{-k}$. For any $x\in V_L\setminus T$, set $x_v = 0$. Note that $\supp (x) \subseteq T$. We choose $\ell$ above to be as large as possible, i.e., $O(\log n)$, so that the size of $T$ is roughly $n^c$ for some $c < 1$. In particular, $x$ is $\approx n^c$-sparse. Also, note that $\norm{x}_2^2 \ge x_{v^*}^2 = 1$.  We informally refer to the vector $x$ produced by this construction as a \emph{tree vector}.
	
	Our construction guarantees that $\inparen{Ax}_r = 0$ for every internal node $r\in T\cap V_R$. Indeed, suppose that $r$ is of depth $2k+1$. Then, it has one neighbor of depth $2k$, and $s-1$ neighbors of depth $2k+2$. As $\inparen{Ax}_r$ is the sum of $x_v$ over neighbors $v$ of $r$, we have $\inparen{Ax}_r = (-(s-1))^{-k} + (s-1)\cdot(-(s-1))^{-(k+1)} = 0$. 
	
	To compute $\twonorm{Ax}$, it thus suffices to compute $\inabs{(Ax)_r}$ when $r$ is one of the $t(t-1)^{\ell}(s-1)^\ell$ leaves of $T$. It is not hard to see that in this case $\inabs{\inparen{Ax}_r} = (s-1)^{\ell}$, and so $$\twonorm{Ax}^2 = t(t-1)^{\ell}(s-1)^\ell\cdot (s-1)^{-2\ell} = e^{-\Omega(\ell)} = o(1)\enspace.$$
	
	We note that our tree vector construction is similar in spirit to a  construction by Noga Alon \cite[Theorem 8]{GLW08}, which demonstrates the limitations of expander-based analysis of the spread property. In \cite{GLW08}, however, they \emph{choose} their graph $G$ so that (their analog of) the tree vector $x$ will lie in (their analog of) $\ker(A)$ \emph{by design}.
 Our graph is random and not up to our choice, so we cannot simply orchestrate the graph so that our tree vector $x$ to belong to $\ker(A)$.
This necessitates that we perform the nontrivial step of rounding $x$ to some $y \in \ker(A)$, which we discuss next.
	
	\parhead{Step 2: finding $y\in \ker(A)$ close to $x$}
	Our main goal in this step is to establish the following lemma:
	\begin{lemma}[Informal]\label{lem:informalRounding}
        	With high probability over $A$, it holds that every $x\in \R^n$ with $\twonorm{Ax}\le o(1)$ is $o(1)$-close to some vector $y\in \ker(A)$.
	\end{lemma}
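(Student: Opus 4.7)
My plan is to deduce this rounding statement as an essentially immediate consequence of the sharp singular value bound of \cref{mthm:singvalue}. Since $\alpha = t/s < 1$ is bounded away from $1$, we have $\sqrt{s-1} > \sqrt{t-1}$, and \cref{mthm:singvalue} says that with high probability every singular value of $A$ lies in
\[
\bigl[\sqrt{s-1} - (1+o(1))\sqrt{t-1},\ \sqrt{s-1} + (1+o(1))\sqrt{t-1}\bigr].
\]
In particular, all singular values are bounded away from zero, so $A$ has full row rank $m$ and
\[
\sigma_{\min}(A) \;\ge\; \bigl(1 - \sqrt\alpha - o(1)\bigr)\sqrt{s-1}.
\]

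Given such an $A$, the rounding itself is a one-line linear algebra argument. For an arbitrary $x \in \R^n$, let $y$ denote the orthogonal projection of $x$ onto $\ker(A)$ and set $z := x - y \in \ker(A)^\perp = \operatorname{im}(A^\top)$. Then $Ax = Az$, and because $A$ restricted to $\operatorname{im}(A^\top)$ is injective with minimum stretch $\sigma_{\min}(A)$,
\[
\twonorm{x - y} \;=\; \twonorm{z} \;\le\; \frac{\twonorm{Az}}{\sigma_{\min}(A)} \;=\; \frac{\twonorm{Ax}}{\sigma_{\min}(A)} \;\le\; \frac{\twonorm{Ax}}{(1-\sqrt\alpha - o(1))\sqrt{s-1}}.
\]
So $\twonorm{Ax} = o(1)$ immediately yields $\twonorm{x - y} = o(1)$, which is exactly the rounding claim. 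Combined with the tree vector of Step~1 (which has $\twonorm{x} \ge 1$), this also yields the relative bound $\twonorm{x-y} \le o(1)\cdot\twonorm{y}$ that is needed to drive the proof of \cref{mthm:ell2spreadneg}.

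The genuine obstacle therefore lies not in this rounding step but in establishing \cref{mthm:singvalue} in the first place. The Ramanujan-type upper bound $\sqrt{s-1}+(1+o(1))\sqrt{t-1}$ must be \emph{sharp} (and correspondingly the lower bound sharp) so that $\sigma_{\min}(A)$ is guaranteed to be positive for every constant $\alpha\in(0,1)$; a naive black-box bound such as $\sigma_{\max}(A)\le \sqrt{s}+O(\sqrt{t})$ obtained from \cite{BandeiraV16} has an unspecified constant which is only dominated by $\sqrt{s-1}$ when $\alpha$ is below some absolute constant, and this would force unwanted restrictions on $\alpha$. The proof of \cref{mthm:singvalue} will follow the trace-moment blueprint of \cite{BritoDH18,Bordenave19,BordenaveC19,MohantyOP20a,MohantyOP20b,ODonnellW20}, but must be extended from constant $s$ to $s$ growing with $n$ (up to $n^c$); that extension is the main technical novelty, and once in hand, the present rounding lemma comes essentially for free.
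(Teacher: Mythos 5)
Your proposal is correct and follows essentially the same route as the paper: project $x$ orthogonally onto $\ker(A)$, bound $\twonorm{x-y}\le \twonorm{Ax}/\sigma_{\min}(A)$ via the SVD, and invoke the sharp singular value bound of \cref{mthm:singvalue} to get $\sigma_{\min}(A)\ge(1-\sqrt\alpha-o(1))\sqrt{s}$ for every constant $\alpha\in(0,1)$. Your assessment that the real work lies in \cref{mthm:singvalue} (and that a black-box bound with an unspecified constant would restrict $\alpha$) matches the paper's own discussion exactly.
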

	Indeed, let $x$ be the tree vector constructed in Step 1. Then $x$ is $o(n)$-sparse with $\twonorm{x} \ge 1$ and $\twonorm{Ax} \le o(1)$. By \cref{lem:informalRounding}, there exists a vector $y\in \ker(A)$, which is $o(1)$-close to $x$. This vector $y$ is $\inparen{o(n),o(1)}$-compressible, which yields \cref{mthm:ell2spreadneg} in the present parameter setting.
	
    One may naively try to prove \cref{lem:informalRounding} by locally perturbing $x$ to try to make $Ax = 0^m$, e.g.\ by designing a greedy algorithm for this task. This approach, however, seems difficult to execute, especially given that \cref{lem:informalRounding} is in fact not true in general. For example, it could be the case that $x$ is a (unit norm) right singular vector of $A$ with singular value $o(1)$. Then, $\norm{Ax}_2 = o(1)$, but $\norm{x - y}_2 \geq 1$ for all $y \in \ker(A)$, and in fact the closest vector in $\ker(A)$ to $x$ is $0^n$. 
    
    Instead, we set $y$ to be the orthogonal projection of $x$ onto $\ker(A)$. In hindsight, this is the obvious choice for $y$, as then $y \in \ker(A)$ is the vector that minimizes $\norm{x - y}_2$. How large can $\norm{x - y}_2$ be? Intuitively, we would like to say that $\twonorm{Ax}$ being small implies that $\twonorm{x-y}$ is small as well. As the earlier example shows, this is not true for a general matrix $A$, as $A$ could have small singular values. However, the implication \emph{does} hold provided that all singular values of $A$ are $\Omega(1)$.\footnote{Technically, what matters is the minimum \emph{nonzero} singular value. However, with high probability the matrix $A$ will be full rank (i.e., rank $m$), so that $\sigma_{\min}(A) > 0$. Indeed, this is trivially implied by \cref{mthm:singvalue}.} Indeed, the singular value decomposition of $A$  implies that
	\begin{equation*}
	\norm{Ax}_2 = \norm{A(x - y)}_2 \geq \sigma_{\min}(A) \norm{x - y}_2 \enspace,
	\end{equation*}
	where $\sigma_{\min}(A)$ is the minimum singular value of $A$ and the inequality holds as $x-y$ is orthogonal to $\ker(A)$. Hence, $\twonorm{x-y} \le \frac{\twonorm{Ax}}{\sigma_{\min}(A)}$.
	
	The main technical component of Step 2 is therefore the lower bound on $\sigma_{\min}(A)$, given by \cref{mthm:singvalue}. As we have argued above, the crude lower bound of $\sigma_{\min}(A) \geq \Omega(1)$ suffices to yield \cref{lem:informalRounding}. Indeed, if $\sigma_{\min}(A) \geq \Omega(1)$, then $\norm{x - y}_2 \leq o(1)$, and so $\ker(A)$ contains an $\inparen{o(n),o(1)}$-compressible vector. The precise high-probability lower bound on $\sigma_{\min}(A)$ established in \cref{mthm:singvalue} implies a finer quantitative version of \cref{lem:informalRounding}, which yields the full \cref{mthm:ell2spreadneg}. The latter gives a much sharper bound on the $o(1)$ term, and also applies to sparsity all the way up to $O\inparen{n^c}$ for some $c > 0$.
	
    We remark that one can easily show that $\sigma_{\min}(A) \geq \sqrt{s} - O(\sqrt{t})$ via ``off-the-shelf'' methods, such as \cite{BandeiraV16}. However, this would only allow us to prove \cref{mthm:ell2spreadneg} provided that $\alpha \leq c$ for some absolute constant $c < 1$ (related to the $O(1)$ factor in front of $\sqrt{t}$ above), and thus would not allow us to take $\alpha$ to be \emph{any} constant in $(0,1)$, e.g., $\alpha = 0.999$. Our sharper bound also highlights the difficulty in lower bounding the minimum singular value when $\alpha = m/n$ is close to $1$.
	
	We postpone our discussion of the proof of \cref{mthm:singvalue} to \cref{sec:techSingularValues}, and turn next to our positive result for $\ell_p$-spread for $p < 2$.
	
	\subsection{\cref{mthm:ExpansionToRIP}: $\ell_p$-RIP for $p < 2$ from vertex expansion}
	We sketch the proof of \cref{mthm:ExpansionToRIP}.
    For simplicity, we will assume that $B \in \cM_{m,n,s,t}$, i.e., that the bipartite graph $G_B$ is $t$-left-regular and $s$-right-regular (and hence $s_{\max} = s$) and also that $G_B$ is a $(\gamma, \mu)$-unique expander. For this exposition, we only discuss the claimed lower bound on $\norm{Bx}_p$ stated in the theorem, namely,
    \begin{equation}
    \label{eq:RIPlb}
    \norm{Bx}_p \geq t^{\frac{1}{p}}(1 - \eps) \norm{x}_p
    \end{equation}
    for all $\gamma n$-sparse $x \in \R^n$, as the upper bound is obtained via a variation on the same method. 
	
	\parhead{\cref{mthm:ExpansionToRIP} for tree vectors}As a warm-up for the proof of \cref{mthm:ExpansionToRIP}, we show why the $o(n)$-sparse tree vector $x$ constructed in \cref{sec:negtechinques} does not yield a counterexample to \cref{eq:RIPlb}. Let $x^{(k)}$ ($0\le k\le \ell$) denote the restriction of $x$ to the vertices in the $2k$-th level of the tree $T$. Then, $$\norm{x^{(k)}}_p^p = t(t-1)^{k-1}(s-1)^{(1-p)k}\enspace .$$ For $p = 2$, this expression decreases exponentially in $k$, and thus the $\ell_2$-mass of $x$ is concentrated at the top of the tree. For $p < 2$, however, $\norm{x^{(k)}}_p^p$ actually \emph{grows} exponentially in $k$ provided that $s$ is large enough (concretely, one needs $s^{2-p}\gtrapprox \frac 1\alpha$).\footnote{And, indeed, if instead $s^{2-p} \lessapprox \frac{1}{\alpha}$, then we have $\norm{Bx}_p = o(1)$, and this gives us \cref{mprop:ellpspreadneg}.} In this case, the $\ell_p$-mass is concentrated towards the bottom of the tree. Moreover, one can take $s$ large enough so that all but an $\frac \eps2$-fraction of the mass lies in the bottom layer. Then,
    $$\frac{\norm{Bx}_p^p}{\norm x_p^p} \ge\inparen{1-\frac \eps2}\cdot \frac{\norm{Bx}_p^p}{\norm {x^{(\ell)}}_p^p} = \inparen{1-\frac\eps2}\cdot\frac{t(t-1)^\ell(s-1)^{(1-p)\ell}}{t(t-1)^{\ell-1}(s-1)^{(1-p)\ell}} = \inparen{1-\frac \eps 2}\cdot (t-1) \ge (1-\eps)t\enspace.$$
    Hence, $x$ is not a counterexample to \cref{eq:RIPlb}.	
	
	\parhead{\cref{mthm:ExpansionToRIP} for general vectors with tree-shaped support}
    Fix a set $S\subseteq V_L$ such that the subgraph induced by $T:= S\cup N(S)$ is a $(t,s)$-biregular tree. We generalize the above discussion of tree vectors by explaining why \cref{eq:RIPlb} holds for any vector $x\in \R^n$ supported on $S$. 
    
	Given $r \in N(S)$, let $v_r$ denote the parent of $r$ in the tree $T$. In an overly optimistic scenario, if we could show that $\abs{(Bx)_r} \approx \abs{v_r}$ for all $r\in N(S)$, then we would be done, as each vertex $v\in S$ has $t-1$ children.\footnote{Except for the root, which has $t$ children.} Each of these children then contributes $\approx \abs{x_v}^p$ mass to $\norm{Bx}_p^p$, so that $\norm{Bx}_p^p \approx (t-1)\cdot\sum_{v\in S}\inabs{x_v}^p = (t-1)\cdot\norm{x}_p^p$, implying \cref{eq:RIPlb}.
    As the tree vector case shows, one cannot, in fact, hope to guarantee $\abs{(Bx)_r} \approx \abs{v_r}$ for all $r \in N(S)$. Indeed, for a tree vector $x$ we have $(Bx)_r=0$ for any non-leaf $r\in N(S)$. Thus, a more delicate analysis is required. 
	
	For intuition, let us consider the viewpoint of an adversary seeking to construct an $x$ supported on $S$ such that $\norm{Bx}_p^p$ is small. We shall think of the adversary as assigning values to $\{x_v\}_{v \in S}$ starting from the root, and then moving down the tree.
	
	For each non-leaf $r \in N(S)$, let $W_r$ denote the set of $s-1$ children of $r$. Recall that $v_r$ is the parent of $r$. Note that $\inabs{(Bx)_r} \ge \inabs{x_{v_r}} - \sum_{u\in W_r}\inabs{x_u}$
	 due to the triangle inequality. Hence,	when assigning values to the vertices in $W_r$, the adversary morally has two choices: \begin{inparaenum}[(1)]
	 \item either make $\sum_{u \in W_r} \abs{x_u} \ll \abs{x_{v_r}}$, in which case $\abs{(Bx)_r} \approx \abs{x_{v_r}}$, or
	\item make $\abs{(Bx)_r} \approx 0$, in which case $\sum_{u \in W_r} \abs{x_u} \approx \abs{x_{v_r}}$.\footnote{Note that the adversary has the third choice of setting $\sum_{u \in W_r} \abs{x_u} \gg \abs{x_{v_r}}$, but this is worse for the adversary.}
	\end{inparaenum}
	Let us fix some $\beta < 1$, and suppose that the adversary chooses values for $\{x_u\}_{u \in W_r}$ such that $\sum_{u \in W_r} \abs{x_u} \leq \beta \abs{x_v}$, i.e., the adversary chooses Case (1).
	We then have that
	\begin{equation*}
	\abs{(Bx)_r} \geq \Big(\abs{x_{v_r}} - \sum_{u \in W_r} \abs{x_u}\Big) \geq (1 - \beta) \abs{x_{v_r}} \enspace,
	\end{equation*}
	so $\abs{(Bx)_r} \approx \abs{x_{v_r}}$, which is what we wanted. Next, suppose that the adversary makes $\sum_{u \in W_r} \abs{x_u} \geq \beta \abs{x_v}$, i.e., the adversary chooses Case (2). Then, $\abs{(Bx)_r}$ can be small, but applying H\"{o}lder's inequality, we have
	\begin{equation*}
	\sum_{u \in W_r} \abs{x_u}^p \geq \frac{\beta^p}{(s-1)^{p-1}} \cdot \abs{x_v}^p \enspace.
	\end{equation*}
	Now, suppose that all children $r$ of $v_r$ have this property. Then, the total $\ell_p^p$ mass of all of the grandchildren of $v_r$ must be at least $\frac{(t-1)\beta^p}{(s-1)^{p-1}} \cdot \abs{x_v}^p \gg \abs{x_v}^p$. We thus see that, intuitively, the adversary has merely pushed its task down to the grandchildren of $v_r$, and in doing so has not made any progress towards its overall goal.	
	Indeed, this is precisely what happens in the case of a tree vector!
	
    The above informal argument shows that the adversary does not ``win'' in either case.
	We can concretely capture this intuition via the following potential function:
	\begin{equation*}
	a_r(x) \defeq \abs{(Bx)_r}^p + \Theta(1)\cdot  \frac{(s-1)^{p-1}}{\eps^{p-1}}\sum_{u \in W_r} \abs{x_u}^p \enspace.
	\end{equation*}
	In the actual proof, this choice of $a_r(x)$ allows us to cleanly express the intuition that either $\abs{(Bx)_r}$ is large or $\sum_{u \in W_r} \abs{x_u}$ is large, and further extends beyond the ``toy case'' of tree-supported vectors.
	
    \parhead{Using expansion when $S\cup N(S)$ is not a tree.} We now turn to the general case, where the subgraph induced by $S\cup N(S)$ (where $S = \supp(x)$) is not necessarily a tree. We observe that above, we are only using the tree structure to show that the rooted tree $S \cup N(S)$ trivially has a $1$-to-$(t-1)$ ``matching'' with the following properties:
    \begin{inparaenum}[(1)]\item every vertex $r \in N(S)$ is matched with exactly one vertex $v \in S$, and \item every vertex $v \in S$ is matched with at least $t-1$ vertices in $N(S)$\end{inparaenum}. Indeed, when $S\cup N(S)$ is a tree, such a matching exists by matching each vertex $r\in N(S)$ with its parent $v_r$.
    
    To generalize the above, we use the (unique) expansion of $G$ to construct a similar matching that suffices for the proof.
    Recall that $G$ is a $(\gamma, \mu)$-unique expander, meaning that every set $S \subseteq V_L$ of size $\leq \gamma n$ has at least $t(1 - \mu)\abs{S}$ unique neighbors, i.e., neighbors of a unique element of $S$.
    We construct the matching by ``peeling off'' vertices one at a time from $S$, each time matching a vertex with $\geq t(1 - \mu)$ vertices in $N(S)$, namely its neighbors that are not neighbors of any of the remaining ``unpeeled'' vertices in $S$.
    
    The above step can be viewed as extracting a ``tree-like'' subgraph from $S \cup N(S)$, where each vertex $v \in S$ has at least $t(1 - \mu)$ ``children'' (the vertices it was matched with), and at most $\mu t$ ``parents'' (its neighbors that it was \emph{not} matched with). Each vertex $r \in N(S)$ still has exactly one ``parent'' and $\leq s - 1$ ``children''. Once we have the above ``tree-like'' subgraph, the argument for trees goes through with only minor modifications, so this finishes the proof.

    We note that the existence of this ``tree-like'' subgraph for any set $S$ with $\abs{S} \leq \gamma n$ immediately implies that $G$ is a $(\gamma, \mu)$-vertex expander, and hence a $(\gamma, 2 \mu)$-unique expander. Thus, the existence of such a subgraph for every $S$ of size at most $\gamma n$ is equivalent to unique expansion, up to a factor of $2$ loss in the parameter $\mu$.

	\parhead{Comparison with \cite{BerindeGI+08}.} We briefly summarize the proof in \cite{BerindeGI+08} for the case of $p = 1$, and explain why their proof does not extend to the case of $p > 1$. 
	
	The proof in \cite{BerindeGI+08} proceeds as follows. For a vector $x$ supported on $S$, let $E_0$ denote the set of edges between $S$ and $N(S)$. First, they match each $r \in N(S)$ to its neighbor $v \in S$ with $\abs{x_v}$ maximized. Let $E_1$ be the set of edges in this matching, and let $E_2 = E_0 \setminus E_1$. For any $r \in N(S)$, it then follows that $\abs{(Bx)_r} \geq \abs{x_v} - \sum_{u \in W_r} \abs{x_u}$, where $(v,r) \in E_1$ and $W_r = \{u : (u,r) \in E_2\}$. Hence, $\norm{Bx}_1 \geq \sum_{(v,r) \in E_1} \abs{x_v} - \sum_{(u,r) \in E_2} \abs{x_u}$. We observe that this step of the proof is specific to the $\ell_1$ norm, and does not generalize to larger $\ell_p$ norms.
	
	The main step in the proof is to argue that $\sum_{ (v,r) \in E_2} \abs{x_v} \leq  t \eps\norm{x}_1$ using expansion. With this in hand, it immediately follows that $\sum_{(v,r) \in E_1} \abs{x_v} \geq t(1 -  \eps)\norm{x}_1$, because $\sum_{(v,r) \in E_0} \abs{x_v} = t \norm{x}_1$ by regularity.
	It then follows that $\norm{Bx}_1 \geq t(1 - 2 \eps) \norm{x}_1$. Note that the upper bound $\norm{Bx}_1 \leq t \norm{x}_1$ is trivial, so this shows that $B$ is $\ell_1$-RIP.
	
    One may attempt to generalize this proof to $p > 1$ by replacing $\abs{x_v}$ with $\abs{x_v}^p$. For example, using expansion it follows that $\sum_{(v,r) \in E_2} \abs{x_v}^p \leq  t \eps\norm{x}_p^p$, and as $\sum_{(v,r) \in E_0} \abs{x_v}^p = t \norm{x}_p^p$, we then have $\sum_{(v,r) \in E_1} \abs{x_v}^p \geq t(1 -  \eps)\norm{x}_p^p$. But this is not enough to complete the proof, as it does \emph{not} follow that $\norm{Bx}_p^p \geq \sum_{(v,r) \in E_1} \abs{x_v}^p - \sum_{(v,r) \in E_2} \abs{x_v}^p$. Indeed, this is a fundamental barrier, and is the reason why our analysis for $p \geq 1$ proceeds by analyzing the ``local'' potential function $a_r(x)$, rather than the two ``global'' sums over $E_1$ and $E_2$ above.

	\subsection{\cref{mthm:singvalue}: bounds on the singular values of $A$}\label{sec:techSingularValues}
	We give a brief overview of the proof of \cref{mthm:singvalue}. First, we observe that in order to bound the singular values of $A$, it suffices to bound the spectrum of $M := A A^{\top} - s \cdot \Id$, as each singular value of $A$ is the square root of an eigenvalue of $A A^{\top}$. Note that $M$ is a square matrix with an all-$0$ diagonal, by regularity of $A$.
	
	\parhead{Step 1: reducing to the nomadic walk matrix via a modified Ihara--Bass formula.}
	The first step in the proof is to relate bounds on the spectrum of $M$ to the spectral radius (i.e., maximum eigenvalue in absolute value) $\rho(B)$ of $B$, the \emph{nomadic walk matrix} introduced in \cite{MohantyOP20b}.\footnote{We note that one could most likely also prove \cref{mthm:singvalue} using the standard nonbacktracking walk matrix and Ihara--Bass formula, e.g., with similar methods as in \cite{BritoDH18}.} The nomadic walk matrix $B$ is indexed by pairs of edges\footnote{In \cite{MohantyOP20b}, the nomadic walk matrix is indexed by \emph{directed} edges. In our context, this is equivalent to a length~$2$ oriented walk $e_1 \to e_2$, which is equivalent to a pair $(e_1, e_2)$ of \emph{undirected} edges, as the ordering in the pair gives the unique orientation $e_1 \to e_2$ in the walk.} $(e_1, e_2)$ in $G$ that form a length $2$ non-backtracking walk in $G$, and its $((e_1,e_2),(e'_1,e'_2))$-th entry is $\sign(e'_1)\sign(e'_2)$ if $e_1 \to e_2 \to e'_1 \to e'_2$ forms a non-backtracking walk of length $4$ in $G$, and is $0$ otherwise. Note that $B$ is \emph{not} symmetric.
	
	\begin{theorem}[Modified Ihara--Bass formula, Theorem 3.1 of \cite{MohantyOP20b}, informal]
	If $\rho(B) \leq (1 + o(1)) \sqrt{(s-1)(t-1)}$, then the spectrum $\Spec(M)$ of $M$ satisfies:
	\begin{equation*}
	\Spec(M) \subseteq [t - 2 - 2(1+o(1)) \sqrt{(s-1)(t-1)}, t - 2 + 2(1 + o(1))\sqrt{(s-1)(t-1)}]\enspace.
	\end{equation*}
	\end{theorem}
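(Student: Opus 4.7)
The plan is to follow the standard Ihara--Bass strategy, adapted to \emph{signed}, biregular bipartite graphs and to the length-$2$ nomadic walk operator $B$. The key fact I would try to establish is a determinantal identity of the form
\begin{equation*}
\det\bigl(I - z B\bigr) \;=\; f(z) \cdot \det\Bigl( I - z\, \Phi(M)\Bigr),
\end{equation*}
where $\Phi$ is an explicit (quadratic) spectral transform and $f(z)$ is a ``trivial'' factor coming from degrees of freedom unrelated to $\Spec(M)$ (in the classical Ihara--Bass formula, $f(z) = (1-z^2)^{|E|-|V|}$). Once this identity is in hand, every eigenvalue of $M$ will produce a pair of reciprocal roots of a quadratic whose roots are eigenvalues of $B$; bounding $\rho(B)$ then bounds $\Spec(M)$.

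I would first unpack the combinatorics of $M = AA^\top - s\,\Id$: by biregularity the diagonal vanishes, and $M_{v v'} = \sum_{r}\sign(v,r)\sign(v',r)$ is a signed count of length-$2$ paths through $V_R$ between $v,v' \in V_L$. Powers $M^k$ therefore count signed \emph{closed walks} on $V_L$ of length $2k$ through $V_R$, weighted by the product of edge signs along the walk. Next, I would decompose these walks by ``non-backtracking rank'': at each visit to an intermediate vertex on $V_L$, either the walk genuinely moves on to a \emph{new} neighbor in $V_R$ (a nomadic step) or it backtracks through $V_R$. Summing over the backtracking positions and exploiting biregularity (each $v\in V_L$ has $t$ choices for a backtracking $V_R$-neighbor, each $r\in V_R$ has $s-1$ non-backtracking continuations in $V_L$) should let me express the generating function $\sum_k z^k \tr(M^k)$ as a resolvent of $B$, up to a correction coming from short self-loops in the walk. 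This is exactly the step performed in Theorem~3.1 of \cite{MohantyOP20b}, and I would cite and reuse that formula rather than re-deriving it.

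The output of that generating-function identity, after clearing denominators, is a polynomial identity of the form
\begin{equation*}
\det\bigl(z^2 \Id - (t-2)\,z\, \Id + (t-1)^2 \Id - z \cdot M \cdot (\text{quadratic in } z)\bigr) \;\doteq\; \det(I - z'B)
\end{equation*}
up to the explicit trivial factor, where $z'$ and $z$ are related by the change of variable coming from identifying length-$2$ non-backtracking walks with edges in the nomadic graph. In other words, each eigenvalue $\lambda$ of $M$ arises from a reciprocal pair of eigenvalues $\mu, (s-1)(t-1)/\mu$ of $B$ satisfying a quadratic $\mu^2 - (\lambda - (t-2))\mu + (s-1)(t-1) = 0$. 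Thus $\lambda = (t-2) + \mu + (s-1)(t-1)/\mu$, and the hypothesis $|\mu| \le (1+o(1))\sqrt{(s-1)(t-1)}$ forces $|\lambda - (t-2)| \le 2(1+o(1))\sqrt{(s-1)(t-1)}$, which is exactly the claimed spectral envelope.

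The main obstacles I expect are (i) bookkeeping the signs $\sign(\cdot)$ in the walk expansion so that they match the signs in the definition of $B$ (a length-$4$ non-backtracking walk contributes $\sign(e_1')\sign(e_2')$, so one must be careful to pair signs correctly at both ends of each two-step block), and (ii) isolating the ``trivial factor'' $f(z)$ corresponding to short anomalous walks (single edges, tadpoles, length-$2$ backtracks), so that the nontrivial roots line up cleanly with $\Spec(M)$. Both are handled in \cite{MohantyOP20b}, and in the write-up I would simply invoke their modified Ihara--Bass identity as a black box and then execute the quadratic-root manipulation above to convert the bound on $\rho(B)$ into the advertised bound on $\Spec(M)$.
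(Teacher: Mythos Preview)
Your proposal is correct and follows essentially the same route as the paper: invoke the modified Ihara--Bass identity of \cite{MohantyOP20b} as a black box, observe that each eigenvalue $\lambda$ of $M$ yields roots of the quadratic $\mu^2-(\lambda-(t-2))\mu+(s-1)(t-1)=0$ that are eigenvalues of $B$, and convert the bound $|\mu|\le(1+o(1))\sqrt{(s-1)(t-1)}$ into $|\lambda-(t-2)|\le 2(1+o(1))\sqrt{(s-1)(t-1)}$ (the paper does this last step by explicitly solving for the smaller $|z|$ and showing it forces $\delta\le 4\eps^2$, which is equivalent to your $\mu+c/\mu$ manipulation). One small slip: since $A\in\R^{m\times n}$ with rows indexed by $V_R$, the matrix $M=AA^\top-s\Id$ is $m\times m$ and indexed by $V_R$, not $V_L$, so $M_{r,r'}$ counts signed length-$2$ paths through $V_L$; this does not affect the argument.
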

	The above theorem thus shows that it suffices to prove that $\rho(B) \leq (1+o(1)) \sqrt{(s-1)(t-1)}$ with high probability.
	
	We remark that bounds on the spectra of matrices of the form of $M$ were studied in \cite{MohantyOP20b} for the case of $s = O(1)$. Unfortunately, this is insufficient to prove \cref{mthm:singvalue}, as we wish to allow $s$ to be any function of $n$ (provided that $s \leq n^c$ for some absolute constant $c$). However, \cite[Theorem 3.1]{MohantyOP20b} is a general statement that holds regardless of $s$, so we can make use of it in our setting.
	
	\parhead{Step 2: bounding $\rho(B)$ via the trace method by counting hikes.}
	The natural approach to bound $\rho(B)$ is by applying the trace method. As the matrix $B$ is not symmetric, we compute:
	\begin{equation*}
	T := \EE_{\sign}[\tr(B^{\ell} (B^{\top})^{\ell})] \enspace,
	\end{equation*}
	where the expectation is taken over the function $\sign$ that determines the signs of the entries of $A$. By carefully expanding this expectation, one can show that the nonzero contributions to $T$ roughly come from length $4(\ell-1)$ closed walks in $G$ where \begin{inparaenum}[(1)]
	\item each edge in the walk appears an even number of times, and
	\item the walk is non-backtracking, except possibly at the middle step in the walk
	\end{inparaenum}.
	Such walks (of length $4\ell)$ are commonly referred to as $(2\ell)$-\emph{hikes} \cite{MohantyOP20a}.
	
	To finish the proof of \cref{mthm:singvalue}, we thus turn to obtaining a careful bound on the number of such  walks.

	Counting these walks requires extra care in our setting as our graph is bipartite, and so the bound needs to be sensitive to the difference in right/left degree.
	The counting of such hikes also differs greatly depending on whether $s \leq \polylog(n)$ or $s = \omega(\polylog(n))$.
	
	\parhead{Step 3: counting the number of hikes when $s \leq \polylog(n)$.}
	
	\cite[Section 3]{MohantyOP20a} counts the number of such hikes when $s = O(1)$, provided that $G$ is bicycle-free at radius $O(\log n)$. Namely, any vertex $v$ participates in at most one cycle of length $O(\log n)$. 
	By repeating their proof, one can show that their bounds can be extended to the case when $s \leq \polylog(n)$. However, we still cannot use their bound on the number of such hikes naively, as their counting is for non-bipartite graphs and thus yields a bound of $m(1 + o(1))^{\ell}(s-1)^{2\ell} $, simply because it treats left and right vertices the same, and the maximum degree of a vertex is $s$. We refine their approach to ensure that right and left vertices contribute roughly equally, which will yield the desired bound. One may, at first glance, be tempted to assume that this is trivial because a closed walk in a bipartite graph has an equal number of left and right vertices, but this is not the case, as we shall see.
	
	We adopt the bookkeeping approach of \cite{MohantyOP20a}. We think of a hike as discovering the graph $G$ as one traverses the hike. A step in a hike is \emph{fresh} if uses an edge for the first time and ends at a previously undiscovered vertex; it is \emph{boundary} if it uses an edge for the first time but ends up at an old vertex; finally, it is \emph{stale} if it uses an old edge.
	
	Because each edge must appear an even number of times, a hike can have at most $2\ell$ fresh steps. Each fresh step ``pays'' a factor of $(s-1)$ (if we move from a right to a left vertex) or $(t-1)$ (if we move from a left to a right vertex) in our bound in the number of hikes, as this is the number of choices for the next vertex that the hike moves to.
	\cite[Theorem 2.13]{MohantyOP20a} implies that, since $G$ is bicycle-free, the number of boundary steps is $\ll \ell$; they also show that one can bound the ``number of choices'' for the stale steps by some small factor, which we ignore here.
	
	We need to augment the argument of \cite{MohantyOP20a} with the following addition: if a hike has $c$ fresh steps, then the number of fresh steps $c_R$ that start at a right vertex is $\approx \frac{c}{2}$, and similarly for $c_L$ for left vertices. Note that by definition, $c = c_R + c_L$.	
	The key observation is that a fresh step from a right (resp.\ left) vertex must be followed by either a fresh step from a left (resp.\ right) vertex, or by a boundary step. Indeed, after we take a fresh step we are at a previously unvisited vertex, so the next step must use a new edge; in particular, it cannot be \emph{stale}. This implies that the deviation of each of $c_L,c_R$ from $\frac{c}{2}$ is bounded by the number of boundary steps, which is $\ll \ell$.
	
	This implies a bound of $m (1 + o(1))^{\ell}(s-1)^{\ell}(t-1)^{\ell}$ on the number of hikes, provided that $s \leq \polylog(n)$. The $m$ comes from the number of start vertices in the hike, and the $(1+o(1))^{\ell}$ comes from the stale and boundary steps, as well as our new deviation term. This yields the desired bound for sparse $s$.
	
	\parhead{Step 4: counting the number of hikes when $s = \omega(\polylog(n))$.}
	For $s$ this large, the graph $G$ is ``dense'', and so it will not be bicycle-free at radius $O(\log n)$. This rules out the approach of \cite{MohantyOP20a}, which relies on $G$ being bicycle-free. 
	Instead, we adapt a standard counting approach (for bounding the operator norm of a random $n \times n$ Gaussian matrix) given in \cite[Section 2.3.6]{Tao12} to our bipartite setting. Our crucial observation here is to note that any hike can have at most $\ell$ distinct left vertices. As we pay a factor of $(s-1)$ every time we move to a left vertex, it then follows that the ``power'' of $(s-1)$ in our bound can only be at most $\ell$. The standard counting argument of \cite[Section 2.3.6]{Tao12} for Gaussian matrices then goes through, yielding the desired bound.
	
	\subsection{Organization}
	The rest of the paper is organized as follows. \cref{sec:prelims} introduces notation and definitions that we use in our proofs. In \cref{sec:RandomGraphProperties}, we state and prove several claims about  properties of a certain random bipartite graph naturally associated with the random matrix $A$. In \cref{sec:negative}, we prove our negative results, namely, \cref{mthm:ell2spreadneg,mprop:ellpspreadneg}. In \cref{sec:ellpspreadpos} we prove \cref{mthm:ExpansionToRIP,mthm:ellpspreadpos,mthm:ExplicitConstruction}, our positive results about $\ell_p$ for $p<2$. Our proofs in \cref{sec:negative} rely on the singular value bounds stated in \cref{mthm:singvalue}, which we prove in \cref{sec:singvalue}. Finally, in \cref{sec:ell2spreadpos} we prove \cref{mprop:ell2spreadpos}, our positive result for $\ell_2$-spread.

	\section{Preliminaries}
	\label{sec:prelims}
	
	\subsection{Basic notation}
	For an integer $k \in \N$, let $[k] \defeq \{1, \dots, k\}$.
	
	For a vector $x \in \R^n$, we let $\supp(x) := \{i \in [n] : x_i \ne 0\}$. We say $x$ is $k$-sparse if $|\supp(x)| \le k$ and $x \ne 0^n$.
	
	Given a graph $G$, a vertex $u$, and $\ell\in \N$, we let $\ball_G(u,\ell)$ denote the induced subgraph on vertices of distance at most $\ell$ from $u$. For a set of vertices $S$, we let $\ball_G(S,\ell) \defeq \bigcup_{u\in S}\ball_G(S,\ell)$.
	
	Let $\sigma(A)$ denote the set of singular values of a matrix $A$. We also denote $\sigma_{\min}(A) = \min{\sigma(A)}$ and $\sigma_{\max} = \max{\sigma(A)}$. If $A$ is square, we denote the set of its eigenvalues by $\Spec(A)$, and its spectral radius by $\rho(A) \defeq \max_{\lambda \in \Spec(A)} \abs{\lambda}$. We remark that unlike in \cref{sec:introResults,sec:techs}, we will now use $A$ to denote arbitrary $m \times n$ matrices, not just uniformly random ones from $\cM_{m,n,s,t}$.
	
	\subsection{Biregular bipartite graphs and sparse matrices}\label{sec:SparseMatrices}

	Let $G = (V_L, V_R, E)$ denote a bipartite graph with $n = \abs{V_L}$ left vertices and $m = \abs{V_R}$ right vertices. We say that $G$ is \emph{$t$-left-regular} if $\deg(u)=t$ for all $u\in V_L$. If, additionally, $\deg(v)=s$ for all $v\in V_R$, we say that $G$ is \emph{$(t,s)$-biregular}.
	
	For a vertex $v$, we let $N_G(v)$ denote the set of neighbors of $v$ in $G$. For a set of vertices $S$, we let $N_G(S) := \cup_{v \in S} N_G(v)$ denote the set of neighbors of $S$ in $G$, and we let $U_G(S)$ denote the set of unique neighbors of $S$, i.e., $U_G(S)$ is the set of $v \in N_G(S)$ such that $v \in N_G(u)$ for some $u \in S$ and $v \notin N(u')$ for all other $u' \in S$ with $u \ne u'$. We omit the subscript $G$ when it is clear from the context.
    For $S\subseteq V_L$ and $T\subseteq V_R$, we let $E(S,T) = E_G(S,T) = \inset{\inset{u,v}\mid u\in S\wedge v\in T}$.
	
	Let $\cG_{m,n,s,t}$ denote the set of all $(t,s)$-biregular graphs with $\abs{V_L} = n$ and $\abs{V_R} = m$. Namely, $\cG_{m,n,s,t} = \inset{G_A \mid A\in \cM_{m,n,s,t}}$ (see the beginning of \cref{sec:techs} for a reminder of the definition of $G_A$).
	We note that sampling a uniformly random element of $\cM_{m, n,s,t}$ is equivalent to sampling a uniformly random graph $G_A$ from $\cG_{m,n,s,t}$ and then sampling a uniformly random edge signing $\sign_A \colon E \to \{\pm 1\}$.
	
    We also define the following properties.
	\begin{definition}\label{def:bicycleFree}
    A graph $G$ is said to be \emph{bicycle-free} at radius $\ell \in \N$ if for all $v\in V$, the ball $\ball_G(v,\ell)$ contains at most one cycle.
    \end{definition}
    
    \begin{definition}[Expansion and unique expansion]\label{def:uniqueExpansion}
        Let $G$ be a $t$-left-regular graph. Fix $\gamma, \mu\in (0,1)$. We say that $G$ is:
        \begin{enumerate}[(1)]
        \item a \emph{$(\gamma, \mu)$-vertex expander} if for all $S \subseteq V_L$ with $\abs{S} \leq \gamma n$, it holds that $\abs{N(S)} \geq t(1 - \mu) \abs{S}$, 
        \item a \emph{$(\gamma, \mu)$-unique expander} if for all $S \subseteq V_L$ with $\abs{S} \leq \gamma n$, it holds that $\abs{U(S)} \geq t(1 - \mu) \abs{S}$.
        \end{enumerate}
	\end{definition}
	
	We note the following proposition, which shows that expansion implies unique expansion, up to a factor $2$ in $\mu$.
	\begin{proposition}[Expansion implies unique expansion]
	\label{prop:exptouniqueexp}
	Let $G$ be a $t$-left-regular $(\gamma, \mu)$-vertex expander. Then, $G$ is a $(\gamma, 2\mu)$-unique expander.
	\end{proposition}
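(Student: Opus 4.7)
The plan is a standard double-counting argument on the edges between $S$ and $N(S)$. Fix any $S \subseteq V_L$ with $\abs{S} \le \gamma n$. By $t$-left-regularity, the number of edges between $S$ and $N(S)$ is exactly $t \abs{S}$. I will partition $N(S)$ into the unique neighbors $U(S)$ (those with exactly one neighbor in $S$) and the non-unique neighbors $N(S) \setminus U(S)$ (those with at least two neighbors in $S$), and bound the edge count from both sides.

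From the partition, I get the identity
\begin{equation*}
t \abs{S} \;=\; \sum_{v \in N(S)} \abs{N(v) \cap S} \;\ge\; \abs{U(S)} + 2\bigl(\abs{N(S)} - \abs{U(S)}\bigr) \;=\; 2\abs{N(S)} - \abs{U(S)},
\end{equation*}
where the inequality uses that each vertex in $U(S)$ contributes exactly $1$ and each vertex in $N(S)\setminus U(S)$ contributes at least $2$. Rearranging yields $\abs{U(S)} \ge 2\abs{N(S)} - t\abs{S}$.

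Now I apply the vertex-expansion hypothesis $\abs{N(S)} \ge t(1-\mu)\abs{S}$ to conclude
\begin{equation*}
\abs{U(S)} \;\ge\; 2t(1-\mu)\abs{S} - t\abs{S} \;=\; t\bigl(1 - 2\mu\bigr)\abs{S},
\end{equation*}
which is exactly the $(\gamma, 2\mu)$-unique expansion bound required by \cref{def:uniqueExpansion}. Since $S$ was arbitrary with $\abs{S} \le \gamma n$, this finishes the proof.

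There is essentially no obstacle here: the proof is a two-line inequality, and no probabilistic or structural input beyond the definitions is needed. The only place to be a little careful is the direction of the inequality in the edge-count step (contributions of $1$ versus $\ge 2$), but this is immediate from the definitions of $U(S)$ and $N(S)$.
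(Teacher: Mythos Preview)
Your proof is correct and is essentially identical to the paper's own argument: both partition $N(S)$ into unique and non-unique neighbors, double-count edges using $t$-left-regularity to get $\abs{U(S)} \ge 2\abs{N(S)} - t\abs{S}$, and then plug in the vertex-expansion lower bound on $\abs{N(S)}$.
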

	\begin{proof}
	Fix $S \subseteq V_L$ with $\abs{S} \leq \gamma n$. Let $T = N(S)$, $T_1 = U(S)$, and $T_2 = T \setminus T_1$. We have $\abs{T_1} + 2 \abs{T_2} \leq \abs{E(S,T)}$, as each $v \in T$ must have at least two edges to $S$, and $\abs{E(S,T)} \leq t \abs{S}$, as $G$ is $t$-left-regular. As $\abs{T_1} + \abs{T_2} = \abs{T}$, we thus have $2\abs{T} - \abs{T_1} \leq t \abs{S}$. As $\abs{T} \geq t(1 - \mu) \abs{S}$, it follows that $\abs{T_1} \geq t(1 - 2\mu) \abs{S}$, which finishes the proof.
	\end{proof}
	
	 We recall the formal guarantees of the explicit expander construction of \cite{CapalboRVW02}.
     \begin{theorem}[{\cite[Theorem 7.1]{CapalboRVW02}}, rephrased]\label{thm:CRVW}
	        For some universal $c \geq 1$, there is a deterministic algorithm which, given $\beta \in (0,1)$ an inverse power of $2$, $\mu > 0$ and $n\in \N$ a sufficiently large power of $2$, outputs in  time $\poly\inparen{n, \frac1\beta, \frac1\mu} + \poly\inparen{2^{\frac{1}{\beta}}, 2^{\frac{1}{\mu}}}$ 
	        a bipartite graph $G=(V_L,V_R,E)$ with $\abs{V_L} = n$, $\abs{V_R} = m := \beta n$, such that $G$ is a $t$-left-regular $(\Omega(\frac{\mu m}{t}), \mu)$-unique expander, for $t \leq O\Big(\big(\frac{1}{\beta\mu}\big)^c\Big)$.
	    \end{theorem}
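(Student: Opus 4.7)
The plan is to derive this statement as a direct re-parameterization of the main construction in \cite{CapalboRVW02}, which produces lossless \emph{vertex} expanders of arbitrary size via an iterated zig-zag product of constant-size extractor-based building blocks. No new combinatorial idea is needed here: the only work is to reconcile the parameterization of \cite[Theorem 7.1]{CapalboRVW02} with the form stated in our \cref{thm:CRVW}, and to translate vertex expansion into unique expansion using \cref{prop:exptouniqueexp}.

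The first step is to invoke the CRVW construction in its native form: given a target right-to-left size ratio $\beta$, an expansion loss parameter $\mu$, and a sufficiently large $n$, their algorithm outputs a $t$-left-regular bipartite graph $G = (V_L, V_R, E)$ with $|V_L| = n$, $|V_R| = \beta n$, and left-degree $t \leq (1/(\beta\mu))^{c}$ for some universal constant $c \geq 1$, such that every $S \subseteq V_L$ with $|S| \leq K$ satisfies $|N(S)| \geq (1-\mu)t|S|$, where $K = \Omega(\mu m / t)$. The running time is $\poly(n, 1/\beta, 1/\mu)$ for the iterative zig-zag composition, plus an additive $\poly(2^{1/\beta}, 2^{1/\mu})$ term corresponding to the brute-force construction of a constant-size base building block whose parameters depend only on $\beta$ and $\mu$.

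The second step is to convert this vertex expansion guarantee to a unique-neighbor guarantee. Applying \cref{prop:exptouniqueexp}, a $(\gamma, \mu/2)$-vertex expander is automatically a $(\gamma, \mu)$-unique expander. So I would invoke CRVW with expansion loss parameter $\mu/2$ in place of $\mu$; this only changes the degree and the threshold by constant factors absorbed in the $\Omega(\cdot)$ and $O(\cdot)$ in the statement (and changes the universal constant $c$ by at most a constant factor). The rounding conditions — $\beta$ an inverse power of $2$ and $n$ a sufficiently large power of $2$ — match those required by the CRVW iteration, where each step doubles the graph size.

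The main ``work,'' such as it is, is purely bookkeeping: tracking how the degree and expansion threshold in CRVW's nested recursion depend polynomially on $1/\beta$ and $1/\mu$, confirming that both the multiplicative $\poly(n, 1/\beta, 1/\mu)$ runtime of the recursive composition and the additive $\poly(2^{1/\beta}, 2^{1/\mu})$ cost of obtaining the constant-size base expander appear in the final bound, and checking that the unique-expansion threshold $K$ indeed takes the form $\Omega(\mu m / t)$ rather than being measured as a fraction of $n$. None of this requires new ideas beyond \cite{CapalboRVW02}, so the only real content of the proof is this translation.
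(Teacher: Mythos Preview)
Your proposal is correct and matches the paper's approach exactly: the paper does not give a proof of this theorem either, but simply cites \cite[Theorem 7.1]{CapalboRVW02} and notes immediately afterward that CRVW only constructs vertex expanders, and that \cref{prop:exptouniqueexp} converts this to unique expansion at the cost of a factor of $2$ in $\mu$. Your more detailed bookkeeping of the parameter translation is a faithful elaboration of precisely this.
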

	 We note that \cite{CapalboRVW02} only constructs vertex expanders, not unique expanders. However, as \cref{prop:exptouniqueexp} shows, these are equivalent up to a factor of $2$ in the parameter $\mu$, so this does not matter.

	\subsection{$\ell_p$ norms, compressibility, spread subspaces, and distortion}
    Let $p \in \R$ satisfy $1 \leq p \leq \infty$. Given a matrix $A\in \R^{m\times n}$ we denote its $\ell_p\to\ell_p$ operator norm by
    $$\norm{A}_p = \sup_{x\in \R^n\setminus \inset 0}\frac{\norm{Ax}_p}{\norm x_p} \enspace.$$
    We recall the following two facts about $\ell_p$ norms.
    
    \begin{lemma}[H\"{o}lder's inequality]
	\label{lem:holder}
    Let $x, y, z \in \R^n$ be vectors, with $z_i = x_i y_i$ for all $i \in [n]$. Then $\norm{z}_1 \leq \norm{x}_p \smnorm{y}_{\frac{p}{p - 1}}$.

    Furthermore,  let $1 \leq q < p$. Then $\norm{x}_p \leq \norm{x}_q \leq \norm{x}_p n^{1/q - 1/p}$ for any $x \in \R^n$.
    \end{lemma}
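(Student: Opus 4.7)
The statement comprises two classical facts, and the plan is to give one short standard proof for each, with no novel ingredients.

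For the first inequality, I would first dispose of the boundary cases $p=1, p=\infty$, where the claim reduces to $|z_i|\le |x_i|\cdot\max_j|y_j|$ summed over $i$. For $1<p<\infty$ with conjugate exponent $p'=p/(p-1)$, the standard route is via Young's inequality: for $a,b\ge 0$, $ab\le a^p/p + b^{p'}/p'$. By homogeneity I may rescale so that $\norm{x}_p=\smnorm{y}_{p'}=1$; applying Young's inequality with $a=|x_i|$ and $b=|y_i|$ and summing over $i\in[n]$ yields $\norm{z}_1\le 1/p + 1/p' = 1 = \norm{x}_p\smnorm{y}_{p'}$, and the general case follows by rescaling.

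For the left half of the second inequality, $\norm{x}_p\le\norm{x}_q$, I would again use homogeneity to assume $\norm{x}_q=1$, so that $|x_i|\le 1$ for every $i$. Since $p>q$, we have $|x_i|^p\le |x_i|^q$, whence $\norm{x}_p^p\le\norm{x}_q^q=1$, giving $\norm{x}_p\le 1 = \norm{x}_q$.

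For the right half, $\norm{x}_q\le\norm{x}_p\, n^{1/q-1/p}$, I would apply the first part of the lemma to the coordinatewise product of $u_i=|x_i|^q$ against $v_i=1$, using exponent $r=p/q>1$ and its conjugate $r'=p/(p-q)$. Hölder's inequality then gives $\sum_i|x_i|^q\le \bigl(\sum_i|x_i|^p\bigr)^{q/p}\cdot n^{1-q/p}$; raising both sides to the $1/q$ power yields the claimed bound. Neither step presents a real obstacle, since this lemma is merely recorded as a standard preliminary for later use.
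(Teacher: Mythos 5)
Your proposal is correct. The paper records this lemma as a standard preliminary without proof, and your arguments are the classical ones: Young's inequality (after normalizing $\norm{x}_p=\smnorm{y}_{p/(p-1)}=1$, with the trivial degenerate case when either norm vanishes) for the first part, and for the second part the pointwise comparison $|x_i|^p\le|x_i|^q$ under the normalization $\norm{x}_q=1$, together with H\"older applied to $|x_i|^q$ against the all-ones vector with exponents $p/q$ and $p/(p-q)$. Nothing further is needed.
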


        Next, we recall the definition of compressible and spread vectors, as well as spread subspaces.
        
			\begin{definition}[$\ell_p$-spread]
		Fix $p \in [1, \infty]$, $\eps \in [0,1]$ and $k\le n\in \N$. A vector $y\in \R^n$ is \emph{$k$-sparse} if $\inabs{\supp(y)} \le k$. A vector $x\in \R^n \setminus \{0^n\}$ is said to be \emph{$(k,\eps)$-$\ell_p$-compressible} if there exists a $k$-sparse $y\in \R^n$ such that $\smnorm{x-y}_p\le \eps\smnorm{x}_p$. Otherwise, we say that $x$ is \emph{$(k,\eps)$-$\ell_p$-spread}. 
		
				A subspace $X \subseteq \R^n$ is \emph{$(k,\eps)$-$\ell_p$-spread} if every $x \in X \setminus \{0^n\}$ is $(k,\eps)$-$\ell_p$-spread.
	\end{definition}
  		
	We note that $\ell_p$-spread implies $\ell_q$-spread, for any $q \leq p$, up to a small change in parameters. Formally, the following proposition holds.
	\begin{restatable}[$\ell_p$-spread implies $\ell_q$-spread]{proposition}{pspreadtoqspread}
    \label{prop:pspreadtoqspread}
    Suppose that $X \subseteq \R^n$ is $(2k, \eps)$-$\ell_p$-spread. Then for every $1 \leq q < p$, $X$ is $(k, \eps_q)$-$\ell_q$-spread for $\eps_q = \eps^2 \left(\frac{k}{n}\right)^{\frac{1}{q}}$.

    In particular, if $X$ is $(\Omega(n), \Omega(1))$-$\ell_p$-spread, then $X$ is also $(\Omega(n), \Omega(1))$-$\ell_q$-spread for every $1 \leq q < p$.
    \end{restatable}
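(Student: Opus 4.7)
The plan is to prove the contrapositive: assuming some $x \in X \setminus \{0^n\}$ is $(k, \eps_q)$-$\ell_q$-compressible, I will construct a $2k$-sparse approximant witnessing that $x$ is $(2k, \eps)$-$\ell_p$-compressible, contradicting the hypothesis on $X$.

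Concretely, fix a $k$-sparse $y$ with $\smnorm{x - y}_q \le \eps_q \smnorm{x}_q$, and let $T = \supp(y)$. A standard truncation observation (replacing $y$ by the restriction $x_T$ of $x$ to $T$ only decreases the $\ell_q$-error, coordinate-by-coordinate) gives $\smnorm{x - x_T}_q \le \eps_q \smnorm{x}_q$. Write $w = x - x_T$; this is supported on $[n]\setminus T$. Now let $S$ be the set of indices of the $k$ largest entries of $w$ in absolute value and put $z = x_T + w_S$, which is $2k$-sparse with $x - z = w_{[n]\setminus S}$. The key three inequalities are: (i) since the $(k{+}1)$-st largest absolute entry of $w$ is at most $\smnorm{w}_q / k^{1/q}$, we have $\smnorm{x - z}_\infty \le \eps_q \smnorm{x}_q / k^{1/q}$; (ii) trivially $\smnorm{x - z}_q \le \smnorm{w}_q \le \eps_q \smnorm{x}_q$; and (iii) the elementary inequality $|a|^p \le |a|^{p-q}\cdot|a|^q$ summed coordinatewise yields $\smnorm{x - z}_p^p \le \smnorm{x - z}_\infty^{p-q}\, \smnorm{x - z}_q^q$.

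Combining (i)--(iii) gives $\smnorm{x - z}_p^p \le \eps_q^p\, \smnorm{x}_q^p / k^{(p-q)/q}$. Applying the nested-norm bound $\smnorm{x}_q \le n^{1/q - 1/p}\smnorm{x}_p$ from \cref{lem:holder} and extracting the $p$-th root, I obtain
\[
\smnorm{x - z}_p \;\le\; \eps_q \cdot (n/k)^{1/q - 1/p}\, \smnorm{x}_p.
\]
Substituting $\eps_q = \eps^2 (k/n)^{1/q}$, the prefactor simplifies to $\eps^2 (k/n)^{1/p}$, which is at most $\eps$ since $\eps \le 1$ and $k \le n$. This gives the promised $2k$-sparse $\ell_p$-approximation, contradicting $(2k,\eps)$-$\ell_p$-spreadness.

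There is no real obstacle; the only subtlety is the algebra of exponents connecting $\eps_q$, $k/n$, and the Hölder interpolation, and verifying that the factor $\eps^2$ (rather than $\eps$) in the definition of $\eps_q$ is exactly what absorbs the $(k/n)^{1/p}$ slack. The ``in particular'' clause then follows immediately by specializing to $k = \Omega(n)$ and $\eps = \Omega(1)$, giving $\eps_q = \Omega(1)$.
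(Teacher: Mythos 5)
Your proof is correct, but it takes a genuinely different route from the paper's. The paper argues at the level of the subspace: for a candidate support $S$ it considers the projection $X_{\bar S}$ of $X$ onto the complementary coordinates, proves (using the $2k$ sparsity budget) that this projection inherits $(k,\eps)$-$\ell_p$-spread, and then combines the spread-to-distortion direction of \cref{prop:companddist} applied to $X_{\bar S}$ with H\"older to lower bound $\smnorm{x - x_S}_q$, arriving at the same $\eps_q = \eps^2 (k/n)^{1/q}$. You instead argue by contrapositive directly on a single vector: truncate the $\ell_q$-approximant to $x_T$, add back the top-$k$ entries of the residual $w$ to get a $2k$-sparse $z$, and control $\smnorm{x-z}_p$ via the interpolation bound $\smnorm{a}_p^p \le \smnorm{a}_\infty^{p-q}\smnorm{a}_q^q$ together with the tail estimate on the $(k{+}1)$-st largest entry of $w$ and the nested-norm inequality of \cref{lem:holder}; the exponent bookkeeping checks out, the factor $\eps^2(k/n)^{1/p} \le \eps$ is valid since $\eps \in [0,1]$, and the degenerate case where $w$ has at most $k$ nonzero entries is trivially fine since then $x = z$. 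Your approach is more elementary and self-contained (it never invokes \cref{prop:companddist} or the projected subspace, and in fact proves a pointwise statement: any $(2k,\eps)$-$\ell_p$-spread \emph{vector} is $(k,\eps_q)$-$\ell_q$-spread, with some slack to spare, since you only need $\eps_q \le \eps (k/n)^{1/q - 1/p}$), whereas the paper's argument buys the observation, reusable elsewhere, that coordinate projections of spread subspaces remain spread, and illustrates the spread/distortion dictionary in action.
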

    We prove \cref{prop:pspreadtoqspread} in \cref{sec:pspreadtoqspread}.
    
    Being $\ell_p$-RIP (\cref{def:RIP}) is a stronger property than being $\ell_p$-spread. This is formalized by the following proposition, which we prove in \cref{sec:riptospread}.
    \begin{restatable}[$\ell_p$-RIP implies $\ell_p$-spread]{proposition}{riptospread}
    \label{prop:riptospread}
    Let $p \in [1, \infty]$, and let $A \in \R^{m \times n}$ be a $(k,\eps)$-$\ell_p$-RIP matrix. Then, $\ker(A)$ is $(k, \eps')$-$\ell_p$-spread for $\eps' = \frac{1 - \eps}{2 + \eps(1 + (\frac{2n}{k})^{1 - \frac{1}{p}})}$.
    \end{restatable}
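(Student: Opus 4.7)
The plan is to argue by contradiction: suppose $x \in \ker(A) \setminus \{0^n\}$ and $\norm{x - y}_p \leq \eps' \norm{x}_p$ for some $k$-sparse $y$, and derive an inequality forcing $\eps'$ to be at least the claimed value. A first reduction replaces $y$ with $x_T$, where $T = \supp(y)$, since $x_T$ is the best $\ell_p$-approximation to $x$ among vectors supported on $T$. Setting $z := x_{T^c}$, the hypothesis becomes $\norm{z}_p \leq \eps' \norm{x}_p$, and the subadditivity $\norm{x}_p \leq \norm{x_T}_p + \norm{z}_p$ (which holds for $p \geq 1$ because $x_T$ and $z$ have disjoint supports) gives $\norm{x_T}_p \geq (1 - \eps') \norm{x}_p$. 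The key identity is $A x_T = -A x_{T^c}$, which follows from $Ax = 0$; it sandwiches the common quantity $\norm{A x_T}_p = \norm{A x_{T^c}}_p$ between a RIP lower bound and a shelling-based upper bound.

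For the lower side, RIP applied to the $k$-sparse vector $x_T$ yields $\norm{A x_T}_p \geq K(1-\eps)(1-\eps')\norm{x}_p$. For the upper side, I would decompose $x_{T^c}$ into $k$-sparse shells $T_1, T_2, \ldots$, where $T_i$ indexes the $i$-th block of largest $k$ magnitudes of $x_{T^c}$, and apply RIP block-wise together with the triangle inequality to obtain $\norm{A x_{T^c}}_p \leq K(1+\eps) \sum_{i \geq 1} \norm{x_{T_i}}_p$. The shell sum is then controlled by the standard shelling inequality $\norm{x_{T_i}}_p \leq \norm{x_{T_{i-1}}}_1 / k^{1 - 1/p}$ for $i \geq 2$, combined with H\"{o}lder's inequality $\norm{x_{T^c}}_1 \leq n^{1-1/p} \norm{x_{T^c}}_p$, producing a bound of the form $\sum_{i \geq 1} \norm{x_{T_i}}_p \leq \inparen{1 + (2n/k)^{1-1/p}}\norm{x_{T^c}}_p$; the factor of $2$ arises because the number of blocks is $\lceil n/k \rceil \leq 2n/k$.

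Equating the two bounds on $\norm{A x_T}_p$ and cancelling the common factor $K\norm{x}_p$ then yields an inequality of the form $(1-\eps)(1 - \eps') \leq (1+\eps)\inparen{1 + (2n/k)^{1-1/p}} \eps'$, which rearranges to a lower bound on $\eps'$. The main obstacle I anticipate is aligning the algebra to land on exactly the stated denominator $2 + \eps\inparen{1 + (2n/k)^{1-1/p}}$: a naive rearrangement produces the looser $2 + (1+\eps)(2n/k)^{1-1/p}$, so the calculation will have to regroup the RIP factors $(1-\eps)$ and $(1+\eps)$ using the identity $(1-\eps) + (1+\eps) = 2$ to peel off the constant $2$ cleanly and leave only an $\eps$-factor multiplying the shelling loss $1 + (2n/k)^{1-1/p}$.
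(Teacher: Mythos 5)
Your plan is, in substance, the same argument as the paper's: the paper first bounds the operator norm, $\norm{A}_p \le (1+\eps)\lceil n/k\rceil^{1-\frac1p}$, by splitting an arbitrary vector into $k$-sparse blocks (triangle inequality, blockwise RIP, H\"{o}lder), and then writes $\norm{Ax}_p \ge \norm{Ay}_p - \norm{A}_p\norm{x-y}_p \ge (1-\eps)(1-\eps') - (1+\eps)\lceil n/k\rceil^{1-\frac1p}\eps'$ for a putative compressible kernel vector. Your version merely specializes $y$ to $x_T$, phrases the conclusion through $Ax_T=-Ax_{T^c}$, and replaces the crude block bound by magnitude-ordered shelling through $\ell_1$; this yields the same H\"{o}lder factor (in fact shelling gives $1+(n/k)^{1-\frac1p}$, so your factor of $2$ is not even needed). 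Everything you actually derive is correct, and it gives: a $(k,\eps')$-compressible vector in $\ker(A)$ forces $(1-\eps)(1-\eps') \le (1+\eps)\inparen{1+\inparen{\tfrac{2n}{k}}^{1-\frac1p}}\eps'$, i.e.\ $\eps' \ge \frac{1-\eps}{2+(1+\eps)(2n/k)^{1-\frac{1}{p}}}$.

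The obstacle you flag at the end, however, is not surmountable: no regrouping via $(1-\eps)+(1+\eps)=2$ can reach the stated denominator $2+\eps\inparen{1+(2n/k)^{1-\frac1p}}$, because it differs from the one your inequality produces by exactly $(2n/k)^{1-\frac1p}-\eps>0$, so the stated $\eps'$ is strictly larger than what the argument supports --- and it is in fact false as stated. Sanity check at $\eps=0$, $p=2$, $k=1$, $K=1$: the all-ones row $A=(1,\dots,1)\in\R^{1\times n}$ is $(1,0)$-$\ell_2$-RIP and the statement would give $\eps'=\tfrac12$, yet $x=(n-1,-1,\dots,-1)\in\ker(A)$ satisfies $\norm{x-(n-1)e_1}_2=\tfrac{1}{\sqrt n}\norm{x}_2<\tfrac12\norm{x}_2$ for $n\ge 5$. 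The paper's own proof commits the same algebraic slip in its last line: the claimed implication ``$>0$ by choice of $\eps'$'' requires $\lceil n/k\rceil^{1-\frac1p}\le 1+2\eps$, which fails in general. So you should not force the algebra toward the printed constant; prove the proposition with $\eps'=\frac{1-\eps}{(1-\eps)+(1+\eps)\lceil n/k\rceil^{1-\frac{1}{p}}}$ (or the slightly weaker $2+(1+\eps)(2n/k)^{1-\frac1p}$ denominator you obtain), which your argument establishes and which suffices verbatim for every downstream use: \cref{mthm:ellpspreadpos} and \cref{mthm:ExplicitConstruction} only need $\eps'=\Omega\inparen{\gamma^{1-\frac1p}}$ for constant $\eps<\tfrac12$.
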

	
	\begin{definition}[Distortion of a vector]
		Let $1 \leq q < p$. The \emph{$(\ell_q,\ell_p)$-distortion} of a vector $x \in \R^n \setminus \{0\}$ is
		\begin{equation*}
	    \Delta_{q, p}(x) \defeq \frac{\norm{x}_p \cdot n^{1/q - 1/p}}{\norm{x}_q} \enspace.
	    \end{equation*}
	\end{definition}
	By \cref{lem:holder}, $1 \le \Delta_{q,p}(x)\le n^{1/q - 1/p}$ always holds. Note that $\Delta_{q,p}(x)=1$ iff all entries of $x$ are equal, and $\Delta_{q,p}(x) = n^{1/q - 1/p}$ iff $x$ is supported on a single coordinate.  So the distortion is a measure of the well-spreadness of the vector, with smaller values corresponding to more spread.

	\begin{definition}[Distortion of a subspace]
		Given $n\in \N$ and a subspace $X\subseteq \R^n$, the \emph{$(\ell_q,\ell_p)$-distortion} of $X$ is
		\begin{equation*}
		\Delta_{q,p}(X) \defeq \sup\{\Delta_{q,p}(x) \mid x\in X\setminus\{0\}\} \enspace.
		\end{equation*}
	\end{definition}
	We remark that $\Delta_{p}(X)$ (as defined in \cref{sec:intro}) is simply $\Delta_{1,p}(X)$.

    Finally, we observe the following equivalence between $\ell_p$-spread and $(\ell_q,\ell_p)$-distortion, which we prove in \cref{sec:companddist}.
	\begin{restatable}[Compressibility and distortion]{proposition}{companddist}
	\label{prop:companddist}
		The following holds for all $1\le q<p$, $k\in \N$ and $x\in \R^n$.
		\begin{enumerate}
		\itemsep=0ex
			\item Let $\eps > 0$. If $x$ is $(k, \eps)$-$\ell_p$-compressible then $\Delta_{q,p}(x) \ge \frac{1}{\inparen{\frac{k}{n}}^{\frac{1}{q} - \frac{1}{p}} + \eps}$. 
			\item The vector $x$ is $\Bigl(k,\frac{\inparen{\frac {n}{k}}^{\frac{1}{q}}}{\Delta_{q,p}(x)}\Bigr)$-$\ell_p$-compressible.
		\end{enumerate}
		In particular, if a subspace $X$ is $(k, \eps)$-$\ell_p$-spread, then $\Delta_{q,p}(X) \leq  \frac{1}{\eps} \left(\frac{n}{k}\right)^{\frac{1}{q}}$ for all $1 \leq q < p$.
	\end{restatable}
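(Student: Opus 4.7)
The plan is to prove parts (1) and (2) directly from the definitions using only the triangle inequality and Hölder's inequality (\cref{lem:holder}), and then to deduce the ``in particular'' consequence by contrapositive.

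For part (1), I start with a $k$-sparse witness $y$ to the compressibility of $x$, and first \emph{replace} $y$ by $\tilde y$, the restriction of $x$ to $\supp(y)$. This step is the only subtle point in the argument: because $x-\tilde y$ vanishes on $\supp(y)$ and agrees with $x-y$ off of it, the new error satisfies $\|x-\tilde y\|_p\le\|x-y\|_p\le\eps\|x\|_p$, and moreover $\|\tilde y\|_p\le\|x\|_p$ trivially, avoiding an extraneous factor of $(1+\eps)$ that a naive triangle inequality on $\|y\|_p$ would otherwise produce. The triangle inequality $\|x\|_q\le\|\tilde y\|_q+\|x-\tilde y\|_q$, combined with the two Hölder estimates $\|\tilde y\|_q\le k^{1/q-1/p}\|\tilde y\|_p$ (since $\tilde y$ is $k$-sparse) and $\|x-\tilde y\|_q\le n^{1/q-1/p}\|x-\tilde y\|_p$, then yields
\begin{equation*}
\|x\|_q\le\bigl(k^{1/q-1/p}+\eps\,n^{1/q-1/p}\bigr)\|x\|_p.
\end{equation*}
Dividing by $\|x\|_p\,n^{1/q-1/p}$ and inverting gives exactly the claimed lower bound on $\Delta_{q,p}(x)$.

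For part (2), I take $y$ to be the restriction of $x$ to the indices of its top $k$ entries in magnitude. Sorting the magnitudes as $|x_{(1)}|\ge|x_{(2)}|\ge\cdots$, the bound $k|x_{(k)}|^q\le\|x\|_q^q$ gives $|x_{(k)}|\le\|x\|_q/k^{1/q}$. Writing $|x_{(i)}|^p=|x_{(i)}|^{p-q}|x_{(i)}|^q$ and using $|x_{(i)}|\le|x_{(k)}|$ for $i>k$, I obtain
\begin{equation*}
\|x-y\|_p^p=\sum_{i>k}|x_{(i)}|^p\le|x_{(k)}|^{p-q}\|x\|_q^q\le\|x\|_q^p\big/k^{(p-q)/q},
\end{equation*}
so $\|x-y\|_p\le\|x\|_q/k^{1/q-1/p}$. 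Substituting $\|x\|_q=\|x\|_p\,n^{1/q-1/p}/\Delta_{q,p}(x)$ from the definition of distortion produces $\|x-y\|_p\le(n/k)^{1/q-1/p}\|x\|_p/\Delta_{q,p}(x)$, which is at most $(n/k)^{1/q}\|x\|_p/\Delta_{q,p}(x)$ because $n/k\ge 1$ and $1/q-1/p\le 1/q$; this is exactly the compressibility parameter claimed.

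Finally, the ``in particular'' consequence follows by the contrapositive of part (2): if $X$ is $(k,\eps)$-$\ell_p$-spread, then no nonzero $x\in X$ is $(k,\eps)$-compressible, so part (2) forces $(n/k)^{1/q}/\Delta_{q,p}(x)>\eps$ for every such $x$, hence $\Delta_{q,p}(X)\le\eps^{-1}(n/k)^{1/q}$. The overall computation is routine; the main (and essentially only) non-obvious step is the replacement of $y$ by the restriction $\tilde y$ in part (1), which sharpens the bound by removing an avoidable $(1+\eps)$ factor.
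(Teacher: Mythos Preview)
Your proof is correct and follows essentially the same approach as the paper's: for part~(1) both arguments replace $y$ by the restriction of $x$ to $\supp(y)$ and then apply the triangle inequality together with H\"older, and for part~(2) both take $y$ to be the top-$k$ restriction of $x$ and bound $\|x-y\|_p$ via $\|x\|_q$. Your estimate in part~(2), splitting $|x_{(i)}|^p=|x_{(i)}|^{p-q}|x_{(i)}|^q$, is in fact slightly sharper than the paper's cruder bound $\|x-y\|_p\le\|x-y\|_\infty\, n^{1/p}$, giving the intermediate compressibility parameter $(n/k)^{1/q-1/p}/\Delta_{q,p}(x)$ before you weaken it to match the stated $(n/k)^{1/q}/\Delta_{q,p}(x)$.
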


	\section{Properties of random biregular graphs}\label{sec:RandomGraphProperties}

	In this section, $G^* = (V_L,V_R,E^*)$ is a random graph sampled uniformly from $\cG_{m,n,s,t}$. We state \cref{prop:goodGraph,prop:uniqueExpansion} and \ref{prop:uniqueExpansion}, which show that $G^*$ satisfies certain properties with high probability.
	
    \begin{proposition}[Scarcity of cycles] \label{prop:goodGraph}
		Suppose that $1\le \ell = \ell(n) \le c\cdot \frac{\log (m)}{\log (ts)}$ for a small enough universal constant $c > 0$. Then, with high probability, the graph ${G^*}$ has the following properties.
		\begin{enumerate}
			\item There exists $v^*\in V_L$ such that $\ball_{G^*}(v^*,2\ell+1)$ does not contain a cycle.\label{enum:NoCycle}
			\item The graph $G^*$ is bicycle-free at radius $2\ell+1$. (See \cref{def:bicycleFree}) \label{enum:NoBicycle}
		\end{enumerate}
	\end{proposition}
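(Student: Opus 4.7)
Plan: Both parts are proven via the first-moment method in the configuration model for biregular bipartite graphs, in which $G^*$ arises from a uniformly random perfect matching of the $nt = ms$ half-edges on each side. A standard half-edge count yields the key bound: for any fixed simple bipartite graph $H$ on $v_L$ left vertices, $v_R$ right vertices, and $k$ edges, the expected number of labeled copies of $H$ in $G^*$ is at most $n^{v_L}m^{v_R}(s/n)^k = n^{v_L+v_R-k}\alpha^{v_R}s^k$. For a connected $H$ with excess $e := k - (v_L + v_R) + 1$, this specializes to $n^{1-e}(ts)^{O(v_L + v_R)}$, so each non-tree edge provides an extra $1/n$ savings.

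For Part~1, let $Y$ be the number of $v \in V_L$ such that $\ball_{G^*}(v, 2\ell+1)$ contains a cycle. Every $v$ counted by $Y$ admits a witness consisting of a cycle $C$ of even length $2j \le 4\ell + 2$ together with a shortest $v$-to-$C$ path of length $\le 2\ell+1$; this witness is a connected excess-$1$ subgraph on $O(\ell)$ vertices. Summing the half-edge bound above over the at most $(ts)^{O(\ell)}$ distinct witness shapes yields $\EE[Y] \le (ts)^{O(\ell)} \le m^{O(c)}$, using the hypothesis $\ell \le c \log m/\log(ts)$. For $c > 0$ sufficiently small, $\EE[Y] \le n/2$, so Markov's inequality gives $Y < n$ with probability $1 - o(1)$, producing the desired $v^*$.

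For Part~2, let $Z$ be the number of $v \in V_L \cup V_R$ such that $\ball_{G^*}(v, 2\ell+1)$ has more than one cycle. Each such $v$ admits a witness that is a connected excess-$\ge 2$ subgraph on $O(\ell)$ vertices (two short cycles inside the ball plus a short path from $v$ reaching them). Now the excess bound gains an extra factor of $1/n$: summing over the $(ts)^{O(\ell)}$ distinct witness shapes of this type gives $\EE[Z] \le n^{-1}(ts)^{O(\ell)} \le m^{O(c)-1}$, which tends to $0$ for small enough $c$. Hence $Z = 0$ with high probability, i.e., $G^*$ is bicycle-free at radius $2\ell+1$.

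The main technical subtlety in both parts is enumerating witness shapes: one must verify that the number of connected subgraph shapes on $O(\ell)$ vertices is itself at most $(ts)^{O(\ell)}$, so that the combinatorial factor never swallows the probabilistic savings. The hypothesis $\ell \le c \log m/\log(ts)$ is exactly what ensures $(ts)^{O(\ell)} \le m^{O(c)} = o(n)$, tipping both expected counts in our favor. A secondary care point is justifying the edge-probability bound $\Pr[H \subseteq G^*] \le (s/n)^k$ rigorously in the uniform model $\cG_{m,n,s,t}$ rather than just the configuration model, which can be handled by a standard comparison between the two.
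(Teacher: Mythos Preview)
Your approach is correct in outline but differs from the paper's, and the point you flag as a ``secondary care point'' is in fact the heart of the matter.

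The paper does not use subgraph counting or the configuration model at all. Instead it proves a negative-association lemma (Lemma~4.2) directly in the uniform model $\cG_{m,n,s,t}$ via an explicit injection, and then runs a random non-backtracking walk from $v^*$ (Algorithm~1): the NA lemma bounds, step by step, the probability that the walk revisits an old vertex, yielding $\Pr[\text{cycle in }\ball(v^*,2\ell+1)]\le O(\ell^2 (ts)^{2\ell+1}/m)$ and an analogous $O(\ell^3(ts)^{O(\ell)}/m^2)$ bound for two cycles. Part~1 then holds for an \emph{arbitrary} fixed $v^*$ (stronger than existence), and Part~2 follows by a union bound over all left vertices.

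Your first-moment argument is the textbook route and is fine when $s=O(1)$, where the configuration-to-uniform comparison is indeed standard. But the proposition is stated for $s$ up to $n^c$, and in that regime the configuration model is non-simple with probability $1-o(1)$ (the expected number of parallel edges is $\Theta(s^2)$), so ``condition on simplicity'' is no longer a constant-factor correction. Establishing the bound $\Pr[H\subseteq G^*]\le (s/n)^k$ in the uniform biregular model then requires either a switching argument or something equivalent to the paper's NA lemma. This is doable, but it is not the afterthought your write-up suggests; it is essentially the same work the paper does, just repackaged.

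One small exposition glitch in Part~1: from $\EE[Y]\le n/2$ alone, Markov only gives $\Pr[Y\ge n]\le 1/2$, not $1-o(1)$. You actually have the stronger bound $\EE[Y]\le m^{O(c)}=o(n)$, which does give $\Pr[Y\ge n]=o(1)$; just state that instead.
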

	Note that the above proposition is vacuously true when $ts > m^c$.
	
	\begin{proposition}[Unique expansion] \label{prop:uniqueExpansion}
	    For some $c > 0$, the graph $G^*$ is a $\inparen{\frac{c\alpha^2}{t^4}, \frac 2t}$-unique expander, with high probability.
	 \end{proposition}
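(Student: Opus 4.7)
The plan is to apply~\cref{prop:exptouniqueexp} to reduce the claim to showing that $G^*$ is a $(\gamma, 1/t)$-vertex expander with high probability; equivalently, with high probability every $S \subseteq V_L$ with $\abs{S} = k \leq \gamma n$ satisfies $\abs{N(S)} \geq (t-1)k$. I would establish this via a union bound over $k$ and $S$.

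For a fixed $S$ of size $k$, the failure event is equivalent to having at least $k$ ``collision pairs'' among the $tk$ edges out of $S$---pairs of edges that land at the same vertex of $V_R$. Working in the configuration model, where a uniform $(s,t)$-biregular graph arises from a uniform matching of $nt = ms$ half-edges, a fixed pair of $S$-half-edges lands at the same right vertex with probability roughly $1/m$. The expected number of colliding pairs for a fixed $S$ is thus $\approx t^2 k^2/(2m)$, which is much smaller than $k$ whenever $k \leq \gamma n$ in the regime $\gamma \leq c\alpha^2/t^4$. To convert this heuristic into a per-$S$ failure bound, I would use the factorial moment inequality $\Pr[P \geq k] \leq \EE[\binom{P}{k}]$, where $P$ is the collision count; direct computation in the configuration model (dominated by $k$-tuples of disjoint colliding pairs, each contributing probability $\approx 1/m$) yields an estimate of the form $(Ct^2 k/m)^k$ for some absolute constant $C$.

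Combining with $\binom{n}{k} \leq (en/k)^k$, the union bound gives a total failure probability of at most $\sum_{k=1}^{\gamma n}(eCt^2/\alpha)^k$ up to lower-order corrections. Since this geometric sum does not by itself decay in $n$ (the base may exceed $1$ for large $t$), the analysis must extract additional $n$-dependence from the biregular structure. I anticipate this is done by splitting the range of $k$: for very small $k$ (say up to $O(\log n/\log t)$), one computes the per-$S$ failure probability directly, exploiting that many collisions force specific structural events---multi-neighbors of large degree in $N(S)$---whose probabilities are polynomially small in $n$; for larger $k$, sharper Stirling estimates on $\binom{n}{k}$ and $\binom{m}{(t-1)k}$, combined with the biregular constraint $s = t/\alpha$, push each term below $1/\mathrm{poly}(n)$.

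The main obstacle is the delicate interplay between the combinatorial factor $\binom{n}{k}$ and the per-$S$ failure probability; naive application of the bound $\Pr[N(S)\subseteq T] \leq ((t-1)k/m)^{tk}$ in a union bound over target sets $T$ of size $(t-1)k$ yields a per-$k$ estimate of $[e^t(t-1)/\alpha]^k$, which is useless for large $t$. The proposition's choice $\gamma = c\alpha^2/t^4$---rather than the naive scale $\gamma = O(\alpha/t^2)$ suggested by the expected-collision heuristic---likely encodes the slack needed to close this two-regime analysis, with the extra factor of $\alpha/t^2$ providing a safe margin when bounding the terms of the union bound.
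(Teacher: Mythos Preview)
Your proposal has a genuine gap, and the paper's proof shows exactly where the missing idea lies.

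You reduce via \cref{prop:exptouniqueexp} to proving that $G^*$ is a $(\gamma,1/t)$-vertex expander, i.e.\ $\abs{N(S)}\ge (t-1)k$ for all $\abs{S}=k\le\gamma n$. Failure of this event requires only about $k$ collisions among the $tk$ edges out of $S$, which gives a per-$S$ failure probability of order $(Ct^2k/m)^k$. As you yourself observe, combining this with $\binom{n}{k}\le (en/k)^k$ yields terms of size $(Cet^2/\alpha)^k$; these do \emph{not} depend on $n$ and do not vanish, so the union bound does not close. Your suggested remedy---splitting the range of $k$ and hoping for extra $n$-decay from structural constraints---does not work: for $k$ in the intermediate range (say $k$ comparable to $\log m/\log t$) the product $\binom{n}{k}(Ct^2k/m)^{k+1}$ is still of order $(Ct^2/\alpha)^k\cdot t^2k/m$, and the geometric growth of the first factor overwhelms the single extra $1/m$. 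No amount of case analysis rescues an exponent-$k$ bound.

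The paper's proof avoids this by aiming for a per-$S$ failure bound with exponent $2k$ rather than $k$. Concretely, in \cref{lem:ProbSetNotExpanding} the target is the event that $S$ has at most $(t-2)k$ (unique) neighbors, whose failure forces at least $2k$ of the $tk$ edge-revealing steps to land on an already-revealed right vertex. Bounding this via the paper's negative-association \cref{lem:NAedges} (which plays the role of your configuration-model estimate, but directly in the biregular model) gives
\[
\PR{\text{failure for }S}\le \binom{tk}{2k}\Big(\frac{tk}{m}\Big)^{2k}\le \Big(\frac{et^2k}{2m}\Big)^{2k}.
\]
Now the union bound reads
\[
\sum_{k=1}^{\gamma n}\binom{n}{k}\Big(\frac{et^2k}{2m}\Big)^{2k}\le \sum_{k=1}^{\gamma n}\Big(\frac{e^3t^4k}{4\alpha^2 n}\Big)^{k},
\]
and with $\gamma\le \alpha^2/(2e^3t^4)$ every term is at most $2^{-k}$, so the sum is $O(t^4/(\alpha^2 n))=o(1)$. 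This is precisely why $\gamma$ must scale as $\alpha^2/t^4$: the $t^4$ and $\alpha^2$ arise from squaring the per-step collision probability, not from any slack or safety margin as you conjectured.

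In short, the fix is not to analyze the exponent-$k$ bound more carefully, but to weaken the expansion target so that failure requires $2k$ collisions; then the union bound closes immediately and no case-splitting is needed.
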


	\subsection{A negative association lemma}
	Our main technical tool for proving Propositions \ref{prop:goodGraph} and \ref{prop:uniqueExpansion} is \cref{lem:NAedges}. This lemma applies in the setting where a certain subset of edges $F\subseteq E^*$ of the random graph $G^*$ has been revealed. We fix some vertex $u$ and sample a random (yet unrevealed) edge, originating from $u$ to some random vertex $v\in N_{G^*}(u)$. The claim gives a natural lower bound for the probability that $\deg_F(v)=0$, namely, that $v$ is ``new,'' in the sense that it does not touch any of the edges in $F$.
	
	In order to state \cref{lem:NAedges} we introduce some notation. Given a set of edges $F\subseteq \binom{V_L\cup V_R}{2}$ and a vertex $v\in V_L\cup V_R$, we denote $N_F(v) = \{u\in V_L\cup V_R\mid \{v,u\}\in F\}$ and $\deg_F(v) = \inabs{N_F(v)}$.

	\begin{definition}
		A set of edges $F\subseteq \binom{V_L\cup V_R}{2}$ is called \emph{viable} if it can be completed to an $(s,t)$-biregular bipartite graph, namely, if $F\subseteq E(G)$ for some $G\in \cG_{m,n,s,t}$.
	\end{definition}

	\begin{lemma}[A Negative Association like property]
		\label{lem:NAedges}
		The following holds for every viable set of edges $F$.
		\begin{enumerate}
			\item Let $u\in V_L$ such that $\deg_F(u) < t$. Let $v$ be uniformly sampled from $N_{G^*}(u)\setminus N_F(u)$. Then,
			\begin{equation*}
			\PROver{G^*,v}{\deg_F(v)>0\mid F\subseteq E^*} \le \frac{\inabset{x\in V_R\mid \deg_F(x) > 0}}{\abs{V_R}} \enspace.
			\end{equation*}
			\item Let $u\in V_R$ such that $\deg_F(u) < s$. Let $v$ be uniformly sampled from $N_{G^*}(u)\setminus N_F(u)$. Then,
			\begin{equation*}
			\PROver{G^*,v}{\deg_F(v)>0\mid F\subseteq E^*} \le \frac{\inabset{x\in V_L\mid \deg_F(x) > 0}}{\abs{V_L}} \enspace.
			\end{equation*}
		\end{enumerate}
	\end{lemma}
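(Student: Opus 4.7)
The plan is to prove the lemma via a switching/double-counting argument on the uniform distribution over $\mathcal{G}(F) := \{G \in \mathcal{G}_{m,n,s,t} : F \subseteq E(G)\}$, since the conditional law of $G^*$ given $F \subseteq E^*$ is uniform on this set. I only sketch part~(1); part~(2) is symmetric, with the roles of $V_L$ and $V_R$ exchanged. Let $B = \{x \in V_R : \deg_F(x) > 0\}$; note that $N_F(u) \subseteq B$, since every $v$ with $\{u,v\}\in F$ has $\deg_F(v) \ge 1$. For $v \in V_R \setminus N_F(u)$, set $N(v) = |\{G \in \mathcal{G}(F) : \{u,v\} \in G\}|$, so that
\[
\Pr[\deg_F(v) > 0 \mid F \subseteq E^*] = \frac{\sum_{v_B \in B \setminus N_F(u)} N(v_B)}{(t - \deg_F(u))\,|\mathcal{G}(F)|}.
\]

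The first step is to reduce to the pointwise inequality $N(v_B) \le N(v_A)$ for every $v_B \in B \setminus N_F(u)$ and $v_A \in V_R \setminus B$. Since $\mathcal{G}(F)$ is invariant under permutations of $V_R \setminus B$ (such permutations fix $F$ pointwise because $F$-edges only touch $B$), $N(v_A)$ takes a common value $\alpha$ over $v_A \notin B$. Using $\sum_{v \in V_R \setminus N_F(u)} N(v) = (t - \deg_F(u))|\mathcal{G}(F)|$, the pointwise inequality gives
\[
\Pr[\deg_F(v) > 0] \le \frac{|B \setminus N_F(u)|\,\alpha}{|B \setminus N_F(u)|\,\alpha + |V_R \setminus B|\,\alpha} = \frac{|B| - |N_F(u)|}{|V_R| - |N_F(u)|} \le \frac{|B|}{|V_R|},
\]
where the last step uses $|N_F(u)| \le |B|$.

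To establish the pointwise inequality I will construct an injection $\phi \colon \mathcal{G}_{v_B} \hookrightarrow \mathcal{G}_{v_A}$, where $\mathcal{G}_v = \{G \in \mathcal{G}(F) : \{u,v\} \in G\}$. On $\mathcal{G}_{v_B} \cap \mathcal{G}_{v_A}$, set $\phi$ to be the identity. For $G \in \mathcal{G}_{v_B} \setminus \mathcal{G}_{v_A}$, observe that $N_G(v_A) \setminus N_G(v_B) \ne \emptyset$: both neighborhoods have size $s$, and $u \in N_G(v_B) \setminus N_G(v_A)$, so $N_G(v_A)$ cannot be contained in $N_G(v_B)$. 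Pick a canonical $u'$ from this set and set
\[
\phi(G) := G \,\triangle\, \{\{u,v_A\},\{u,v_B\},\{u',v_A\},\{u',v_B\}\}.
\]
The swap preserves biregularity, and it preserves $F$ because $v_B \notin N_F(u)$ forces $\{u,v_B\} \notin F$ while $\deg_F(v_A) = 0$ forces $\{u',v_A\} \notin F$; thus $\phi(G) \in \mathcal{G}_{v_A}$.

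The main obstacle is choosing the canonical $u'$ so that $\phi$ is injective. The symmetric difference $D = N_G(v_A) \triangle N_G(v_B)$ is invariant under the swap, but the partition of $D$ into $v_A$- and $v_B$-sides changes: $u$ and $u'$ exchange sides. The naive ``$u' = \min(N_G(v_A) \setminus N_G(v_B))$'' rule is not self-consistent, because the analogous minimum computed in $\phi(G)$ from $N_{\phi(G)}(v_B) \setminus N_{\phi(G)}(v_A) = (N_G(v_B) \setminus N_G(v_A) \setminus \{u\}) \cup \{u'\}$ need not equal $u'$. I plan to resolve this by defining a canonical pairing on $D$ that depends only on a fixed total order of $V_L$ and is therefore preserved by the swap, and taking $u'$ to be the partner of $u$ in this pairing; a recursive/fallback rule handles the case when $u$'s partner initially sits on the wrong side of the partition. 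Once this injection is established, the pointwise inequality follows, and the proof is complete by the reduction above.
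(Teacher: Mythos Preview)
Your overall strategy matches the paper's: reduce to the pointwise comparison $\Pr[\{u,v_A\}\in G^*\mid F\subseteq E^*] \ge \Pr[\{u,v_B\}\in G^*\mid F\subseteq E^*]$ for $v_A\notin B$, $v_B\in B\setminus N_F(u)$, and then prove that comparison by an injection between the corresponding graph families. Your reduction step is correct and essentially the same as the paper's.

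The gap is in the injection. A single four-edge switch $\{u,u'\}\times\{v_A,v_B\}$ works only if you can choose $u'$ so that the map is injective, and your proposed fix---a canonical pairing on $D=N_G(v_A)\triangle N_G(v_B)$ plus an unspecified ``recursive/fallback rule'' for when $u$'s partner lands on the wrong side---is not a proof. The difficulty is real: a rule for $u'$ depending only on $D$ and the order on $V_L$ gives no handle on which side of the partition each element of $D$ sat before the swap, so distinct partitions of $D$ can collide.

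The paper avoids this by abandoning the single-vertex swap. With $S=N_G(v_A)\cap N_G(v_B)$, $T=N_F(v_B)\setminus S$, $U=N_G(v_B)\setminus(S\cup T)$, and $W=(N_G(v_A)\cup N_G(v_B))\setminus(S\cup T)$, one has $u\in U$, $|U|=k\le |W|/2$, and crucially $S$, $T$, $W$ are all recoverable from the image graph. The paper then fixes, once and for all, an arbitrary injection
\[
g_{W,k}\colon\{U\subseteq W: |U|=k,\ u\in U\}\hookrightarrow\{U'\subseteq W: |U'|=k,\ u\notin U'\},
\]
which exists by the trivial count $\binom{|W|-1}{k-1}\le\binom{|W|-1}{k}$, and defines the image by replacing $U$ with $g_{W,k}(U)$ inside $N(v_B)$ and complementarily inside $N(v_A)$. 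Injectivity is then immediate from the injectivity of $g_{W,k}$; no explicit swap rule is needed. If you want to salvage the single-swap version, you would have to exhibit a perfect matching in the bipartite graph on $k$-subsets of $W$ containing versus not containing $u$, with edges given by single swaps---this exists by Hall's theorem since that graph is $k$-regular, but it is more work than the paper's set-level argument and is not what your ``pairing plus fallback'' sketch provides.
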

	\begin{proof}
		We only prove the first claim. The second claim follows by symmetry.
		
		Let $V_R' = V_R\setminus N_F(u)$ and $V_R'' = \{x\in V_R\mid \deg_F(x)=0\}$. Fix two vertices $z\in V_R'$, $y\in V_R''$. We claim that 
		\begin{equation} \label{eq:NAEdges}
			\PROver{G^*}{y\in N_{G^*}(u)\mid F\subseteq E^*} \ge \PROver{G^*}{z\in N_{G^*}(u)\mid F\subseteq E^*} \enspace.
		\end{equation}
		This suffices to prove the Lemma, since
		\begin{align*}\PROver{G^*,v}{\deg_F(v)=0\mid F\subseteq E^*} &= \sum_{x\in V_R''}\PROver{G^*,v}{v=x\mid F\subseteq E^*} \\ &= \frac{1}{t-\deg_F(u)}\cdot \sum_{x\in V_R''}\PROver{G^*}{x\in N_{G^*}(u)\mid F\subseteq E^*} \\ &\ge  \frac{1}{t-\deg_F(u)}\cdot \frac{\inabs {V_R''}}{\inabs {V_R'}}\cdot \sum_{x\in V_R'}\PROver{G^*}{x\in N_{G^*}(u)\mid F\subseteq E^*}  \\ &=\frac{\inabs {V_R''}}{\inabs {V_R'}}   \ge \frac{\inabs {{V_R''}}}{\inabs{V_R}} \enspace,
		\end{align*}
		where the first inequality follows from \cref{eq:NAEdges}.
		
		We turn to proving \cref{eq:NAEdges}. For every set $W\subseteq V_L$ such that $u\in W$, and for every $1\le k\le \frac{|W|}2$, fix some injective map
		\begin{equation*}
		g_{W,k} \colon \{U\subseteq W \mid |U|=k \text{ and }u\in U\}\to \{U\subseteq W \mid |U|=k \text{ and }u\notin U\} \enspace.
		\end{equation*}
		Since $k\le \frac{|W|}2$, the right set is at least as large as the left one, and thus such a map exists.
		
		Let 
		\begin{equation*}\cY = \inset{G\in\cG_{m,n,s,t}\mid F\cup \{\{u,y\}\}\subseteq E(G)}
		\end{equation*}
		and 
		\begin{equation*}
		\cZ = \inset{G\in \cG_{m,n,s,t}\mid F\cup \{\{u,z\}\} \subseteq E(G)} \enspace.
		\end{equation*}
		Since the ratio between the left and right sides of \cref{eq:NAEdges} is $\frac{|\cY|}{|\cZ|}$, it suffices to show that $|\cY| \ge |\cZ|$. We do so by proving the existence an injection $f \colon \cZ\setminus \cY\to \cY\setminus \cZ$.
		
		We define the injection $f$ by fixing a graph $G = (V_L,V_R,E)\in \cZ\setminus \cY$ and describing its image $f(G)$. Let $S = N_G(y) \cap N_G(z)$ and $T = N_F(z) \setminus S$. Denote $U = N_G(z) \setminus (S\cup T)$,  $V = N_G(y) \setminus S$,  and $W = (N_G(y) \cup N_G(z)) \setminus (S\cup T)$.
		We can write the relevant neighbor sets as disjoint unions:
		\begin{equation*}
		N_G(y) = S\sqcup V \enspace, \quad 
		N_G(z) = S\sqcup T\sqcup U \enspace, \quad \text{and} \quad
		W = U\sqcup V \enspace.
		\end{equation*}
		Our assumption that $G\in \cZ\setminus \cY$ yields $u\in N_G(z)\setminus N_G(y) = U\sqcup T$.  Since $u\notin N_F(z) \supseteq T$, it follows that $u\in U$. Let $k = |U| = s - |S| - |T|$. Note that $|V| = s - |S| \ge k$, and so $|W| = |U|+|V| \ge 2k$.  Hence, $g_{W,k}(U)$ is well-defined.
		
		To describe the graph $f(G)$ we specify the neighbor sets, with regard to $f(G)$, of every vertex $x\in V_R$. Let
		\begin{equation*}
		N_{f(G)}(x) = \begin{cases}
			S \sqcup T\sqcup g_{W,k}(U) 
			&\text{if }x=z\\
			S \sqcup \inparen{W\setminus g_{W,k}(U)} &\text{if }x=y\\
			N_G(x) &\text{if }x\notin \{y,z\} \enspace.
		\end{cases}
		\end{equation*}
		
		Note that $G$ and $f(G)$ agree on all edges, except for some of the edges between $W$ and  $\{y,z\}$. In particular, observe that no edge of $F$ connects $W$ to $\{y,z\}$. Indeed, $U\cap N_F(z)$ by definition, and $\deg_F(y)=0$ due to our assumption that $y \in V_R''$. Thus, $F\subseteq E(f(G))$.
		
		Note that, in both $G$ and $f(G)$, every vertex in $W$ is connected to exactly one of $y$ and $z$. Hence, the left degrees of $G$ and $f(G)$ are identical. Since $\inabs{g_{W,k}(U)} = \inabs{U}$, the vertices $y$ and $z$ also have the same degree under $G$ and $f(G)$, implying that all the right degrees are identical. Consequently, $f(G)\in \cG_{m,n,s,t}$. By definition of $g_{W,k}$, we have $u\in W\setminus g_{W,k}(U)\subseteq N_{f(G)}(y)\setminus N_{f(G)}(z)$, and so $f(G)\in \cY\setminus \cZ$. 
		
		It remains to prove that $f$ is an injection. Indeed, suppose that $f(G) = f(G')$. It suffices to show that $G=G'$. Clearly, $G$ and $G'$ agree on every edge except, perhaps, for edges between $W$ and $\{y,z\}$. In particular, the set
		\begin{equation*}
		S = N_G(y)\cap N_G(z) = N_{G'}(y)\cap N_{G'}(z) = N_{f(G)}(y)\cap N_{f(G)}(z)
		\end{equation*}
		is well-defined in terms of $f(G)$. Since $g_{W,k}$ is injective, we have
		\begin{equation*}
		N_G(z)\cap W = N_{G'}(z)\cap W = g_{W,k}^{-1}\inparen{N_{f(G)}(z)\setminus (S\sqcup T)}
		\end{equation*}
		and
		\begin{equation*}
		N_G(y)\cap W = N_{G'}(y)\cap W = W\setminus N_G(z) \enspace,
		\end{equation*}
		so that $G=G'$.
	\end{proof}
	
	\subsection{Proof of \cref{prop:goodGraph}: scarcity of cycles}
	To prove \cref{prop:goodGraph}, we show the following lemma.
	
	\begin{lemma}
		\label{lem:treedepth}
		Let $\ell \in \N$ and $v^*\in V_L$. Then, 
		\begin{equation}\label{eq:prContainsCycle}
			\PR{\ball_{G^*}(v^*,2\ell+1)\text{ contains a cycle}} \le \frac{8\ell^{2}t^{2\ell+1}s^{2\ell}}{m} \enspace.
		\end{equation}
		and
		\begin{equation} \label{eq:prContains2Cycles}
			\PR{\ball_{G^*}(v^*,2\ell+1)\text{ contains two (possibly intersecting) cycles}} \le\frac{\ell^3 (ts)^{O(\ell)}}{m^2} \enspace.
		\end{equation}
	\end{lemma}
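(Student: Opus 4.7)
\textbf{Proof plan for \cref{lem:treedepth}.} The strategy is to expose the ball $\ball_{G^*}(v^*,2\ell+1)$ via a BFS-style revealing process, starting at $v^*$ and moving outward one layer at a time. At each step, we have a viable set $F$ of revealed edges, pick a current frontier vertex $u$ with unrevealed edges, and reveal a new neighbor $v$. A cycle is created in the ball exactly when a revealed edge leads to a vertex that is already ``old,'' i.e., already touches $F$. Since any cycle in $\ball_{G^*}(v^*,2\ell+1)$ manifests as at least one such old-neighbor event during the BFS, and since the total ``cyclomatic deficit'' of the ball equals the number of such events, a union bound over edges suffices for the first claim, and a union bound over ordered pairs of edges suffices for the second.

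For the first bound, I would write $P(\text{ball contains a cycle}) \le \EE[X]$, where $X$ counts old-neighbor events during the BFS. At any moment during the exposure, the set of old vertices on either side is contained in the ball, which by a standard layer-by-layer count has at most $2t(t-1)^\ell(s-1)^\ell \le 2 t^{\ell+1} s^\ell$ vertices in $V_R$ and a comparable count in $V_L$. The total number of edges revealed is likewise bounded by the number of edges in the ball, which is at most $O(\ell\, t(ts)^\ell)$. \cref{lem:NAedges} gives, for every revealed edge, a per-edge bound of $|\text{old vertices on the far side}|/m$ (or $/n$, but $1/m \ge 1/n$). Multiplying out and absorbing the layer-by-layer accounting into an $O(\ell^2)$ factor (one $\ell$ from summing edges across $O(\ell)$ layers, another from summing the old-vertex counts) yields the stated bound $8\ell^2 t^{2\ell+1}s^{2\ell}/m$; the shaving of a factor of $t$ relative to a naive bound comes from the observation that when we are exploring from a depth-$2k$ vertex in $V_L$, the number of currently old $V_R$ vertices is only $O(t(ts)^{k-1})$ rather than the full $O(t(ts)^\ell)$.

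For the second claim, I would bound $P(X \ge 2) \le \EE\!\left[\binom{X}{2}\right] \le \sum_{e_1 \ne e_2} P(e_1 \text{ and } e_2 \text{ are both old-neighbor events})$. The key is that \cref{lem:NAedges} can be applied sequentially: after conditioning on the state of $F$ just before $e_1$ is revealed and on $e_1$ being an old-neighbor event, we extend $F$ by $e_1$ (which remains viable) and apply the lemma again to $e_2$. This gives a per-pair bound of $O((t(ts)^\ell/m)^2)$. Summing over all $O((\ell\, t(ts)^\ell)^2)$ ordered pairs of potential BFS edges, the total is $\ell^3 (ts)^{O(\ell)}/m^2$, where the additional $\ell$ absorbs the loose bookkeeping in the pair count.

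The main obstacle is bookkeeping rather than ideas: ensuring that the BFS exposure respects the conditioning structure required by \cref{lem:NAedges} (so that $F$ is always a viable set and the vertex $u$ being explored has $\deg_F(u) < t$ or $< s$), and controlling the exact ``old-vertex'' counts layer by layer tightly enough to match the stated constants. Provided this accounting is done carefully, the bound falls out of a single union bound (for the first claim) and a two-step conditional union bound (for the second).
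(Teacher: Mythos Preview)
Your BFS-plus-union-bound approach differs from the paper's argument and, as you have set it up, does not recover the bound in \cref{eq:prContainsCycle} as stated. The paper instead runs a single random non-backtracking walk $v_0 = v^*, v_1, \ldots, v_{4\ell+1}$ and lets $T$ be the event that the walk ever revisits a vertex. Applying \cref{lem:NAedges} at step $i$ with $F = \{\{v_{j-1},v_j\} : j < i\}$ bounds the per-step collision probability by $\tfrac{i}{2n}$ or $\tfrac{i-1}{2m}$, and summing over $i \le 4\ell+1$ gives $\PR{T} \le \tfrac{8\ell^2}{m}$. If the ball contains a cycle then some run of the walk detects it; since there are at most $t^{2\ell+1}s^{2\ell}$ possible runs, $\PR{T \mid C} \ge (t^{2\ell+1}s^{2\ell})^{-1}$, and $\PR{C} \le \PR{T}/\PR{T \mid C}$ yields exactly \cref{eq:prContainsCycle}. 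For \cref{eq:prContains2Cycles} the paper iterates: after the walk finds one cycle, it launches a second walk from a random vertex on it.

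The point where your accounting slips is the claimed $\ell^2$ prefactor. The paper's walk reveals only $O(\ell)$ edges, so its collision sum is the arithmetic series $\sum_i O(i/m) \asymp \ell^2/m$. Your BFS reveals all $\Theta\big(t(st)^\ell\big)$ edges of the ball, and the dominant contribution to $\EE[X]$ comes from same-layer collisions at the deepest layer, roughly $\binom{t(st)^\ell}{2}/m \approx t^{2\ell+2}s^{2\ell}/m$, not $\ell^2 t^{2\ell+1}s^{2\ell}/m$. The ``one $\ell$ from summing over layers, another from old-vertex counts'' reasoning fails because both sums are geometric and dominated by their last term, and the proposed shaving of a factor of $t$ fails because while exploring depth $2k$ the already-revealed depth-$(2k{+}1)$ vertices are themselves old and number up to $t(st)^k$, not $t(st)^{k-1}$. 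Your resulting bound is still of the form $(ts)^{O(\ell)}/m$, which suffices for \cref{prop:goodGraph} after shrinking the constant $c$, and your second-moment argument for \cref{eq:prContains2Cycles} is fine since the target there already carries an $O(\ell)$ in the exponent; but it does not give the explicit form of \cref{eq:prContainsCycle}.
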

    \cref{prop:goodGraph} follows readily from \cref{lem:treedepth}. Indeed, the first property holds with high probability for some arbitrarily chosen vertex $v^*$ due to  \cref{eq:prContainsCycle}, provided that $c$ is small enough. The second property holds by \cref{eq:prContains2Cycles}, via a union bound over all vertices in $V_L$.
	
	\begin{proof}[Proof of \cref{lem:treedepth}]
		\item \paragraph{Proof of \cref{eq:prContainsCycle}:} Consider the following randomized algorithm.
		\begin{algorithm}[H]\caption{Detect a cycle near $v^*$\label{alg:DetectCycle}} \begin{algorithmic}[1]
				\State Set $v_0 = v^*$ and set $v_1$ be a uniformly random neighbor of $v_0$.
				\For{$i=2,\dotsc, 4\ell+1$}
				\State Sample $v_i$ uniformly from $N_{G^*}(v_{i-1})\setminus \{v_{i-2}\}$
				\If {$v_i \in \{v_0,\dotsc, v_{i-3}\}$}
				\State accept.
				\EndIf
				\EndFor
				\State reject.
			\end{algorithmic}
		\end{algorithm}
		
		Let $T_i$ denote the event that Algorithm \ref{alg:DetectCycle} reaches the $i$-th step and accepts on that step. Let $T = \bigcup_{i=1}^{4\ell+1} T_i$ denote the event that the algorithm eventually accepts.
		
		Lemma \ref{lem:NAedges}, applied to the edge set $\inset{\inset{v_{j-1},v_j}\mid 1\le j\le i-1}$ and to the vertex $v_{i-1}$, yields $$\PR{T_i\mid \bigcap_{j=2}^{i-1}\overline{T_j}} \le \begin{cases}
			\frac{i}{2|V_L|} &\text{if }$i$ \text{ is even}\\
			\frac{i-1}{2|V_R|} &\text{if }$i$ \text{ is odd} \enspace.\\
		\end{cases}
		$$ 
		Thus,
		$$\PR{T} \le \sum_{i=2}^{4\ell+1} \PR{T_i\mid \bigcap_{j=2}^{i-1}\overline{T_j}} \le  \frac{4\ell^{2}}{|V_L|} + \frac{4\ell^2}{|V_R|} \enspace.$$
		
		Let $C$ denote the event that $\ball_{G^*}(v^*,2\ell+1)$ contains a cycle. If such a cycle exists, at least one run of Algorithm \ref{alg:DetectCycle} will detect it. Since there are at most $t^{2\ell+1}\cdot s^{2\ell}$ possible runs,
		$$\PR{T\mid C} \ge \frac{1}{t^{2\ell+1}s^{2\ell}}\enspace.$$
		Therefore,
		$$\frac{4\ell^{2}}{|V_L|} + \frac{4\ell^2}{|V_R|} \ge \PR{T} \ge \PR{C}\cdot \PR{T\mid C} \ge \frac{\PR{C}}{t^{2\ell+1}s^{2\ell}}\enspace,$$
		and so,
		$$\PR{C} \le \ell^{2}t^{2\ell+1}s^{2\ell}\cdot\inparen{\frac 4{m} + \frac{4}{n}} \le \frac{8\ell^{2}t^{2\ell+1}s^{2\ell}}{m}\enspace. $$
		
		\paragraph{Proof of \cref{eq:prContains2Cycles}:}
		We prove \cref{eq:prContains2Cycles} along similar lines to \cref{eq:prContainsCycle}. We define another algorithm, which attempts to detect two cycles near $v^*$.
		\begin{algorithm}[H]\caption{Detect two cycles near $v^*$\label{alg:Detect2Cycles}} \begin{algorithmic}[1]
				\State Use Algorithm \ref{alg:DetectCycle} to seek a simple cycle in $\ball_{G^*}(v^*, 2\ell+1)$. \label{line:CallToInner}
				\If{a cycle $u_0,\dotsc,u_{r-1},u_0$ ($r < 4\ell + 1$) has been found}
				\State Sample a random $0\le j\le r-1$.
				\State Let $w_0 = u_j$. 
				\State Uniformly sample $w_1$ from $N_{{G^*}}(u_j)\setminus \{u_{j-1},u_{j+1}\}$ (where $j\pm 1$ is taken $\mod r$).
				\State In a similar fashion to Algorithm \ref{alg:DetectCycle}, seek a path $w_0,w_1,\dotsc, w_q$ ($q\le 4\ell +1$), where $w_q \in \{w_0,\dotsc, w_{q-3}\}\cup \{u_0,\dotsc, u_{r-1}\}$.
				\If{such a path has been found}
				\State accept.
				\Else
				\State reject.
				\EndIf
				\Else
				\State reject. 
				\EndIf
			\end{algorithmic}
		\end{algorithm}
		
		As before, we denote by $T$ the event that the call to Algorithm \ref{alg:DetectCycle} in Line \ref{line:CallToInner} accepts. We denote by $T'$ the event that Algorithm \ref{alg:Detect2Cycles} accepts.
		
		We bound $\PR{T'\mid T}$ similarly to our bound of $\PR{T}$. By Lemma \ref{lem:NAedges}, the probability that $w_i\in \{w_0,\dotsc, w_{i-3} \cup \{u_0,\dotsc, u_{r-1}\}$ is at most $\frac{i+r}{2|V_L|}$ (if $w_{i-1}\in V_R$) or $\frac{i+r}{2|V_R|}$ (if $w_{i-1} \in V_L$). Hence,
		$$\PR{T'\mid T} \le \frac{32\ell}{|V_L|} + \frac{32\ell}{|V_R|}\enspace,$$
		and so
		$$\PR{T'} = \PR{T'\mid T}\cdot \PR{T} \le \inparen{\frac{32\ell}{|V_L|} + \frac{32\ell}{|V_R|}} \cdot \inparen{\frac{4\ell^2}{|V_L|}+\frac{4\ell^2}{|V_R|}}\le O\inparen{\frac{\ell^3}{|V_L|^2}}\enspace.$$
		
		Let $C'$ denote the event that $\ball_{{G^*}}(v^*,2\ell+1)$ contains two cycles. As in the proof of \cref{eq:prContainsCycle}, $C'$ implies that at least one run of Algorithm \ref{alg:Detect2Cycles} accepts. Thus,
		$$\PR{T'\mid C'} \ge \frac{1}{(4\ell+1)\cdot t^{4\ell+2}s^{4\ell+2}}\enspace.$$
		We conclude that
		\begin{equation*}
		\PR{C'} \le \frac{\PR{T'}}{\PR{T'\mid C'}} \le \frac{\ell^3 (ts)^{O(\ell)}}{m^2} \enspace .\qedhere
		\end{equation*}
	\end{proof}

	\subsection{Proof of \cref{prop:uniqueExpansion}: unique expansion}
	To prove \cref{prop:uniqueExpansion}, we show the following lemma.
	\begin{lemma}\label{lem:ProbSetNotExpanding}
	    Let $S\subseteq V_L$ and denote $k = |S|$. Then
	    $$\PROver{G^*}{S\text{ has at most $(t-2)|S|$ unique neighbors}} \le \inparen{\frac{et^2k}{2m}}^{2k}.$$
	\end{lemma}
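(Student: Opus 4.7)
The plan is to reduce the bad event to the existence of a combinatorial ``collision witness'' and then union-bound over all such witnesses using \cref{lem:NAedges}. The starting structural observation is the following: setting $d_v := |N_{G^*}(v) \cap S|$, we have $\sum_v d_v = tk$ by the left-regularity of $G^*$, so $\sum_{v : d_v \geq 2} d_v = tk - |U(S)|$. Hence, the assumption $|U(S)| \leq (t-2)k$ is equivalent to
\begin{equation*}
\sum_{v : d_v \geq 2} d_v \;\geq\; 2k.
\end{equation*}
In particular, one can identify $2k$ distinct edges $e_1, \ldots, e_{2k}$ out of $E(S, V_R)$ such that each $e_i$ shares its right endpoint with some other $S$-edge, which I will call its \emph{partner} $e_i'$.

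Second, I will union-bound over all possible witness configurations, where a configuration specifies the ordered tuple of $2k$ ``bad'' edges together with a chosen partner for each. The number of such configurations is at most $\binom{tk}{2k} \cdot (tk)^{2k}$: one chooses $2k$ of the $tk$ edges from $S$, and for each of these $2k$ edges, designates a partner from among the at most $tk$ remaining $S$-edges. For any fixed configuration, the central probabilistic claim is that the probability that each of the $2k$ designated partner-pairs actually shares a right endpoint in $G^*$ is at most $(1/m)^{2k}$ (up to absolute constants). This is established by revealing the $S$-edges one at a time in a suitable order and iteratively applying \cref{lem:NAedges}: each time a partner edge is revealed, the lemma bounds the conditional probability that its right endpoint equals a specified previously-marked right vertex by $|\{\text{marked}\}|/m$, and by revealing the ``witnessed'' edge before its partner the marked set can be kept of size $O(1)$.

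Combining the ingredients yields
\begin{equation*}
\PROver{G^*}{|U(S)| \leq (t-2)k} \;\leq\; \binom{tk}{2k} \cdot (tk)^{2k} \cdot \left(\frac{1}{m}\right)^{2k} \;\leq\; \left(\frac{etk}{2k}\right)^{2k} \cdot \left(\frac{tk}{m}\right)^{2k} \;=\; \left(\frac{et^2 k}{2m}\right)^{2k},
\end{equation*}
where the first inequality on the second line uses the standard bound $\binom{a}{b} \leq (ea/b)^b$.

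The main obstacle I anticipate is the probability bound in the second step when the $2k$ chosen bad edges and their partners exhibit coincidences, for example when a single $S$-edge is designated as the partner of several bad edges, or when a bad edge coincides with another's partner. In these overlapping cases, the naive ``product of independent $1/m$ factors'' bound need not apply directly; the fix is to carry out the sequential revelation of $S$-edges together with a careful inductive application of \cref{lem:NAedges}, maintaining the invariant that when each new partner is exposed the relevant set of ``previously marked'' right vertices is tightly controlled, so that each of the $2k$ factors remains $O(1)/m$. With this bookkeeping, the displayed bound goes through and yields the claim.
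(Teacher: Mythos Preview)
There is a genuine gap: your per-configuration bound of $(1/m)^{2k}$ is false in exactly the overlapping case you flag, and the proposed fix cannot close it. Take $2k=4$ with partners chosen as $e_1'=e_2$, $e_2'=e_1$, $e_3'=e_4$, $e_4'=e_3$. The event ``all four designated pairs share a right endpoint'' collapses to two constraints ($e_1,e_2$ share and $e_3,e_4$ share), with probability of order $1/m^{2}$, not $1/m^{4}$; in general the fully-paired configuration on $2k$ edges has probability of order $m^{-k}$, which exceeds $(C/m)^{2k}$ for any constant $C$ once $m$ is large. The obstruction is not the size of the marked set but that coincidences among the pairs reduce the number of independent constraints below $2k$, and no revelation order can manufacture the missing factors. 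You also cannot avoid this by insisting partners lie outside $\{e_1,\dots,e_{2k}\}$: when exactly $k$ right vertices have $d_v=2$ (so $\sum_{d_v\ge 2}d_v=2k$ on the nose) the $2k$ bad edges are forced and every partner is another bad edge. A secondary issue is that \cref{lem:NAedges} as stated only bounds the probability of hitting \emph{some} already-touched right vertex, not a single specified one, so a pointwise $1/m$ per pair is not what the lemma gives you.

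The paper avoids partners entirely. It fixes a revelation order $f_1,\dots,f_{tk}$ of the $S$-edges and calls step $i$ a \emph{collision} if $f_i$ lands on an already-revealed right vertex. \cref{lem:NAedges} gives directly $\PR{\text{collision at step }i\mid\text{past}}\le |V_{i-1}|/m\le tk/m$. The bad event is taken to force at least $2k$ collision steps, and one union-bounds only over their $\binom{tk}{2k}$ possible positions, obtaining $\binom{tk}{2k}(tk/m)^{2k}$. The factor $tk$ that you spent naming a partner is instead absorbed into the collision probability, so no pointwise bound and no overlap analysis is needed. (One caveat worth noting: ``$\ge 2k$ collisions'' is literally equivalent to $|N(S)|\le(t-2)k$, not $|U(S)|\le(t-2)k$; your opening reduction to $\sum_{d_v\ge 2}d_v\ge 2k$ is the one that matches the unique-neighbor statement, but it only forces $\ge k$ collisions. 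The paper glosses over this distinction.)
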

	Our proof of \cref{lem:ProbSetNotExpanding} is an adaptation of the proof of~\cite[Theorem 4.4]{Vadhan12}.

		\cref{prop:uniqueExpansion} then follows from \cref{lem:ProbSetNotExpanding} by the union bound. Let $\gamma = \frac{c\alpha^2}{t^4}$. Then,
	\begin{align*}
	    &\PROver{G^*}{G^*\text{ is not a $\inparen{\gamma, \frac 2t}$-unique expander}} \\ \le &\sum_{k=1}^{\gamma n} \sum_{\substack{S\subseteq [n]\\ |S|=k}}\PROver{G^*}{S\text{ has less than $(t-2)|S|$ unique neighbors}} \\ 
	    \le &\sum_{k=1}^{\gamma n} \binom nk \cdot \inparen{\frac{et^2k}{2m}}^{2k} \le \sum_{k=1}^{\gamma n} \inparen{\frac{ne}{k}}^{k}\cdot \inparen{\frac{et^2k}{2m}}^{2k} = \sum_{k=1}^{\gamma n} \inparen{\frac{e^3 t^4 k }{\alpha^2 n}}^k \enspace. \numberthis \label{eq:probNotExpandingSum}
	\end{align*}
	
	Taking $\gamma \le \frac{\alpha^2}{2e^3t^4}$, we have $\frac{e^3 t^4 k }{\alpha^2 n} < \frac 12$ whenever $k\le \gamma n$. Hence, the sum on the right-hand side of \cref{eq:probNotExpandingSum} is dominated by its first term, namely, it is $O\inparen{\frac{t^4}{\alpha^2 n}}$.

	\begin{proof}[Proof of \cref{lem:ProbSetNotExpanding}]
	Write $S = \{v_1,\dotsc, v_k\}$. Consider a process in which the $tk$ edges touching $|S|$ are revealed in $tk$ steps, where the first $t$ steps reveal the neighbors of $v_1$, the next $t$ steps reveal those of $v_2$, and so on. 
	
	Let $V_i\subseteq V_R$ ($0\le i\le tk$) denote the set of neighbors of $S$ revealed by the $i$-th step. Note that $V_i=V_{i-1}$ if and only if the neighbor revealed in the $i$-th step has already been revealed in a previous step. Otherwise, $|V_i| = |V_{i-1}|+1$. By \cref{lem:NAedges}, the probability of the event $V_i= V_{i-1}$, conditioned on all previous steps, is at most $\frac{\inabs{V_{i-1}}}m \le \frac{tk}m$. Thus,
	\begin{flalign*}
	   &\PR{S\text{ has at most $(t-2)k$ unique neighbors}} = \PR{\text{at least $2k$ steps $i$ have }V_i=V_{i-1}} \\
	   &\le \binom{tk}{2k} \inparen{\frac{tk}m}^{2k} \le \inparen{\frac{etk}{2k}}^{2k}\cdot \inparen{\frac {tk}m}^{2k} = \inparen{\frac{et^2k}{2m}}^{2k} \enspace . \qedhere 
	\end{flalign*}
	\end{proof}

	\section{Limitations of $\ell_2$-spread}\label{sec:negative}
	Our main goal in this section is to prove \cref{mthm:ell2spreadneg} given \cref{mthm:singvalue}. We recall \cref{mthm:ell2spreadneg}.
	\elltwospreadneg*
	
    The proof of \cref{mthm:ell2spreadneg} relies on the following two lemmas.

	\begin{lemma}
		\label{lem:treevector}
		Let $A\in \cM_{m,n,s,t}$. Suppose that $G_A$ satisfies Property \ref{enum:NoCycle} of \cref{prop:goodGraph} with regard to some $\ell\in \N$. Then, there is a vector $x \in \R^n\setminus\{0^n\}$ with $\abs{\supp(x)} \leq 1 + 2 t(t-1)^{\ell-1}(s-1)^{\ell}$ and
		$$\frac{\norm{Ax}_p^p}{\norm{x}_p^p} \le t(t-1)^\ell(s-1)^{(1-p)\ell}$$
		for all $p \ge 1$.
		In particular,
		\begin{equation}\label{eq:treeVectorell2UpperBound}
		    \frac {\norm{Ax}_2^2}{\norm{x}_2^2} \le \frac{t(t-1)^\ell}{(s-1)^\ell}.
		\end{equation}
		Furthermore, $x$ can be computed in polynomial time given $A$.
	\end{lemma}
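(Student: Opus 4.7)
The plan is to construct the ``tree vector'' described in the proof overview, being careful about the sign pattern of $A$ (which the informal sketch in \cref{sec:negtechinques} implicitly ignores). Let $v^* \in V_L$ be a vertex such that $T := \ball_{G_A}(v^*, 2\ell+1)$ contains no cycle; such a $v^*$ exists by Property~\ref{enum:NoCycle}. Then $T$ is a rooted tree in which left vertices occupy even depths $0, 2, \dots, 2\ell$ and right vertices occupy odd depths $1, 3, \dots, 2\ell+1$, and by biregularity every internal right vertex at depth $2k+1$ has exactly one parent (left, depth $2k$) and $s-1$ children (left, depth $2k+2$), while every internal left vertex at depth $2k \ge 2$ has one parent and $t-1$ children.

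The vector $x$ will be supported on the left vertices of $T$. Set $x_{v^*} = 1$, and recursively, for each left vertex $u$ at depth $2k+2$ with parent $r \in V_R$ (at depth $2k+1$) and grandparent $v_r \in V_L$ (at depth $2k$), define
\[
x_u \;=\; -\frac{\sign(\{u, r\})\,\sign(\{v_r, r\})}{s-1}\, x_{v_r}.
\]
By induction on depth, $\abs{x_u} = (s-1)^{-k}$ whenever $u$ has depth $2k$, and in particular $\norm{x}_p^p \ge \abs{x_{v^*}}^p = 1$. The sign factors in the recursion are exactly what is needed so that, for every internal right vertex $r$ of $T$ (depth $2k+1$ with $k \le \ell - 1$), each of the $s-1$ children of $r$ contributes $-\sign(\{v_r,r\}) x_{v_r}/(s-1)$ to $(Ax)_r$, cancelling the parent's contribution $+\sign(\{v_r,r\}) x_{v_r}$; hence $(Ax)_r = 0$. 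Moreover $(Ax)_r = 0$ for $r \notin T$, since any neighbor of a vertex in $\supp(x) \subseteq T$ lies in $T$ by definition of the ball.

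The only right vertices that can contribute to $\norm{Ax}_p^p$ are thus the ``boundary'' vertices at depth $2\ell + 1$, each of which has exactly one neighbor in $\supp(x)$, namely its parent at depth $2\ell$; so $\abs{(Ax)_r} = (s-1)^{-\ell}$ at each such $r$. A level-by-level count gives $t(s-1)^\ell(t-1)^{\ell-1}$ left vertices at depth $2\ell$ and therefore $t(s-1)^\ell(t-1)^\ell$ boundary right vertices, yielding
\[
\norm{Ax}_p^p \;=\; t(s-1)^\ell(t-1)^\ell \cdot (s-1)^{-p\ell} \;=\; t(t-1)^\ell (s-1)^{(1-p)\ell},
\]
which combined with $\norm{x}_p^p \ge 1$ gives the claimed ratio. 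The support bound follows from $\abs{\supp(x)} = 1 + \sum_{k=1}^{\ell} t(s-1)^k(t-1)^{k-1}$: since $s, t \ge 3$ imply $(s-1)(t-1) \ge 4$, this geometric-type sum is bounded by twice its top term $t(s-1)^\ell (t-1)^{\ell-1}$. Polynomial-time computation is immediate: locate $v^*$ by a BFS of radius $2\ell+1 = O(\log n)$ from each candidate vertex, then fill in the values of $x$ by a single traversal of $T$. The only genuinely delicate step is the sign bookkeeping in the recursive definition of $x_u$, which is what adapts the $\sign \equiv 1$ sketch of \cref{sec:negtechinques} to an arbitrary $\pm 1$ signing of $A$; everything else is induction and a routine geometric-series estimate.
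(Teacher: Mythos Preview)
The proposal is correct and follows essentially the same approach as the paper: your recursive definition of $x_u$ unwinds to exactly the product formula the paper uses, and the subsequent verification that $(Ax)_r$ vanishes on internal right vertices, equals $(s-1)^{-\ell}$ in absolute value at the depth-$(2\ell+1)$ leaves, and vanishes outside $T$ is identical. Your treatment is in fact slightly more thorough, as you explicitly justify the support bound via the geometric-series estimate, which the paper leaves implicit.
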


	\begin{lemma}
		\label{lem:singvalrounding}
		Let $A \in \R^{m \times n}$ be a matrix, and let $x\in \R^n$ with $\twonorm x = 1$. Then, $\ker(A)$ contains a $\inparen{\inabs{\supp(x)}, \frac{\eps}{1-\eps}}$-$\ell_2$-compressible vector $y$, where $\eps = \frac{\twonorm{Ax}}{\sigma_{\min}(A)}$. Furthermore, $y$ can be computed in polynomial time given $A$ and $x$.
	\end{lemma}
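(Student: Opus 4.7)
The plan is exactly the ``projection'' argument previewed in Step 2 of Section 2.1: set $y$ to be the orthogonal projection of $x$ onto $\ker(A)$, and exploit the minimum (nonzero) singular value bound to control $\norm{y-x}_2$.

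Concretely, let $y := P_{\ker(A)}(x)$, so that $x-y$ lies in $\ker(A)^{\perp} = \mathrm{im}(A^{\top})$. By the singular value decomposition of $A$, the restriction of $A$ to $\ker(A)^{\perp}$ has smallest singular value $\sigma_{\min}(A)$ (interpreted as the smallest nonzero singular value if $A$ is rank-deficient), hence
\begin{equation*}
\twonorm{Ax} \;=\; \twonorm{A(x-y)} \;\geq\; \sigma_{\min}(A)\cdot \twonorm{x-y}.
\end{equation*}
Rearranging gives $\twonorm{x-y} \leq \twonorm{Ax}/\sigma_{\min}(A) = \eps$. (If $\eps \ge 1$ the lemma is vacuous, so we may assume $\eps<1$.)

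Now $x$ itself is $\inabs{\supp(x)}$-sparse and serves as the sparse approximant to $y$. The bound $\twonorm{y-x}\leq \eps$ together with the reverse triangle inequality
\begin{equation*}
\twonorm{y} \;\geq\; \twonorm{x}-\twonorm{x-y} \;\geq\; 1-\eps \;>\; 0
\end{equation*}
ensures that $y\neq 0$ and that
\begin{equation*}
\frac{\twonorm{y-x}}{\twonorm{y}} \;\leq\; \frac{\eps}{1-\eps},
\end{equation*}
which is exactly the $\bigl(\inabs{\supp(x)},\tfrac{\eps}{1-\eps}\bigr)$-$\ell_2$-compressibility of $y$.

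For the algorithmic claim, the orthogonal projection $y = x - A^{\top}(AA^{\top})^{+}Ax$ (using the Moore--Penrose pseudoinverse) can be computed in polynomial time by standard linear algebra. There is no real obstacle here: the only subtlety is handling the case where $A$ is rank-deficient, which is taken care of by interpreting $\sigma_{\min}(A)$ as the minimum nonzero singular value and using the pseudoinverse in place of $(AA^{\top})^{-1}$.
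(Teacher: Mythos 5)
Your proposal is correct and follows essentially the same argument as the paper: project $x$ orthogonally onto $\ker(A)$, use the SVD to get $\twonorm{x-y}\le \twonorm{Ax}/\sigma_{\min}(A)$, and apply the reverse triangle inequality $\twonorm{y}\ge 1-\eps$ to convert this into $\bigl(\inabs{\supp(x)},\tfrac{\eps}{1-\eps}\bigr)$-$\ell_2$-compressibility. Your explicit pseudoinverse formula and the remark about the vacuous case $\eps\ge 1$ are fine additions but do not change the route.
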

	
	\cref{lem:treevector,lem:singvalrounding} are proven later in this section. We first use them to prove \cref{mthm:ell2spreadneg}.
	\begin{proof}[Proof of \cref{mthm:ell2spreadneg} given \cref{mthm:singvalue} and \cref{lem:treevector,lem:singvalrounding}]
		Let $c$ denote the constant from \cref{prop:goodGraph}. Let $\ell = \floor{c'\frac{\log (m)}{\log (ts)}-1}$, where $c' = \min(c,\frac 12)$.
		
		With high probability, we sample a matrix $A$ such that the events in \cref{prop:goodGraph} and \cref{mthm:singvalue} all occur. For the remainder of the proof, we shall view $A$ as fixed, assuming that these events hold.
		
		In particular, our assumption implies the hypothesis of \cref{lem:treevector}. Hence, there is a vector $x\in \R^n$, computable in polynomial time, which satisfies \cref{eq:treeVectorell2UpperBound}. Applying \cref{lem:singvalrounding} to $\frac x{\twonorm x}$ yields a $\inparen{2(ts)^\ell,\frac{\eps}{1-\eps}}$-$\ell_2$-compressible vector in $y\in \ker(A)$, where $\eps = \frac{\twonorm{Ax}}{\sigma_{\min}(A)}\le \frac{\sqrt t\cdot \alpha^{\frac \ell2}}{\sigma_{\min}(A)}$ due to \cref{eq:treeVectorell2UpperBound}. As the event in \cref{mthm:singvalue} occurs, $$\sigma_{\min}(A) \ge \sqrt{s-1}-\inparen{1+o(1)}\cdot \sqrt{t-1}\ge  \sqrt{s} -\inparen{1+o(1)}\sqrt t = \sqrt{s}\cdot \inparen{1-\sqrt\alpha -o(1)},$$
		and so
		$$\eps \le \frac{\alpha^{\frac {\ell+1}2}}{1-\sqrt\alpha - o(1)} \le \frac{n^{-\Omega\inparen{\frac{\log (1/\alpha)}{\log s}}}}{1-\sqrt\alpha}\enspace.$$
		Also, $(ts)^\ell \le m^{c'}$, so $y$ is $\inparen{m^{c'}, \frac{n^{-\Omega\inparen{\frac{\log (1/\alpha)}{\log s}}}}{1-\sqrt\alpha}}$-$\ell_2$-compressible. By \cref{prop:companddist}, 
		$$\Delta_{1,2}\inparen{\ker A} \ge \frac{1}{n^{-\Omega(1)}+ \frac{n^{-\Omega\inparen{\frac{\log (1/\alpha)}{\log s}}}}{1-\sqrt\alpha}}\ge (1-\sqrt \alpha)\cdot n^{\Omega\inparen{\frac{\log(1/\alpha)}{\log s}}}\enspace,$$
		where we use that $n^{-\Omega(1)} \leq n^{-\Omega\inparen{\frac{\log (1/\alpha)}{\log s}}}$ as $t = \alpha s \geq 3$.		
		Finally, the last part of the theorem follows since $y$ can be computed from $A,x$ in polynomial time.
	    \end{proof}
	
	    In addition, \cref{lem:treevector} suffices to prove \cref{mprop:ellpspreadneg}.
	    \ellpspreadneg*
	    
	\begin{proof}[Proof of \cref{mprop:ellpspreadneg} given \cref{lem:treevector}]
	    Let $\ell = \floor{c\frac{\log m}{\log(ts)}}$, where $c$ is as in \cref{prop:goodGraph}. The graph $G_A$ satisfies Property \ref{enum:NoCycle} of \cref{prop:goodGraph} with high probability, and so we proceed assuming that this is the case. \cref{lem:treevector} now yields a vector $x\in \R^n\setminus \{0\}$ with
	    \begin{align*}
	        \frac{\norm{Ax}_p}{\norm{x}_p} &\le \inparen{t(t-1)^\ell(s-1)^{(1-p)\ell}}^{\frac 1p} = t^{\frac 1p}\inparen{\frac{t-1}{s-1}\cdot (s-1)^{2-p}}^{\frac \ell p}\le t^{\frac 1p}\inparen{\alpha\cdot (s-1)^{2-p}}^{\frac \ell p} \\ &\le t^{\frac 1p}\inparen{1+\eps}^{-\frac \ell p}\le  t^{\frac 1p} \cdot m^{-\Omega\inparen{\frac{\eps}{\log s}}} \ . \qedhere
	    \end{align*} 
	\end{proof}
	\subsection{Proof of \cref{lem:treevector}: constructing $x$}
	Write $G_A = \inparen{V_L,V_R,E}$ as in \cref{sec:SparseMatrices}. By our assumption of \cref{prop:goodGraph} -- Property \ref{enum:NoCycle}, there is a vertex $v^*\in V_L$ such that $\ball_G(v^*,2\ell+1)$ is isomorphic to a $(t,s)$-biregular tree of depth $2\ell+1$, rooted at $v^*$. Recall that a $(t,s)$-biregular tree is a tree in which a non-leaf vertex has degree $t$ (resp. $s$) if it belongs to $V_L$ (resp. $V_R$).
	
	We define the vector $x\in \R^n$ as follows. Given $v\in V_L$, if $v\notin \ball_G\inparen{v^*,2\ell+1}$, let $x_v = 0$. Otherwise, we have $v\in V_L\cap \ball_G(v^*,2\ell+1)$. Let $v^*=u_0,r_1, u_1, r_2, u_2, \dotsc,r_k, u_{k}=v$ ($0\le k\le \ell$) denote the unique simple path in $\ball_G(v^*,2\ell+1)$ from $v^*$ to $v$. Let
	$$x_v = \prod_{i=1}^{k} \inparen{\sign(u_{i-1},r_i) \cdot \sign(u_i,r_i) \cdot \frac {-1}{s-1}}.$$
	
	Clearly, $x$ is computable in polynomial time given $A$.
	
	We compute $(Ax)_r = \sum_{v\in N_G(r)}\sign(v,r)\cdot x_{v}$ for every $r\in V_R$ by considering three cases:
	\begin{itemize}
		\item If $r$ is an inner vertex in $\ball_G(v^*,2\ell+1)$, denote its parent in the tree by $v'\in V_L$ and its children by $v_1,\dotsc, v_{s-1}\in V_L$. Note that $$x_{v_i} = \frac{-1}{s-1} \cdot x_{v'} \cdot \sign(v',r)\cdot \sign(v_i,r)$$ for $1\le k\le s-1$. Hence,
		\begin{align*}
			(A x)_r &= \sign(v',r)\cdot x_{v'} + \sum_{k=i}^{s-1}\sign(v_i,r)\cdot x_{v_i} \\ &= \sign(v',r)\cdot x_{v'} + \sum_{i=1}^{s-1} \frac{-1}{s-1}\cdot x_{v'}\cdot  \sign(v_i,r)^2 \cdot \sign(v',r) = 0,
		\end{align*}
		where we used the fact that $\sign(v_i,r)\in \{1,-1\}$, and so $\sign(v_i,r)^2=1$.
		\item If $r$ lies outside of $\ball_G(v^*,2\ell+1)$ then $(Ax)_r=0$ since every neighbor $v\in V_L$ of $r$ has $x_v=0$.
		\item Finally, if $r$ is a leaf of the tree $\ball_G(v^*,2\ell+1)$, then $r$ has a unique neighbor $v\in v_R$ such that $x_v\ne 0$ (namely, $v$ is the parent of $r$ in the tree). This vertex $v$ is distance $2\ell$ from $v^*$, so $|x_v|=(s-1)^{-\ell}$. Thus, $|(Ax)_r| = (s-1)^{-\ell}.$
	\end{itemize}
	
	Since the tree has $t\cdot (s-1)^\ell\cdot (t-1)^\ell$, leaves, 
	$$\norm{Ax}_p^p = t\cdot (s-1)^\ell\cdot (t-1)^\ell \cdot (s-1)^{-p\ell} = t\cdot(s-1)^{(1-p)\ell}\cdot (t-1)^\ell.$$
	Clearly, since $x_{v^*} = 1$, we have $\norm{x}_p^p \ge 1$, and the lemma follows.
	
	\subsection{Proof of \cref{lem:singvalrounding}: rounding $x$ to $y \in \ker(A)$}
	
	Let $\Pi$ be the orthogonal projection onto $\ker(A)$, and let $\Pi^{\perp}$ be the projection onto the subspace orthogonal to $\ker(A)$. Let $y = \Pi x \in \ker(A)$ be the vector such that $\smnorm{x - y}_2 = \min_{z \in \ker(A)} \smnorm{x - z}_2$.
     By using the SVD of $A$, we observe that $\smnorm{Ax}_2 \geq \sigma_{\min}(A) \cdot \smnorm{\Pi^{\perp} x}_2 = \sigma_{\min}(A) \cdot \smnorm{x - y}_2$, as $\Pi^{\perp} x = x - \Pi x = x - y$. Hence, $\smnorm{x - y}_2 \leq \frac{\twonorm {Ax}}{\sigma_{\min}(A)} =: \eps$. Therefore,
	$$\smnorm{x-y}_2 \le \frac{\eps}{\smnorm{y}_2}\cdot \smnorm{y}_2 \le \frac{\eps}{\smnorm{x}_2 - \smnorm{x-y}_2}\cdot \smnorm{y}_2 \le \frac{\eps}{\smnorm{x}_2 - \eps}\cdot \smnorm{y}_2 \enspace.$$ 
	We conclude that $y$ is $\inparen{\inabs{\supp(x)}, \frac{\eps}{\smnorm{x}_2-\eps}}$-$\ell_2$-compressible.

	\section{Positive results for $\ell_p$-RIP and $\ell_p$-spread for $p \in [1,2)$}
	\label{sec:ellpspreadpos}
	In this section, we prove \cref{mthm:ExpansionToRIP,mthm:ellpspreadpos,mthm:ExplicitConstruction}.
    We restate \cref{mthm:ExpansionToRIP} below.
    \ExpansionToRIP*
    The proof of \cref{mthm:ExpansionToRIP} relies on the following technical lemma, which we prove in \cref{sec:ellprip}.
       
	\begin{lemma}
	\label{lem:ellprip}
	Let $A\in \{0,1,-1\}^{m\times n}$ such that $G_A$ is a $t$-left-regular, $(\gamma, \mu)$-unique expander. Let $s_{\max}$ denote the maximum degree of a right vertex of $G_A$. Then for any $p \geq 1$, $\delta_1 > 0$, $\delta_2 \in (0,1)$, and $\gamma n$-sparse $x \in \R^n$,
	\begin{flalign}
	&\norm{Ax}_p^p \geq \left(\frac{t (1 - \mu)}{\left(1 + \delta_1 \right)^{p-1}}  - \frac{\mu t}{\delta_1^{p-1}} (s_{\max} - 1)^{p - 1}\right) \norm{x}_p^p \label{eq:ellpriplowerbound}\\
	&\norm{Ax}_p^p \leq \left(\frac{t}{\left(1 - \delta_2\right)^{p-1}} + \frac{\mu t}{\delta_2^{p-1}} (s_{\max}-1)^{p-1}\right)\norm{x}_p^p \label{eq:ellpripupperbound}\enspace.
	\end{flalign}
	\end{lemma}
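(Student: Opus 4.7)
The plan is to execute the strategy sketched in \cref{sec:techs}: using unique expansion, extract from the bipartite neighborhood of $S := \supp(x) \subseteq V_L$ (of size at most $\gamma n$) a ``tree-like'' assignment $v(\cdot) \colon N(S) \to S$ with $v(r) \in N(r) \cap S$ and $\abs{\{r \in N(S) : v(r) = v\}} \geq t(1-\mu)$ for every $v \in S$. Each $v \in S$ will then have at most $\mu t$ ``non-children'' $B(v) := \{r \in N(v) : v(r) \neq v\}$, and each $r \in N(S)$ will have at most $s_{\max}-1$ ``other parents'' $W_r := (N(r) \cap S) \setminus \{v(r)\}$.

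I will build this assignment by a peeling procedure. Order $S$ as $v_1, v_2, \ldots, v_k$ so that, letting $R_t := \{v_t, \ldots, v_k\}$, the vertex $v_t$ has $\geq t(1-\mu)$ unique neighbors in $R_t$; since $\abs{R_t} \leq \gamma n$, unique expansion applied to $R_t$ combined with averaging guarantees such a $v_t$ exists. Call its unique neighbors $C_t$ and set $v(r) := v_t$ for $r \in C_t$. A short verification shows the $C_t$'s partition $N(S)$: any $r \in N(S)$ lies in $C_{t^*}$ for $t^* := \max\{t : v_t \in N(r)\}$, because at step $t^*$ the only neighbor of $r$ in $R_{t^*}$ is $v_{t^*}$. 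Hence $v(\cdot)$ is defined on all of $N(S)$, and $\abs{B(v_t)} \leq t - \abs{C_t} \leq \mu t$.

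With this combinatorial setup in hand, the analytic work is short. For each $r \in N(S)$, the triangle inequality applied to $(Ax)_r = \sum_{u \in N(r) \cap S} \sign(u,r) x_u$ gives $\abs{\abs{(Ax)_r} - \abs{x_{v(r)}}} \leq \sum_{u \in W_r} \abs{x_u}$. I then invoke the convexity inequality $(a+b)^p \leq \alpha^{1-p} a^p + (1-\alpha)^{1-p} b^p$ (valid for $a,b \geq 0$, $\alpha \in (0,1)$, $p \geq 1$), choosing $\alpha = 1 - \delta_2$ for the upper bound and $\alpha = \frac{1}{1+\delta_1}$ for the lower bound, followed by H\"older's inequality $\bigl(\sum_{u \in W_r} \abs{x_u}\bigr)^p \leq (s_{\max}-1)^{p-1} \sum_{u \in W_r} \abs{x_u}^p$. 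Summing over $r \in N(S)$, the ``main'' contribution $\sum_r \abs{x_{v(r)}}^p = \sum_{v \in S} \abs{C(v)} \cdot \abs{x_v}^p$ lies between $t(1-\mu) \norm{x}_p^p$ and $t \norm{x}_p^p$, while the ``error'' contribution $\sum_r \sum_{u \in W_r} \abs{x_u}^p$ is at most $\mu t \norm{x}_p^p$ by double counting (each $u$ is counted exactly once per $r \in B(u)$, and $\abs{B(u)} \leq \mu t$). Substituting yields both \cref{eq:ellpriplowerbound} and \cref{eq:ellpripupperbound}.

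The main obstacle will be arranging the peeling so that simultaneously (i) every $r \in N(S)$ is assigned a parent, and (ii) every $v \in S$ has at most $\mu t$ non-children. Unique expansion handles (ii) directly at each step, but (i) is the more subtle point: it works only because $v_t$'s ``matches'' are declared to be its unique neighbors in the \emph{current remaining set} $R_t$, rather than its unique neighbors in all of $S$; with this definition, each $r \in N(S)$ becomes uniquely visible at the well-defined step $t^* = \max\{t : v_t \in N(r)\}$, and so is picked up as a child of $v_{t^*}$. Once the peeling is set up correctly, the remainder is a routine calculation via standard norm inequalities.
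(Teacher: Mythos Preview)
Your proposal is correct and follows essentially the same approach as the paper: the peeling construction is identical to the paper's \cref{claim:MatchingExists}, and your convexity inequality $(a+b)^p \le \alpha^{1-p}a^p + (1-\alpha)^{1-p}b^p$ is exactly equivalent to the paper's potential-function argument (which phrases the same pointwise bound via the optimization $\min_{\beta\in[0,1]}(1-\beta)^p+\kappa^{p-1}\beta^p=(1+1/\kappa)^{1-p}$ and its sup analogue). One cosmetic point: you overload $t$ as both the left degree and the peeling index, which should be renamed.
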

    Note that for $p \geq 2$, \cref{eq:ellpriplowerbound} is trivial, as the right hand side is always $\leq 0$.
    
    We now proceed with the proofs as follows. Below, we prove \cref{mthm:ExpansionToRIP} from \cref{lem:ellprip}. We then prove  \cref{mthm:ellpspreadpos} in \cref{sec:mthmellpspreadpos}, \cref{mthm:ExplicitConstruction} in \cref{sec:mthmExplicitConstruction}, and finally \cref{lem:ellprip} in \cref{sec:ellprip}.
        \begin{proof}[Proof of \cref{mthm:ExpansionToRIP}]
        Suppose without loss of generality that $\norm x_p = 1$. Take $\delta_1=\delta_2 = \frac \eps 3$.  By \cref{eq:ellpriplowerbound},
	    $$
	        \norm{Ax}_p^p \ge \frac{t (1 - \mu)}{\left(1 + \delta_1 \right)^{p-1}}  - \frac{\mu t}{\delta_1^{p-1}} (s_{\max} - 1)^{p - 1} \enspace.
	    $$
	    Our lower bound on $\eps$ implies, in particular, that $\mu \le \frac \eps 3$. Thus,
	    by our choice of $\delta_1$,
	    $$\frac{t(1-\mu)}{(1+\delta_1)^{p-1}} \ge \frac{1-\frac \eps 3}{(1+\frac \eps 3)^{p-1}}\cdot t \ge \frac{1-\frac \eps 3}{1+\frac \eps 3}\cdot t\ge \inparen{1-\frac{2\eps}{3}} t \enspace.$$
        We also have
	    $$
	    \frac{\mu t}{\delta_1^{p-1}}(s_{\max}-1)^{p-1}\le \frac{\eps^2t}{9\delta_1} = \frac{\eps t}{3} \enspace.
	    $$
	    We conclude that
	    $$\norm{Ax}_p^p \ge (1-\eps) t \enspace.$$
	    
	    We similarly obtain an upper bound on $\norm{Ax}_p^p$. By \cref{eq:ellpripupperbound},
	    $$\norm{Ax}_p^p \le\frac{t}{\left(1 - \delta_2\right)^{p-1}} + \frac{\mu t}{\delta_2^{p-1}} (s_{\max}-1)^{p-1}\enspace.$$
	    Now,
	    $$\frac{t}{\left(1 - \delta_2\right)^{p-1}}\le \frac{t}{1-\frac \eps 3} \le \inparen{1+\frac{2\eps}{3}}t$$
	    since $\eps < \frac 12$. Also,
	    $$\frac{\mu t}{\delta_2^{p-1}} (s_{\max}-1)^{p-1} = \frac{\mu t}{\delta_1^{p-1}} (s_{\max}-1)^{p-1} \le \frac{\eps t}{3} \enspace,$$
	    and so,
	    $$\norm{Ax}_p^p \le (1+\eps) t \enspace.$$
        Finally, we observe that $(1 + \eps)^{1/p} \leq 1 + \eps$ and $(1 - \eps)^{1/p} \geq 1 - \eps$, which finishes the proof.
    \end{proof}

	\subsection{Proof of \cref{mthm:ellpspreadpos}: $\ell_p$-RIP and $\ell_p$-spread of sparse random matrices}
	\label{sec:mthmellpspreadpos}
	We now prove \cref{mthm:ellpspreadpos}, which we restate below.\ellpspreadpos*
	 \begin{proof}
	 Fix $\eps$, and let $s$ satisfy $s \geq (\frac{18}{\alpha \eps^2})^{\frac{1}{2 - p}}$. Note that this implies $\eps^2 \geq \frac{18}{t} \cdot s^{p-1}$, as $t = \alpha s$.
	 
	 By \cref{prop:uniqueExpansion}, with high probability it holds that $G_A$ is a $(\gamma, \mu)$-unique expander with $\gamma = \Omega(\alpha^2/t^4)$ and $\mu = \frac{2}{t}$. Hence, we have $\eps^2 \geq 9\mu s^{p-1} = \frac{18}{t} \cdot s^{p-1}$, and $s_{\max} = s$ as $G_A$ is $s$-right-regular. By \cref{mthm:ExpansionToRIP}, we thus have that $A$ is $(\gamma n, \eps)$-$\ell_p$-RIP. Applying \cref{prop:riptospread}, we conclude that $\ker(A)$ is $(\gamma n,\eps')$-$\ell_p$-spread for $\eps' = \frac{1 - \eps}{2 + \eps(1 + (\frac{2}{\gamma})^{1 - \frac{1}{p}})}$. As $\eps < \frac{1}{2}$, it follows that $\eps' = \Omega(\gamma^{1 - \frac{1}{p}}) = \Omega((\frac{\alpha^2}{t^4})^{1 - \frac{1}{p}})$. 
	 Then, applying \cref{prop:companddist}, we conclude that $\Delta_p(\ker(A)) \leq \frac{1}{\gamma \eps'} = O\big(1/\gamma^{2 - \frac{1}{p}}\big) = O\big((\frac{t^4}{\alpha^2})^{2 - \frac{1}{p}}\big)$.
	 \end{proof}
	
	\subsection{Proof of \cref{mthm:ExplicitConstruction}: explicit construction of $\ell_p$-RIP matrices}
	\label{sec:mthmExplicitConstruction}
    In this subsection, we prove \cref{mthm:ExplicitConstruction}, which we restate below.
    \ExplicitConstruction*
    
\begin{proof}
    We prove \cref{mthm:ExplicitConstruction} by combining \cref{mthm:ExpansionToRIP} with the explicit constructions of vertex expanders due to \cite{CapalboRVW02}. We note that these expanders do not necessarily have bounded right degree, which is necessary to use \cref{mthm:ExpansionToRIP}. Because of this, we first give a simple preprocessing algorithm to convert an expander to a new graph with similar expansion and bounded right degree.
    
    \begin{lemma}\label{lem:BoundDegrees}
        Let $G= (V_L,V_R,E)$ be a bipartite $t$-left regular $(\gamma,\mu)$-unique expander where $n:=|V_L|\ge |V_R|$.
        Then, there exists a bipartite $t$-left regular $(\gamma,\mu)$-unique expander $G' = (V_L',V_R',E')$, with $V_L' = V_L$ and $|V_R'| \le 3|V_R|$, such that every vertex in $V_R'$ has degree at most $\frac{tn}{|V_R|}$. Futhermore, $G'$ can be computed from $G$ in $\poly(|V_L|)$ time. 
    \end{lemma}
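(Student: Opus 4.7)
\textbf{Proof plan for \cref{lem:BoundDegrees}.} The natural approach is a simple vertex-splitting construction: any right vertex whose degree exceeds the target bound is replaced by several ``copies,'' each inheriting a disjoint subset of its incident edges. Concretely, set $d \defeq \lfloor tn/|V_R| \rfloor$ (the floor of the average right-degree; note $d \geq 1$ since $t \geq 3$ and $n \geq |V_R|$). For each $v \in V_R$, partition the edges incident to $v$ into $k_v \defeq \lceil \deg_G(v)/d \rceil$ groups of size at most $d$. Define $V_R'$ to consist of $k_v$ fresh copies of each $v \in V_R$, and let $E'$ connect each copy of $v$ to the endpoints (in $V_L$) of the edges in its assigned group. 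Set $V_L' = V_L$. This construction is clearly computable in $\poly(|V_L|)$ time, and each vertex of $V_R'$ has degree at most $d \leq tn/|V_R|$ by construction.

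To verify the right-side size bound, since $\lfloor d \rfloor \geq d/2$ for $d \geq 1$ with $d = tn/|V_R|$, we have
\[
|V_R'| \;=\; \sum_{v \in V_R} \left\lceil \frac{\deg_G(v)}{d} \right\rceil \;\leq\; \sum_{v \in V_R}\left(\frac{\deg_G(v)}{d} + 1\right) \;=\; \frac{tn}{d} + |V_R| \;\leq\; 2|V_R| + |V_R| \;=\; 3|V_R|.
\]
The graph $G'$ is $t$-left-regular because we did not alter any vertex in $V_L$ nor any left-right incidence; each $u \in V_L$ still has exactly $t$ edges, now connecting to $t$ distinct copies in $V_R'$.

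The one substantive point, and arguably the main step, is to preserve unique expansion. Fix any $S \subseteq V_L$ with $|S| \leq \gamma n$. For each unique neighbor $r \in U_G(S)$, let $u_r \in S$ be its single neighbor in $S$, and let $r' \in V_R'$ denote the specific copy of $r$ to which the edge $(u_r, r)$ was assigned. Then $r'$ has $u_r$ as a neighbor in $G'$, and any other neighbor $u$ of $r'$ in $V_L$ is also a neighbor of $r$ in $G$, so $u \notin S \setminus \{u_r\}$ by the unique-neighbor property of $r$. Hence $r' \in U_{G'}(S)$. The map $r \mapsto r'$ is injective (different $r$'s yield copies of different originals), so
\[
|U_{G'}(S)| \;\geq\; |U_G(S)| \;\geq\; t(1-\mu)|S|,
\]
which shows that $G'$ is a $(\gamma, \mu)$-unique expander, completing the proof. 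No step here should be a serious obstacle; the only mild care is in keeping track of which copy of $r$ retains the ``unique'' status when translating the expansion property from $G$ to $G'$.
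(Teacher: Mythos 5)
Your proposal is correct and follows essentially the same route as the paper: split high-degree right vertices into copies of degree at most roughly the average right degree, note that left degrees (and hence $t$-left-regularity) are untouched, bound the number of added right vertices by $2|V_R|$ via the same edge-counting, and observe that unique expansion is preserved. Your explicit injection $r\mapsto r'$ from $U_G(S)$ into $U_{G'}(S)$ is a slightly more careful spelling-out of the paper's one-line remark that splitting cannot hurt unique expansion, which is fine.
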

    \begin{proof}
        Write $\beta = \frac{|V_R|}{|V_L|}\le 1$. We modify $G$ to create the new graph $G'$ via the following algorithm: While the graph contains a right vertex $r$ with $\deg(r) > \frac{t}{\beta}$, create a new vertex $r'$. Let $D$ be some arbitrary subset of $N(r)$ such that $|D| = \floor{\frac{t}{\beta}}$.        
        Remove all edges between $r$ and $D$, and instead create edges between $r'$ and every vertex in $D$. 
	        
	    Clearly, when this process terminates, the maximum right degree of the graph is at most $\frac{t}{\beta}$, and the left-degrees remain as before. Also, observe that the neighbor set of any set of left vertices can only increase in size, so our modification to the graph does not hurt its unique expansion.
	        
	    It is left to show that we did not add too many right vertices. Indeed, the number of new right vertices in the graph is equal to the number of iterations of our algorithm. Let $J_i$ denote the number of edges in graph that touch a right vertex with degree larger than $\frac{t}\beta$, after the $i$-th iteration. Clearly, $J_0 \le |E| = nt$, and $J_{i+1} \le J_{i} - \floor{\frac{t}{\beta}}$. Hence, the number of iterations is at most $J_0/\floor{\frac{t}{\beta}} \leq J_0 \cdot \frac{2 \beta}{t} \leq 2 \abs{V_R}$. 
    \end{proof}
	    
	 We now turn to the proof of \cref{mthm:ExplicitConstruction}.
	 Let $c$ be the constant in \cref{thm:CRVW}, and let $p \geq 1$ satisfy $p < p_0 := 1 + \frac{1}{c}$. Let $\eps \in (0, \frac{1}{2})$, and let $\alpha \in (0,1)$. Let $\beta = \frac{\alpha}{3}$. We shall assume without loss of generality that $\beta$ is an inverse power of $2$, as otherwise we can simply decrease $\alpha$ until this holds, and this will only lose a factor of $2$ in $\alpha$.
	 
	 Let $\mu = ( \alpha^{(1 + c)(p - 1)} \eps^2 C)^{\frac{1}{1 - c(p-1)}}$, where $C$ is a universal constant. Note that $1 - c(p-1) > 0$ as $p < 1 + \frac{1}{c}$.
	 
	 Now, let $n$ be sufficiently large. We shall assume that $n$ is a power of $2$ without loss of generality, as otherwise we can simply increase $n$ until this holds.
	 By \cref{thm:CRVW}, we can construct in $\poly(n)$-time (as $\beta, \mu$ are constants) a bipartite graph $G = (V_L, V_R, E)$ with $\abs{V_L} = n$, $\abs{V_R} = \beta n$, such that $G$ is a $t$-left-regular $(\Omega(\frac{\mu \beta n}{t} ), \mu)$-unique expander and $t = O\Big(\big(\frac{1}{\beta\mu}\big)^c\Big)$. Applying \cref{lem:BoundDegrees}, we thus construct in $\poly(n)$-time a bipartite graph $G' = (V'_L, V'_R, E')$ with $\abs{V'_L} = n$, $\abs{V'_R} = 3 \beta n = \alpha n$, and $G'$ is a $t$-left-regular $(\Omega(\frac{\mu \alpha n}{t}), \mu)$-unique expander with max right degree $s_{\max} \leq \frac{t n}{\beta n} = \frac{3 t}{\alpha}$.
	    
	 We now observe that $\eps^2 \geq 9 \mu s_{\max}^{p-1}$. Indeed, this is because
	 \begin{equation*}
	9\mu s_{\max}^{p-1} = 9\mu \cdot \left(\frac{3 t}{\alpha}\right)^{p - 1} \leq O(1) \cdot \mu \cdot \left( \frac{1}{\alpha (\alpha \mu)^c} \right)^{p-1} = O(1) \cdot \frac{1}{\alpha^{(1 + c)(p-1)}} \cdot \mu^{1 - c(p-1)} \leq \eps^2 \enspace,
	 \end{equation*}
	 by our choice of $\mu$. Thus, by \cref{mthm:ExpansionToRIP}, we conclude that the adjacency matrix $B \in \{0,1\}^{m \times n}$ of $G'$, defined by $B_{r,u} = 1$ if $(u,r) \in E'$ and $0$ otherwise, is $(\Omega(\gamma n), \eps)$-$\ell_p$-RIP, where $\gamma = \frac{\mu \alpha }{t}$.
	 
	 By \cref{prop:riptospread}, we thus have that $\ker(B)$ is $(\gamma n, \eps')$-$\ell_p$-spread, where $\eps' = \Omega(\gamma^{1 - \frac{1}{p}})$, and by \cref{prop:companddist}, we have $\Delta_p(\ker(B)) \leq O(\frac{1}{\eps' \gamma}) = O(1/\gamma^{2 - \frac{2}{p}})$.
	 
	 We finish the proof by simply observing that $\frac{1}{1 - c(p-1)} = \frac{p_0 - 1}{p_0 - p} = O(\frac{1}{p_0 - p})$, and so $\gamma = \poly(\mu, \alpha) = \poly(\eps, \alpha)^{\frac{1}{p_0 - p}}$.
    \end{proof}

	\subsection{Proof of \cref{lem:ellprip}: $\ell_p$-RIP from unique expansion}
	\label{sec:ellprip}
	 Let $x \in \R^n$ be $\gamma n$-sparse, and let $S = \supp(x)\subseteq V_L$. Let $E(S, N(S)) := \{\{v,r\} \in E : v \in S, r \in N(S)\}$.
		The following claim asserts the existence of a certain many-to-one matching between $S$ and $N(S)$.
	
	\begin{claim}\label{claim:MatchingExists}
	    There exists a set of edges $M \subseteq E(S,N(S))$ with the following properties:
	    \begin{enumerate}
	        \item Every $r\in N(S)$ touches exactly one edge in $M$.
	        \item Every $v\in S$ touches at least $t(1-\mu)$ edges in $M$.
	    \end{enumerate}
	\end{claim}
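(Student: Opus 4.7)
The plan is to build $M$ by a \emph{peeling argument}: I will order the vertices of $S$ as $v_1,v_2,\ldots,v_k$ (where $k=|S|$) in such a way that, letting $S_i \defeq \{v_i,v_{i+1},\ldots,v_k\}$, each $v_i$ has at least $t(1-\mu)$ unique neighbors with respect to $S_i$. Given such an ordering, I will set
$N_i \defeq \{r \in N(v_i) : N(r)\cap S_i = \{v_i\}\}$ and define
$M \defeq \bigcup_{i=1}^{k} \{ \{v_i,r\} : r \in N_i\}.$

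To produce the ordering, I will iterate from $i=1$ to $k$. At step $i$, the set $S_i \subseteq S$ still satisfies $|S_i|\le |S|\le \gamma n$, so the unique-expansion hypothesis applies and gives $|U_G(S_i)| \ge t(1-\mu)\,|S_i|$. The key identity is
\begin{equation*}
|U_G(S_i)| \;=\; \sum_{v\in S_i} \bigl|\{r \in N(v) : N(r)\cap S_i = \{v\}\}\bigr|,
\end{equation*}
since each $r\in U_G(S_i)$ has exactly one neighbor in $S_i$ and is therefore counted once. By averaging, some $v\in S_i$ has at least $t(1-\mu)$ unique neighbors with respect to $S_i$; I will take this $v$ to be $v_i$, which ensures $|N_i|\ge t(1-\mu)$.

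It remains to verify the two properties of $M$. By construction, the only edges of $M$ incident to $v_i$ come from $M_i \defeq \{\{v_i,r\}:r\in N_i\}$ (since $v_j\ne v_i$ for $j\ne i$), so $v_i$ touches exactly $|N_i|\ge t(1-\mu)$ edges in $M$, giving property (2). For property (1), fix any $r\in N(S)$ and let $v_{i^\ast}$ be the \emph{last} element of $N(r)\cap S$ in the peeling order (that is, the one with maximal index). Then $N(r)\cap S_{i^\ast} = \{v_{i^\ast}\}$, so $r\in N_{i^\ast}$; and for any $i\ne i^\ast$ the set $N(r)\cap S_i$ either omits $v_i$ entirely (if $i>i^\ast$) or contains $v_{i^\ast}$ in addition to $v_i$ (if $i<i^\ast$ and $v_i\in N(r)$), hence $r\notin N_i$. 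Thus $r$ is touched by exactly one edge of $M$.

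The only nontrivial step is the averaging argument producing $v_i$, which requires the identity for $|U_G(S_i)|$ above; once this is set up, the rest is purely combinatorial bookkeeping showing that the ``last neighbor'' rule partitions $N(S)$ correctly.
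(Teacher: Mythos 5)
Your proposal is correct and is essentially the paper's own argument: both peel off vertices of $S$ one at a time, using unique expansion of the remaining set together with an averaging step to find a vertex with at least $t(1-\mu)$ unique neighbors, match it to those neighbors, and then verify property (1) by observing that each $r\in N(S)$ is matched precisely when its last remaining neighbor in the peeling order is processed. The only difference is cosmetic: you make the averaging identity for $|U_G(S_i)|$ explicit, whereas the paper states the existence of the next vertex directly from unique expansion.
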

	\begin{proof}
        We construct $M$ via an iterative algorithm. The algorithm records a set of \emph{processed} left vertices $P \subseteq S \subseteq V_L$. It also maintains the edge set $M$. The algorithm is initialized with $P \defeq \emptyset$ and $M\defeq \emptyset$. 
        
        There are $|S|$ iterations. On each iteration, the algorithm picks a vertex $v \in S \setminus P$ such that the set $U_v := N(v) \setminus (N(S \setminus P))$ of unique neighbors of $v$ has size at least $t(1 - \mu)$. Note that by unique expansion of the set $S\setminus P$, there is always such a vertex.
	    For each $r \in U_v$, the algorithm adds the edge $\{v,r\}$ to $M$. The vertex $v$ is added to $P$.
	    
	    We turn to analyzing this algorithm. It is straightforward to observe that when the algorithm terminates we have $P = S$, and every vertex $v \in S$ touches at least $t(1 - \mu)$ edges of $M$. We claim that every vertex $r\in N(S)$ touches exactly one edge of $M$. Indeed, let $h = \abs{N(r) \cap S}$ denote the number of neighbors of $r$ in $S$. Let $u_1, \dots, u_h$ be these neighbors, ordered so that $u_1$ is the first neighbor of $r$ added to $P$, $u_2$ is the second one, etc. We observe that for $i < h$, the edge $\{u_i, r\}$ cannot be in $M$. Indeed, when $u_i$ is added to $P$, $u_h$ is not in $P$, and thus $r$ is cannot be a unique neighbor of $u_i$. Next, we observe that when $u_h$ is added to $P$, the edge $\{u_h,r\}$ is added to $M$. This is because all $u_i \in P$ for all $i < h$, and so at this iteration of the algorithm $r$ has only one neighbor not in $P$, namely $u_h$. This finishes the proof.
	\end{proof}
	
	We are now ready to establish the lower and upper bounds on $\norm{Ax}_p$ claimed in  \cref{eq:ellpriplowerbound,eq:ellpripupperbound}. We begin with \cref{eq:ellpriplowerbound}.

	\begin{proof}[Lower bound on $\norm{Ax}_p$]
    Let $\kappa_1 = \frac{1}{\delta_1} > 0$. Let $M$ be the set of edges obtained from \cref{claim:MatchingExists}. For each $r\in N(S)$, let $v_r$ denote the unique vertex in $S$ such that $\{v_r,r\}\in M$, and let $W_r = (N(r)\cap S)\setminus \{v_r\}$ denote the rest of the neighbors of $r$ in $S$. Define
    \begin{equation*}
    a_r = \inabs{(Ax)_r}^p + \kappa_1^{p-1} (s_{\max}-1)^{p-1}\sum_{u\in W_r}\inabs{x_u}^p \enspace .
    \end{equation*}
    Now note that
    \begin{align*}
        \sum_{r\in N(S)} a_r &= \norm{Ax}_p^p + \kappa_1^{p-1}(s_{\max}-1)^{p-1}\sum_{r\in N(S)}\sum_{u\in W_r} \inabs{x_u}^p \\ &\le \norm{Ax}_p^p + \kappa_1^{p-1}(s_{\max}-1)^{p-1} \mu t \sum_{u\in S} \inabs{x_u}^p \\ &= \norm{Ax}_p^p + \kappa_1^{p-1}(s_{\max}-1)^{p-1} \mu t \norm{x}_p^p, \numberthis \label{eq:suma_rUpperBound}
    \end{align*}
    where the inequality is due to each $u\in S$ belonging to at most $\mu t$ sets $W_r$ ($r\in N(S)$). Indeed, the number of such sets for a given $u$ is $t - \inabset{r\in N(S)\mid u=v_r} \le t- (1-\mu)t = \mu t$.
    
    Next, for any fixed $r\in N(S)$, we claim that 
    \begin{equation}\label{eq:a_rLoweround}
        a_r \ge \frac{1}{\left(1 + \frac{1}{\kappa_1}\right)^{p-1}}\abs{x_{v_r}}^p \enspace.
    \end{equation}
    Denote $z = \sum_{u\in W_r} \inabs{x_u}$. Note that $|W_r| = \inabs{(N(r)\cap S)\setminus \{v_r\}} \le \deg(r)-1 \le s_{\max}-1$, so H\"{o}lder's inequality (\cref{lem:holder}) yields 
    \begin{equation*}
    a_r = \inabs{\inparen{Ax}_r}^p + \kappa_1^{p-1}(s_{\max}-1)^{p-1}\sum_{u\in W_r}\inabs{x_u}^p \ge \inabs{\inparen{Ax}_r}^p + \kappa_1^{p-1} z^p \enspace.
    \end{equation*}
    To deduce \cref{eq:a_rLoweround} from the above, we distinguish two cases:
    \begin{itemize}
	    \item If $z \geq \abs{x_{v_r}}$, we can bound $$a_r \ge \kappa_1^{p-1} z^p \geq \kappa_1^{p-1}\inabs{x_{v_r}}^p\ge \frac{1}{\left(1 + \frac{1}{\kappa_1}\right)^{p-1}}\inabs{x_{v_r}}^p \enspace.$$
    	\item If $z \leq |x_{v_r}|$, we let $z = \beta \abs{x_{v_r}}$ for some $\beta \in [0,1]$. We then have $\inabs{(Ax)_r}^p \geq (\inabs{x_{v_r}} - z)^p \geq (1 - \beta)^p \abs{x_{v_r}}^p$, and conclude that 
    	$$
    	a_r \ge \inabs{\inparen{Ax}_r}^p + \kappa_1^{p-1} z^p \ge (1 - \beta)^p \abs{x_{v_r}}^p + \kappa_1^{p-1}\beta^p \abs{x_{v_r}}^p  
    	\geq \frac{1}{\left(1 + \frac{1}{\kappa_1}\right)^{p-1}} \abs{x_v}^p  \enspace,
    	$$
    	where we use that $\min_{\beta \in [0,1]} (1 - \beta)^p + \kappa_1^{p-1} \beta^p = \left(1 + \frac{1}{\kappa_1}\right)^{1 - p}$ for $\kappa_1 > 0$.
	\end{itemize}
	
	Now, \cref{eq:a_rLoweround} yields
	\begin{flalign*}
	&\sum_{r\in N(S)}a_r \ge \frac{1}{\left(1 + \frac{1}{\kappa_1}\right)^{p-1}}\sum_{v\in S}\inparen{\inabs{x_v}^p \cdot \inabset{r\in N(S)\mid v = v_r}} \\
	&\ge \frac{1}{\left(1 + \frac{1}{\kappa_1}\right)^{p-1}}\sum_{v\in S}\inabs{x_v}^p \cdot (1-\mu)t = \frac{1}{\left(1 + \frac{1}{\kappa_1}\right)^{p-1}}(1-\mu)t \norm{x}_p^p\enspace.
	\end{flalign*}
	\cref{eq:ellpriplowerbound} follows from the above and \cref{eq:suma_rUpperBound}, as $\kappa_1 = \frac{1}{\delta_1}$.
	\end{proof}
	
	\begin{proof}[Upper bound on $\norm{Ax}_p$]
		We now prove \cref{eq:ellpripupperbound} in a similar way.
    Let $\kappa_2 = \frac{1}{\delta_2} > 1$, and let 
	$$b_r = \inabs{\inparen{Ax}_r}^p - \kappa_2^{p-1}(s_{\max}-1)^{p-1}\cdot \sum_{u\in W_r}\inabs{x_u}^p \enspace.$$
	Note that
	\begin{align*}
		\sum_{r\in N(S)} b_r &\ge \norm{Ax}_p^p - \kappa_2^{p-1}(s_{\max}-1)^{p-1}\sum_{r\in N(S)}\sum_{u\in W_r} \inabs{x_u}^p \\ &\ge \norm{Ax}_p^p - \mu t\kappa_2^{p-1}(s_{\max}-1)^{p-1}  \sum_{u\in S} \inabs{x_u}^p \\ &= \norm{Ax}_p^p -\mu t \kappa_2^{p-1}(s_{\max}-1)^{p-1} \norm{x}_p^p \enspace. \numberthis \label{eq:sumb_rUpperBound}
	\end{align*}

	We next seek an upper bound on $b_r$. Fix some $r\in N(S)$, and write $ \sum_{u\in W_r} \inabs{x_u} = \beta \inabs{x_{v_r}}$ for some $\beta\ge 0$. Note that $$\inabs{\inparen{Ax}_r}^p \le \big|\abs{x_{v_r}}+\sum_{u\in W_r}\abs{x_u}\big|^p \le (1+\beta)^p \inabs{x_{v_r}}^p \enspace.$$
	Also, by H\"{o}lder's inequality (\cref{lem:holder}),
	$$\sum_{u\in W_r}\inabs{x_u}^p \ge \inabs{W_r}^{1 - p} \inparen{\sum_{u\in W_r}\inabs{x_u}}^{p} = \inabs{W_r}^{1-p} \cdot \inparen{\beta \inabs{x_{v_r}}}^p \ge (s_{\max}-1)^{1-p}\cdot \inparen{\beta \inabs{x_{v_r}}}^p\enspace.$$
	It follows that 
	$$b_r \le \inparen{(1+\beta)^p - \kappa_2^{p-1}\cdot \beta^p}\inabs{x_{v_r}}^p \le \frac{1}{\left(1 - \frac{1}{\kappa_2}\right)^{p-1}}\inabs{x_{v_r}}^p,$$
	since $\sup_{\beta \geq 0}\{(1+\beta)^p - \kappa_2^{p-1}\cdot \beta^p\} = \frac{1}{\left(1 - \frac{1}{\kappa_2}\right)^{p-1}}$ for $\kappa_2 > 1$.
	
	By \cref{eq:sumb_rUpperBound} and the above,
	$$\norm{Ax}_p^p \le \sum_{r\in N(S)}b_r + \kappa_2^{p-1} (s_{\max}-1)^{p-1}\mu t\norm{x}_p^p \le \left(\frac{t}{\left(1 - \frac{1}{\kappa_2}\right)^{p-1}} +\kappa_2^{p-1} (s_{\max}-1)^{p-1}\mu t\right)\norm{x}_p^p \enspace,$$
    since each $u \in S$ can appear as $v_r$ for some $r \in N(S)$ at most $t$ times. As $\kappa_2 = \frac{1}{\delta_2}$, this yields \cref{eq:ellpripupperbound}.
	\end{proof}

	\section{Singular values of random biregular matrices}
	\label{sec:singvalue}
	In this section, we prove \cref{mthm:singvalue}, which is restated below.
	\singvalue*
	
	For the entire section, we will let $A \in \cM_{m,n,s,t}$, and let $G \defeq G_A=(V_L,V_R,E)$ and $\sign = \sign_A$. For convenience, let us define $a\pm b:= [a -b , a+ b]$.
	
	We begin our proof of \cref{mthm:singvalue} with the following simple claim.
	\begin{claim}\label{clm:MtoA}
		Let $M = AA^{\top} - s \cdot \Id$. Suppose that \begin{equation}\label{eq:MGoodSpectrum}\Spec(M) \subseteq t-2 \pm (2+\eps)\sqrt{(s-1)(t-1)}\end{equation} 
		for some $\eps = o(1)$. 
		Then, \begin{equation}\label{eq:sigmaBounded}\sigma(A) \subseteq \sqrt{s-1} \pm (1+\eps')\sqrt{t-1}\end{equation}
		for some $\eps' = o(1).$
	\end{claim}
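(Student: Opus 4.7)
The plan is to pass from eigenvalues of $M$ to singular values of $A$ via $AA^{\top} = M + s\Id$ and then take square roots. Since $(s,t)$-biregularity forces each row of $A$ to have exactly $s$ entries in $\{\pm 1\}$, the diagonal of $AA^{\top}$ is constantly $s$, so $AA^{\top} = M + s\Id$ and the hypothesis on $\Spec(M)$ gives
$$\Spec(AA^{\top}) \subseteq \left[s+t-2 - (2+\eps)\sqrt{(s-1)(t-1)},\; s+t-2 + (2+\eps)\sqrt{(s-1)(t-1)}\right].$$
Setting $u = \sqrt{s-1}$ and $v = \sqrt{t-1}$, the two endpoints factor cleanly as $(u \pm v)^2 \pm \eps uv$, exploiting $(u\pm v)^2 = (s-1)+(t-1) \pm 2\sqrt{(s-1)(t-1)}$. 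Since every $\sigma \in \sigma(A)$ is a nonnegative square root of an element of $\Spec(AA^{\top})$, the desired upper and lower bounds reduce to estimating $\sqrt{(u+v)^2 + \eps uv}$ from above and $\sqrt{(u-v)^2 - \eps uv}$ from below.

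For the upper bound I will apply the elementary inequality $\sqrt{a^2 + b} \le a + \tfrac{b}{2a}$ with $a = u+v$ and $b = \eps uv$, together with $\tfrac{uv}{u+v} \le \min(u,v) = v$ (which holds since $s \ge t$). This yields $\sigma \le u + (1+\tfrac{\eps}{2})v$, so the upper bound in $\sigma(A) \subseteq \sqrt{s-1} \pm (1+\eps'_+)\sqrt{t-1}$ holds with slack $\eps'_+ = \eps/2 = o(1)$.

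For the lower bound I will split on whether $(u-v)^2 \le \eps uv$ or not. In the ``small gap'' case $(u-v)^2 \le \eps uv$, the inequality in $r := u/v \ge 1$ becomes $(r-1)^2 \le \eps r$, which forces $r \le 1 + O(\sqrt{\eps})$; hence $u - (1+\eps'_-)v \le 0$ for any $\eps'_- = \Omega(\sqrt{\eps})$, making the lower bound trivial since $\sigma \ge 0$. In the ``large gap'' case $(u-v)^2 > \eps uv$, I will apply $\sqrt{a^2 - b} \ge a - b/a$ with $a = u-v$, $b = \eps uv$ (valid since $a^2 > b$), giving $\sigma \ge (u-v) - \tfrac{\eps uv}{u-v}$; writing $\tfrac{uv}{u-v} = \tfrac{v}{1 - v/u}$ and using that $v/u = \sqrt{(t-1)/(s-1)}$ is bounded away from $1$ (since $\alpha = t/s \in (0,1)$ is a fixed constant in the ambient theorem), this slack is $O(\eps) \cdot v$, so $\eps'_- = O(\eps) = o(1)$ suffices. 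Taking $\eps' = \max(\eps'_+, \eps'_-) = o(1)$ completes the proof. The only point requiring care is the small-gap regime, where the factored form of the lower endpoint may be negative and no direct square-root estimate is possible; the case split above handles this by noting that the target bound is already vacuous there. Otherwise the argument is just bookkeeping around the identity $s+t-2 \pm 2\sqrt{(s-1)(t-1)} = (\sqrt{s-1} \pm \sqrt{t-1})^2$, and I do not anticipate any real obstacle.
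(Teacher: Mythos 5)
Your proposal is correct and takes essentially the same route as the paper: shift $\Spec(M)$ by $s$ to get $\Spec(AA^{\top}) \subseteq (s-1)+(t-1) \pm (2+\eps)\sqrt{(s-1)(t-1)}$, recognize the endpoints as $(\sqrt{s-1}\pm\sqrt{t-1})^2$ up to an $O(\eps)\sqrt{(s-1)(t-1)}$ correction, and take square roots using that $\tfrac{t-1}{s-1} \le \alpha < 1$ is bounded away from $1$. The only cosmetic difference is your small-gap case split, which is in fact vacuous here since $u/v = \sqrt{(s-1)/(t-1)} \ge 1/\sqrt{\alpha} > 1$ is bounded away from $1$, whereas the paper simply solves for an admissible $\eps'$ via the condition $2\eps' - (2\eps'+\eps'^2)\sqrt{(t-1)/(s-1)} \ge \eps$, giving $\eps' = O_\alpha(\eps)$.
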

	\begin{proof}
		We immediately have that $$\Spec(AA^{\top}) \subseteq (s-1) + (t-1) \pm (2 + \eps) \sqrt{(s-1)(t-1)}.$$ The singular values of $A$ are obtained by taking the square root of the eigenvalues of $AA^{\top}$. Hence, \cref{eq:sigmaBounded} holds whenever $\eps'$ satisfies
		$$2\eps' - (2\eps'+\eps'^2)\sqrt{\frac {t-1}{s-1}}\ge \eps.$$
		Because $\frac{t-1}{s-1} \le \frac{t}{s} = \alpha$ is a constant $< 1$, we can take $\eps' \le O_\alpha(\eps)$.
	\end{proof}
	
	Thus, we focus on showing that \cref{eq:MGoodSpectrum} holds with high probability when $A$ is sampled uniformly from $\cM_{m,n,s,t}$. To do this, we use the following definitions and theorem from \cite{MohantyOP20b}.	
	\begin{definition}[Nomadic walk matrix]
	    A \emph{nomadic pair} is a $2$-tuple $(e,e')$ of (undirected) edges in $E$, where $e = \{u,v\}$ and $e' = \{v,w\}$ with $u \in V_R$, $v \in V_L$, and $w \in V_R$ with $w \ne u$. Let $$\nomadicv{G} = \inset{\inparen{e_1,e_2}\mid \inparen{e_1,e_2}\text{ is a nomadic pair in $G$}}\enspace.$$
	    
		The \emph{nomadic walk matrix} of $A$ is the matrix $B\in \R^{\nomadicv{G}\times \nomadicv{G}}$, such that $B[(e_1, e_2), (e_3, e_4)] = \sign(e_3)\cdot \sign(e_4)$ if $\inparen{e_1, e_2, e_3,  e_4}$  forms a non-backtracking\footnote{A non-backtracking walk of length $\ell$ is a sequence of vertices $v_0, \dots, v_{\ell}$ where $(v_{i-1}, v_{i}) \in E$ and $v_{i-1} \ne v_i$ for all $i \in [\ell]$.} walk of length $4$ in $G$. Otherwise, $B[(e_1, e_2), (e_3, e_4)]=0$.
	\end{definition}
	
	\begin{theorem}[A modified Ihara-Bass formula {\cite[Theorem 3.1]{MohantyOP20b}}]\label{thm:IharaBass}
		Let $A\in \cM_{m,n,s,t}$ and let $M = AA^{\top} - s \cdot \Id$. Let $L(z) = \Id - z \cdot M + z \cdot (t - 2)  \Id + z^2 \cdot (s-1)(t-1) \Id$. Then,
		\begin{equation}\label{eq:IharaBass}
			(1 - z)^{m \cdot \frac{s(t-1)}{t} - 1} (1 + (t-1)z)^{m \cdot \frac{s}{t} - 1} \det(L(z)) = \det(\Id - B z) \enspace,
		\end{equation}
		where $B$ is the \emph{nomadic walk} matrix of $A$.
	\end{theorem}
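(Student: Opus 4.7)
The plan is to follow the standard Ihara--Bass template, adapted to the bipartite biregular setting with the ``nomadic'' (right-left-right) walk structure and signed edges. The identity will come from a block-matrix determinant computed by Schur complement in two different ways, and the extra factors $(1-z)^{\ast}$ and $(1+(t-1)z)^{\ast}$ will account for a ``spurious'' part of the spectrum of $B$ that is invisible to $L(z)$.

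First, I would introduce two signed endpoint maps $S, T \colon \R^{V_R} \to \R^{\nomadicv{G}}$ between the right-vertex space and the nomadic-pair space: set $(Sx)_{(e_1,e_2)} = \sign(e_1) x_r$ where $r \in e_1 \cap V_R$ is the source vertex of the pair, and $(Tx)_{(e_1,e_2)} = \sign(e_2) x_{r'}$ where $r' \in e_2 \cap V_R$ is the target vertex. A direct counting argument using right-regularity should yield $T^\top T = S^\top S = s \cdot \Id_{V_R}$, and identify $S^\top T$ with $M$ off-diagonal (since $(AA^\top)_{r,r'}$ for $r \ne r'$ counts signed length-$2$ walks from $r$ to $r'$ through $V_L$), with an easily computable diagonal correction.

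Second, I would derive the core algebraic relation linking $B$, $S$, $T$, and $M$. Expanding $B \cdot T$ and using that a nomadic pair's middle vertex has $V_L$-degree $t$ (so there are $t-1$ non-backtracking continuations) while its subsequent right-endpoint has degree $s$, I expect an identity of the shape
\begin{equation*}
T \cdot \bigl(\Id - z(M - (t-2)\Id) + z^2 (s-1)(t-1)\Id\bigr) \;=\; (\Id - zB)\, T \;+\; z \cdot (\text{correction involving } S),
\end{equation*}
essentially factoring $L(z)$ on $\R^{V_R}$ through $\Id - zB$ on $\R^{\nomadicv{G}}$ up to a $S$-boundary term. Packaging this into a $2\times 2$ block matrix
\begin{equation*}
\mathcal{L}(z) \;=\; \begin{pmatrix} L(z) / (\text{scalar}) & z T^\top \\ -z S & \Id_{\nomadicv{G}} - zB \end{pmatrix}
\end{equation*}
and computing $\det \mathcal{L}(z)$ by Schur complement from both sides then produces an equation of the form $(\text{spurious}) \cdot \det(L(z)) = \det(\Id - zB)$.

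The main obstacle, and where I would spend most effort, is identifying the ``spurious'' trivial eigenvectors of $B$ and verifying their multiplicities match $m s(t-1)/t - 1$ and $m s/t - 1$. I would look for them among combinations supported on nomadic pairs sharing a fixed midpoint $v \in V_L$: balanced symmetric combinations (weighted by $\sign$) should yield eigenvalue $+1$ of $B$, while certain antisymmetric combinations yield eigenvalue $-(t-1)$. One needs to compute the dimension of each local eigenspace, subtract the global linear relations (accounting for the $-1$ in each exponent), and check that the product of $(1-z)$ and $(1+(t-1)z)$ factors with those exact multiplicities is precisely what is annihilated by the coupling $(S, T)$ -- i.e.\ lies in the kernel of the map $\R^{\nomadicv{G}} \to \R^{V_R} \oplus \R^{V_R}$ induced by $S^\top$ and $T^\top$. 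This eigenvector bookkeeping, together with the sign-tracking in the derivation of the core identity, is the delicate part; the Schur-complement packaging itself is routine once the identity is in hand.
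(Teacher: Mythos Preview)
The paper does not prove this statement at all: it is quoted verbatim as \cite[Theorem~3.1]{MohantyOP20b} and used as a black box, with the remark that it is stated there in greater generality. So there is no ``paper's proof'' to compare your proposal against.

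Your sketch follows the standard Ihara--Bass template (signed start/end maps from the vertex space to the walk space, a key intertwining identity, a $2\times 2$ block Schur-complement computation, and identification of the trivial eigenspaces), which is indeed the method used in \cite{MohantyOP20b}. That said, your outline is too schematic to be checked on its own terms: the precise form of the intertwining identity you write is only guessed (``I expect an identity of the shape\ldots''), the exact scalar corrections on the diagonals of $S^\top T$ and $T^\top S$ are not worked out, and the multiplicity bookkeeping for the trivial eigenvalues $1$ and $-(t-1)$ is asserted rather than computed. None of this is a conceptual error, but if you want a self-contained proof you would need to actually carry out each of these computations rather than describe their shape; as written the proposal is a correct plan, not a proof.
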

	We note that the above theorem is stated in \cite{MohantyOP20b} in more generality: for simplicity we only state the version specific to our application.
	
	We use \cref{thm:IharaBass} to connect the spectrum of $M$ with the spectrum of the nomadic walk matrix $B$.
	
	\begin{claim}\label{clm:BtoM}
	    Let $M=AA^\top-s\cdot\Id$ and let $B$ denote the nomadic walk matrix of $A$. 
		Suppose that \begin{equation}\label{eq:opNormBBound}\rho(B) \leq (1 + \eps) \sqrt{(s-1)(t-1)}\end{equation} for some $\eps \leq \frac 12$. Then 
		\begin{equation*}
		\Spec(M) \subseteq t-2 \pm \inparen{2+4\eps^2} \sqrt{(s-1)(t-1)} \enspace.
		\end{equation*}
		In particular, if $\eps = o_{n\to\infty}(1)$, then \cref{eq:MGoodSpectrum} holds.
	\end{claim}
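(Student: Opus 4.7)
The plan is to use the modified Ihara--Bass identity of \cref{thm:IharaBass} to translate the hypothesis on the spectral radius of $B$ into a location constraint on the roots of $\det(L(z))$, and then read off bounds on the eigenvalues of $M$ from those root locations. Set $\xi \defeq (s-1)(t-1)$ and, for each eigenvalue $\mu$ of $M$, set $\nu \defeq \mu - (t-2)$. Since $M$ is symmetric, we may simultaneously diagonalize and obtain
\begin{equation*}
\det(L(z)) = \prod_{i=1}^{m} \bigl(1 - \nu_i z + \xi z^2\bigr) \enspace,
\end{equation*}
where $\nu_1, \ldots, \nu_m$ correspond to the eigenvalues of $M$. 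In particular, for every eigenvalue $\mu$ of $M$, the quadratic $q_\mu(z) \defeq 1 - \nu z + \xi z^2$ divides $\det(L(z))$, and therefore divides $\det(\Id - Bz)$ by \cref{thm:IharaBass} (the factors $(1-z)^{a}$ and $(1 + (t-1)z)^b$ on the LHS of \cref{eq:IharaBass} only multiply zeros in; they cannot cancel a root of $\det(L(z))$). Observe that $q_\mu(0) = 1 \neq 0$, so neither root of $q_\mu$ is $0$. Hence both roots of $q_\mu$ are of the form $1/\lambda$ for nonzero eigenvalues $\lambda$ of $B$, and by \cref{eq:opNormBBound} each such root satisfies $|z| \geq 1/((1+\eps)\sqrt{\xi})$.

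The next step is purely algebraic: convert this lower bound on the magnitudes of the roots of $q_\mu$ into an upper bound on $|\nu|$. Let $z_+, z_-$ denote the roots of $q_\mu$; by Vieta's formulas, $z_+ z_- = 1/\xi$ and $z_+ + z_- = \nu/\xi$. If $\nu^2 \leq 4\xi$, the roots are complex conjugates, so $|z_\pm|^2 = z_+ z_- = 1/\xi$ and the bound $|\nu| \leq 2\sqrt{\xi}$ is immediate. Otherwise the roots are real; assume without loss of generality that $\nu > 0$ (the case $\nu < 0$ is symmetric), so both roots are positive and the smaller one is $z_- = (\nu - \sqrt{\nu^2 - 4\xi})/(2\xi)$. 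The constraint $z_- \geq 1/((1+\eps)\sqrt{\xi})$ rearranges (after isolating the square root and squaring; the requisite positivity holds since $\nu > 2\sqrt{\xi} \geq 2\sqrt{\xi}/(1+\eps)$) to
\begin{equation*}
\nu \;\leq\; \sqrt{\xi}\left(\tfrac{1}{1+\eps} + (1+\eps)\right) \;=\; \sqrt{\xi}\left(2 + \tfrac{\eps^2}{1+\eps}\right) \;\leq\; (2 + 4\eps^2)\sqrt{\xi} \enspace,
\end{equation*}
using $\eps \leq \tfrac{1}{2}$. Applied to each eigenvalue $\mu$ of $M$ this yields $|\mu - (t-2)| \leq (2 + 4\eps^2)\sqrt{(s-1)(t-1)}$, which is the desired spectral inclusion. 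The ``in particular'' conclusion that $\eps = o(1)$ gives \cref{eq:MGoodSpectrum} is then immediate.

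The only conceptual step is the passage from the Ihara--Bass identity to the root-location statement, which I expect to be the main (but mild) obstacle: one must be careful that multiplicity counting and the extraneous factors $(1-z)^a$ and $(1+(t-1)z)^b$ do not spuriously cancel roots of $\det(L(z))$. Since these extra factors appear multiplied rather than divided, no cancellation occurs, and every root of $q_\mu$ genuinely appears as a reciprocal eigenvalue of $B$. Everything else is elementary manipulation with the quadratic $q_\mu$.
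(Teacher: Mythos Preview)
Your proposal is correct and follows essentially the same approach as the paper: both use the Ihara--Bass identity to conclude that any root $z$ of the quadratic $1 - (\mu-(t-2))z + (s-1)(t-1)z^2$ has $1/z \in \Spec(B)$, and then translate the bound $\rho(B)\le (1+\eps)\sqrt{(s-1)(t-1)}$ into a bound on $|\mu-(t-2)|$. Your final algebra, yielding $|\nu| \le \sqrt{\xi}\bigl(\tfrac{1}{1+\eps}+(1+\eps)\bigr) = \sqrt{\xi}\bigl(2+\tfrac{\eps^2}{1+\eps}\bigr)$, is in fact a bit cleaner than the paper's, which parametrizes a putative out-of-range eigenvalue by an auxiliary $\delta$ and splits into the cases $\delta\ge 1$ and $\delta<1$.
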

	\begin{proof}
		Let $\lambda$ be an eigenvalue of $M$ so that there is a nonzero vector $x$ with $M x = \lambda x$. Let $z\in \R$ be such that \begin{equation}\label{eq:zQuadratic}1 + z (t - 2 - \lambda) + z^2 (s-1)(t-1) = 0.\end{equation} Note that $z \ne 0$, and so $\lambda = \frac{(1 + z (t - 2) + z^2 (s-1)(t-1))}{z}$. Then, rearranging, we have that $L(z) x = 0$, so that $\det(L(z)) = 0$, and therefore $\det(\Id - z B) = 0$ due to \cref{eq:IharaBass}. It then follows that $\frac{1}{z}$ is an eigenvalue of $B$.
		
		Now, suppose that there exists an eigenvalue $\lambda$ of $M$ such that either $\lambda = t - 2 - (2 + \delta) \sqrt{(s-1)(t-1)}$ or $t - 2 + (2 + \delta) \sqrt{(s-1)(t-1)}$ for some $\delta \geq 0$. In either case, by choosing $z$ to be the solution to \cref{eq:zQuadratic} that minimizes $\abs{z}$, we see that $z = \pm \frac{1 + \frac{\delta}{2} - \frac{1}{2} \sqrt{2 \delta + \delta^2}}{\sqrt{(s-1)(t-1)}}$. Hence, $B$ has an eigenvalue $\mu$ with $$\rho(B)\ge \abs{\mu} \geq \frac{\sqrt{(s-1)(t-1)}}{1 + \frac{\delta}{2} - \frac{1}{2} \sqrt{2 \delta + \delta^2}}\enspace.$$ 
		If $\delta \geq 1$, then this implies that $\rho(B) > \frac{3}{2} \cdot \sqrt{(s-1)(t-1)}$, which contradicts \cref{eq:opNormBBound}, as $\eps \leq \frac 12$. If $\delta < 1$, then we observe that $\frac{1}{1 + \frac{\delta}{2} - \frac{1}{2} \sqrt{2 \delta + \delta^2}} \geq 1 + \frac{1}{2}\sqrt{\delta}$. Hence, $1 + \eps \geq \frac{\rho(B)}{\sqrt{(s-1)(t-1)}}\ge 1 + \frac{1}{2}\sqrt{\delta}$, which implies that $\delta \leq 4\eps^2$.
	\end{proof}
	
	In light of \cref{clm:MtoA,clm:BtoM}, in order to finish the proof of \cref{mthm:singvalue}, it suffices to prove the following lemma.
	\begin{lemma}\label{lem:nomadicspec}
		Let $A$ be uniformly sampled from $\cM_{m,n,s,t}$, and let $B$ be the respective nomadic walk matrix. Then, $\rho(B) \leq (1 + \eps)\sqrt{(s-1)(t-1)}$ with high probability, for some $\eps \le o(1)$.
	\end{lemma}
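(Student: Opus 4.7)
My plan is the standard trace method, adapted to the bipartite signed setting. Since $B$ is not symmetric, I will bound the spectral radius via
\[
\rho(B)^{2\ell} \;\le\; \|B^\ell\|_{\mathrm{op}}^2 \;\le\; \|B^\ell\|_F^2 \;=\; \tr\bigl(B^\ell (B^\top)^\ell\bigr),
\]
so that by Markov's inequality it suffices to control
\[
T_\ell \defeq \EE_{\sign}\bigl[\tr\bigl(B^\ell (B^\top)^\ell\bigr)\bigr]
\]
and to choose $\ell = \ell(n) \to \infty$ slowly enough that $T_\ell^{1/(2\ell)} \le (1+o(1))\sqrt{(s-1)(t-1)}$. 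Expanding the trace, each contributing term is indexed by a closed walk $v_0 \to v_1 \to \dots \to v_{4\ell} = v_0$ in $G$ whose consecutive edges form the non-backtracking pairs required by $B$ and $B^\top$; such a walk is non-backtracking everywhere except possibly at step $2\ell$ (the ``turn-around'' between the $B^\ell$ and $(B^\top)^\ell$ blocks). Each summand carries a product of edge signs, so by independence of the $\sign(e)$'s the expectation vanishes unless every edge used appears an even number of times. I will call the surviving walks \emph{hikes}.

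Having reduced the problem to counting hikes, the actual work lies in doing the count carefully enough for \emph{any} $s \le n^c$, with a bound scaling as $((s-1)(t-1))^\ell$ rather than $s^{2\ell}$. I will split into two regimes. In the sparse regime $s \le \polylog(n)$, I will invoke \cref{prop:goodGraph} so that $G$ is bicycle-free at radius $\Theta(\log n)$ with high probability, and then adopt the fresh/boundary/stale bookkeeping of \cite{MohantyOP20a}: a hike has at most $2\ell$ fresh steps, the bicycle-free property forces the number of boundary steps to be $o(\ell)$, and each stale step contributes only a small multiplicative factor. Every fresh step contributes a factor $t-1$ or $s-1$ depending on whether the step lands at a right or at a left vertex. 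The crucial bipartite refinement, not present in \cite{MohantyOP20a}, is that the step immediately following a fresh step cannot be stale, since a fresh step lands at a previously unseen vertex; it must be either fresh or boundary. Hence if $c_L, c_R$ denote the numbers of fresh steps landing at left and right vertices respectively, then $|c_L - c_R|$ is bounded by the number of boundary steps, which is $o(\ell)$. Combined with $c_L + c_R \le 2\ell$, this yields a fresh contribution at most $(1+o(1))^\ell ((s-1)(t-1))^\ell$, and together with an $m$ factor for the starting vertex gives $T_\ell \le m(1+o(1))^\ell ((s-1)(t-1))^\ell$.

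In the dense regime $\polylog(n) \le s \le n^c$, $G$ contains many short cycles so the bicycle-free hypothesis fails. Here I will instead adapt the canonical-word counting of \cite[Section 2.3.6]{Tao12} (for Gaussian matrices) to bipartite walks. The key observation is that any closed walk of length $4\ell$ visits at most $\ell+1$ distinct \emph{left} vertices, because the walk alternates between sides and has exactly $2\ell$ steps into $V_L$, of which each newly discovered left vertex accounts for at least two. This caps the total power of $s-1$ in the count at $\ell$, and one similarly caps the total power of $t-1$ at $\ell$, yielding once more $T_\ell \le m(1+o(1))^\ell((s-1)(t-1))^\ell$. In both regimes, choosing $\ell = \Theta\bigl(\log n / \log(st)\bigr)$ and applying Markov then produces $\rho(B) \le (1+o(1))\sqrt{(s-1)(t-1)}$ with high probability.

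The main obstacle is precisely the bipartite bookkeeping that separates $c_L$ from $c_R$. A naive count that treats the two sides of $G$ symmetrically, or uses the worst of $s$ and $t$ at every fresh step, would produce at best a bound in $\max(s,t)^{2\ell}$, which is too weak by a factor of $\alpha^{-\ell}$ when $\alpha$ is close to $1$ and would prevent the subsequent derivation of the sharp lower bound on $\sigma_{\min}(A)$ used in the proof of \cref{mthm:ell2spreadneg}. Getting the $(s-1)(t-1)$ factor exactly, uniformly over the whole range $s \le n^c$ and over all constants $\alpha \in (0,1)$, is what makes the counting delicate and is the step I expect to be the most technical.
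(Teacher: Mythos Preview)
Your approach is essentially the paper's: trace method reducing $\rho(B)$ to a count of even hikes, then splitting into sparse and dense regimes with exactly the bipartite refinements you describe (the $|c_L - c_R| \le c_{\bound}$ observation in the sparse case, and the ``at most $\ell$ distinct left vertices'' observation in the dense case). The hike-count bound $m(1+o(1))^\ell((s-1)(t-1))^\ell$ you state is precisely the paper's \cref{lem:hikecount}.

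There is, however, a concrete error in your choice of $\ell$. You propose $\ell = \Theta(\log n / \log(st))$, but with that choice $m^{1/(2\ell)} = \exp\bigl(\Theta(\log(st))\bigr) = (st)^{\Theta(1)}$, which is a constant bounded away from $1$ when $s,t$ are constants and diverges in the dense regime $s = n^{\Theta(1)}$. Your Markov step needs $m(1+o(1))^\ell / (1+\eps)^{2\ell} = o(1)$ with $\eps = o(1)$, which forces $\ell = \omega(\log m)$; the paper takes $\ell = \lfloor \log_2^2 m \rfloor$. Your phrase ``slowly enough'' has the direction backwards: $\ell$ must grow \emph{fast} enough to absorb the factor $m$ coming from the choice of start vertex, while still remaining small enough that the error factors in the hike count stay $(1+o(1))^\ell$. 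With $\ell = \log_2^2 m$ both constraints are met: in the sparse regime the bicycle-free radius $r = \Theta(\log n / \log s)$ still satisfies $r \ge 20 \ln \ell$ (this is all the \cite{MohantyOP20a} excess bound requires---the hike need not be contained in a single bicycle-free ball), and in the dense regime $\ell^{O(1)}/t = o(1)$ since $t = \omega(\polylog n)$.
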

	The rest of this section is devoted to the proof of \cref{lem:nomadicspec}. This requires some definitions.
	\begin{definition}[Hikes of various kinds]
		A $2\ell$-\emph{hike} $\cH$ in $G$ is a closed walk of length exactly $4\ell$, starting in $V_R$, which is non-backtracking except possibly between the $2\ell$-th and $(2\ell+1)$-th steps. We let $\sign(\cH)$ denote the product of $\sign(e)$ over all edges in $\cH$, with multiplicity. We say that a hike is \emph{even} if all edges are traversed with even multiplicity, and \emph{singleton-free} if no edge is traversed exactly once. Finally, we say that a hike with edges $\inparen{e_1, e_2, \dotsc,  e_{4 \ell}}$ is \emph{special} if $e_1 = e_{4\ell}$, $e_2 = e_{4\ell - 1}$, $e_{2\ell} = e_{2 \ell + 1}$, and $e_{2\ell - 1} = e_{2\ell + 2}$, namely, the last two steps are the reverse of the first two, and the $(2\ell+1)$-th and $(2\ell+2)$-th steps are the reverse of the two steps preceding them. Note that an odd-numbered step is always from $V_R$ to $V_L$, and vice-versa for an even numbered step.
	\end{definition}
	\begin{proof}[Proof of \cref{lem:nomadicspec}]
		Fix $\ell \in \N$ to be determined later. Let $T \defeq \tr\inparen{B^{\ell} \cdot (B^{\top})^{\ell}}$, and note that
		\begin{equation}\label{eq:traceBoundsOp}
		    \rho(B)^{2\ell} \leq \rho(B^{\ell})^2 \leq \norm{B^{\ell}}_2^2 = \norm{B^{\ell} (B^{\top})^{\ell}}_2 \le T \enspace.
		\end{equation}
		We turn to bounding $T$. First, note that
		\begin{align*}
		    \inparen{B^\ell}\inbrak{(f_1,f_2),(g_1,g_2)} &= \sum_{\substack{(e_1,e_2),(e_3,e_4)\dotsc,(e_{2\ell+1},e_{2\ell+2})\in \nomadicv{G}\\(e_1,e_2)=(f_1,f_2),~(e_{2\ell+1},e_{2\ell+2})=(g_1,g_2)}} \prod_{i=1}^\ell B\inbrak{(e_{2\ell-1},e_{2\ell}),(e_{2\ell+1},e_{2\ell+2})} 
		    \\ &= \sum_{\substack{\inparen{e_1, e_2, \dotsc,  e_{2\ell+2}}\\\text{ is a non-backtracking walk in }G\text{ and}\\ e_1=f_1, e_2=f_2, e_{2\ell+1}=g_1,e_{2\ell+2}=g_2}} \prod_{i=3}^{2\ell+2}\sign\inparen{e_i} \enspace,
		\end{align*}
		for $(f_1,f_2),(g_1,g_2)\in  \nomadicv{G}$. Similarly,
		\begin{align*}
		    \inparen{(B^{\top})^\ell}\inbrak{(g_1,g_2),(f_1,f_2)}  &= \sum_{\substack{\inparen{e_1, e_2, \dotsc,  e_{2\ell+2}}\\\text{ is a non-backtracking walk in }G\text{ and}\\ e_1=g_2, e_2=g_1, e_{2\ell+1}=f_2,e_{2\ell+2}=f_1}} \prod_{i=1}^{2\ell}\sign\inparen{e_i} \enspace.
		\end{align*}		
		
		Hence,
		\begin{align*}
		    T &= \sum_{(f_1,f_2),(g_1,g_2)\in \nomadicv{G}} \inparen{B^\ell}\inbrak{(f_1,f_2),(g_1,g_2)}\inparen{(B^{\top})^\ell}\inbrak{(g_1,g_2),(f_1,f_2)}  \\
		    &= \sum_{(f_1,f_2),(g_1,g_2)\in \nomadicv{G}} \sum_{\substack{\inparen{e_1, e_2, \dotsc , e_{2\ell+2}}\\ \text{and }\\ \inparen{e'_1, e'_2, \dotsc , e'_{2\ell+2}}\\ \text{ are non-backtracking walks in }G\text{ and}\\ e_1=e'_{2\ell+2}=f_1, e_2=e'_{2\ell+1}=f_2,\\ e_{2\ell+1}=e'_{2}=g_1, e_{2\ell+2}=e'_{1}=g_2}}\prod_{i=3}^{2\ell+2}\sign(e_i)\prod_{i=1}^{2\ell}\sign(e'_i) \\
		    &= \sum_{\substack{\inparen{e_1, e_2, \dotsc , e_{2\ell+2}}\\ \text{and }\\ \inparen{e'_1, e'_2, \dotsc , e'_{2\ell+2}}\\ \text{ are non-backtracking walks in }G\text{ and}\\ e_1=e'_{2\ell+2}, e_2=e'_{2\ell+1},\\ e_{2\ell+1}=e'_{2}, e_{2\ell+2}=e'_{1}}}\prod_{i=3}^{2\ell+2}\sign(e_i)\prod_{i=1}^{2\ell}\sign(e'_i) \enspace.
		\end{align*}
		 Observe that the each sequence $(e_1, e_2, \dots e_{2\ell+2}, e'_1, e'_2, e'_3, \dots, e'_{2\ell+2})$ in the above sum is a special $(2\ell + 2)$-hike $\cH$ in $G$. Moreover, 
		 \begin{flalign*}
		 &\sign(\cH) = \sign(e_1)\sign(e_2)\sign(e'_{2\ell+1}) \sign(e'_{2\ell+2})\left(\prod_{i=3}^{2\ell+2}\sign(e_i)\prod_{i=1}^{2\ell}\sign(e'_i)\right) = \prod_{i=3}^{2\ell+2}\sign(e_i)\prod_{i=1}^{2\ell}\sign(e'_i) \enspace,
		 \end{flalign*}
        as $e_1 = e'_{2\ell+2}$ and $e_2 = e_{2\ell+1}$. Hence, 
        \begin{equation*}
        T = \sum_{\cH\text{ is a special $2(\ell+1)$-hike in $G$}} \sign(\cH) \enspace.
        \end{equation*}

        We proceed by conditioning on $G$. When $G$ is fixed, each $\sign(e)$ for $e \in E$ is drawn independently from $\Fits$. Thus, we see that 
        $\EE_{\sign}[\sign(\cH) \mid G] = 1$ if every edge in $\cH$ appears with even multiplicity, and otherwise the expectation is $0$. So, we have
        \begin{equation*}
        \EE_{\sign}[T \mid G] = \text{\# of even special $2(\ell+1)$-hikes in $G$} \enspace.
        \end{equation*}
		We can upper bound the latter by $(st)^2 \cdot \inabs {\ehikes {2(\ell-1)}(G)}$, where $\ehikes{2(\ell-1)}(G)$ is the set of even $2(\ell-1)$-hikes in $G$. Indeed, this is because every even special $2(\ell + 1)$-hike can be formed from an even $2(\ell - 1)$ hike by attaching $2$ steps at the beginning and their reverse at the end (at most $st$ choices in total) and by attaching $2$ steps and their reverse in the middle (at most $st$ choices again). Let $\eps = o_{n\to\infty}(1)$ be a function to be chosen later, and let $a= \inparen{1+\eps}\sqrt{(s-1)(t-1)}$. By \cref{eq:traceBoundsOp} and Markov's inequality,
		\begin{flalign*}
		&\PROver{\sign}{\rho(B)\ge a\mid G}  \le \PROver{\sign}{T\ge a^{2\ell}\mid G} \\ 
		&\le \frac{\Eover{\sign}{T\mid G}}{a^{2\ell}} \le \frac{(st)^2\cdot \inabs{\ehikes{2(\ell-1)}(G)}}{a^{2\ell}} \enspace.
		\end{flalign*}
		Thus, to prove the lemma, it suffices to show that, for some $\ell$ and $\eps$ of our choice, $$(st)^2\cdot \inabs{\ehikes{2(\ell-1)}(G)}\le o\inparen{a^{2\ell}}$$ with high probability over the choice of $G$. Therefore, the following lemma implies \cref{lem:nomadicspec}.
		\begin{lemma}\label{lem:hikecount}
			Let $\ell = \floor{\log_2^2 m}$. Then with high probability over the choice of $G$, the number of even $2\ell$ hikes in $G$ is at most $m (s-1)^{\ell} (t-1)^{\ell} (1 + o(1))^{\ell}$.
		\end{lemma}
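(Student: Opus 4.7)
The plan is to split into two regimes depending on the size of $s$ relative to $\polylog n$. In both, I would use the fact from \cref{prop:goodGraph} that with high probability $G$ is bicycle-free at radius $2\ell+1 = O(\log^2 m)$ when $ts \le m^{c}$ for a small enough $c$, while for $ts$ above this threshold I would exploit ``dense'' counting. The unifying statement I want to prove is a deterministic bound: for any $G$ in the relevant class, $\abs{\ehikes{2\ell}(G)} \le m(s-1)^\ell(t-1)^\ell (1+o(1))^\ell$; taking expectations (or just a union bound over the high-probability event of \cref{prop:goodGraph}) then yields \cref{lem:hikecount}.

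For the sparse regime ($s \le \polylog n$), I would follow the bookkeeping scheme of \cite{MohantyOP20a}: walk a hike step-by-step and classify each of its $4\ell$ steps as \emph{fresh} (new edge, new endpoint), \emph{boundary} (new edge, old endpoint), or \emph{stale} (old edge). Since every edge appears with even multiplicity, the number of fresh steps $c$ plus the number of boundary steps $c_{\bound}$ is at most $2\ell$, and bicycle-freeness at radius $2\ell+1$ forces $c_{\bound} = o(\ell)$ by (a bipartite adaptation of) \cite[Theorem 2.13]{MohantyOP20a}. A fresh step out of a right (resp.\ left) vertex contributes a factor $s-1$ (resp.\ $t-1$) to the number of choices, and stale steps contribute only a subexponential factor. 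Writing $c = c_R + c_L$ with $c_R,c_L$ the number of fresh steps starting from right/left vertices, the raw bound is $m \cdot (1+o(1))^\ell (s-1)^{c_R}(t-1)^{c_L}$. The key extra step, and the one I expect to be the main technical obstacle here, is establishing $\abs{c_R - c_L} = o(\ell)$: a fresh step lands at a previously unvisited vertex, so the next step cannot be stale; hence fresh steps from $V_R$ are followed by fresh steps from $V_L$ or boundary steps, and an alternation argument bounds the discrepancy by $O(c_{\bound}) = o(\ell)$. Plugging $c_R, c_L \approx \ell$ gives $(s-1)^\ell(t-1)^\ell(1+o(1))^\ell$, as required.

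For the dense regime ($s = \omega(\polylog n)$, still $s\le n^c$), bicycle-freeness fails and the above scheme is unavailable. Instead I would adapt the Wigner-matrix trace-method counting from \cite[Section 2.3.6]{Tao12} to the bipartite setting. Here I would encode a hike as a closed walk on its skeleton graph, group the walks by the isomorphism type of the underlying multigraph on its distinct vertices, and observe the crucial bipartite fact: the walk alternates between $V_L$ and $V_R$, so among the at most $2\ell$ distinct vertices it visits, at most $\ell$ lie in $V_L$ and at most $\ell$ lie in $V_R$. Choosing these distinct vertices costs at most $m \cdot n^\ell \cdot m^{\ell-1}$ (start vertex plus distinct left/right vertices), while the shape of the walk contributes a Catalan-like factor of $(1+o(1))^\ell$; the sign/multiplicity constraint (each edge used evenly) together with the bipartite regularity contributes the missing $(s-1)^\ell(t-1)^\ell$ via bounding the number of ways to realize the skeleton as edges of $G$. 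The main subtlety to verify is that the edge-degree constraints genuinely save a factor of $\inparen{\frac{s-1}{s}}^{\ell}\inparen{\frac{t-1}{t}}^\ell$ against the naive $(st)^\ell$, matching the sparse regime's bound.

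The main obstacle I anticipate is the balance argument $\abs{c_R - c_L} = o(\ell)$ in the sparse case: one must be careful in dealing with the ``middle'' transition of the hike (where backtracking is allowed) and with how boundary steps interleave with fresh ones, since a naive bound would only give $c_R + c_L \le 2\ell$ and yield $(s-1)^{2\ell}$ or $(t-1)^{2\ell}$ in the worst case, which is far too weak. Once this balance is in hand the rest is bookkeeping. In the dense case the obstacle is more combinatorial: tracking the bipartite edge-multiplicity structure finely enough to recover the geometric mean $\sqrt{(s-1)(t-1)}$ per step rather than the maximum $s-1$.
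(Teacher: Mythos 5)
Your sparse-regime plan is essentially the paper's proof of \cref{lem:hikecountsparse}: the fresh/boundary/stale bookkeeping, the use of \cite[Theorem 2.13]{MohantyOP20a} to make boundary steps $o(\ell)$, and the balance inequality $\abs{c_R-c_L}\le c_{\bound}$ via ``a fresh step cannot be followed by a stale step'' are exactly the paper's ingredients. But one premise is wrong as stated: you invoke \cref{prop:goodGraph} for bicycle-freeness at radius $2\ell+1=\Theta(\log^2 m)$, which that proposition does not (and cannot) give --- it requires $\ell\le c\log m/\log(ts)$, and a ball of radius $\log^2 m$ in a biregular graph with $ts\ge 9$ is essentially the whole graph, which is very far from bicycle-free. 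The correct statement, and the one the paper uses, is bicycle-freeness at radius $r=\Theta(\log n/\log s)\ll\ell$; this suffices because \cite[Theorem 2.13]{MohantyOP20a} only needs $r\ge 20\ln\ell$ and then bounds the boundary steps by $O(\ell\log\ell/r)=o(\ell)$. With that correction your sparse argument goes through, though you then also need the paper's device of chopping stale stretches into non-backtracking blocks of length at most $r$ so that bicycle-freeness inside radius-$r$ balls lets each stretch be encoded cheaply; ``stale steps contribute a subexponential factor'' is hiding exactly this step.

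The dense regime is where the genuine gap lies. Counting the distinct vertices ``freely'' at cost $m\cdot n^{\ell}\cdot m^{\ell-1}$ is the Wigner-style count and is already vastly larger than the target $m(s-1)^{\ell}(t-1)^{\ell}$; for a \emph{fixed} biregular graph there is no later expectation or realization probability that can ``contribute the missing $(s-1)^{\ell}(t-1)^{\ell}$'' --- an upper bound only multiplies factors. The workable accounting (as in \cref{lem:hikecountdense}) instead pays a neighbor-choice factor $(s-1)$ or $(t-1)$ for each first traversal of an edge, and uses your correct observation that a $2\ell$-hike has at most $\ell$ distinct left vertices to cap the number of $(s-1)$ factors at $\ell$. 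Moreover, the subtlety you single out (recovering $(s-1)(t-1)$ per step versus the naive $(st)$) is immaterial in this regime, since $\ell/s,\ell/t=o(1)$ makes $(st)^{\ell}=(1+o(1))(s-1)^{\ell}(t-1)^{\ell}$; the actual difficulty is controlling the walks that deviate from the dominant shape (high-multiplicity edges, boundary steps, unforced returns): each such deviation costs a shape/location factor of order $\ell^{O(1)}$, and one must show it is paid for by a saved factor of $t-1=\omega(\polylog n)$, which is what the paper's forced/unforced encoding and the inequalities of \cref{claim:tedious} ($h\le 2\ell-E$, $u\le 4\ell-2f$) accomplish. A ``Catalan-like factor of $(1+o(1))^{\ell}$'' does not cover this, so as written the dense half of your argument does not yield the stated bound.
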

	We prove \cref{lem:hikecount} in the next two subsections. We break the proof into two cases, according to whether the sparsity $s$ of $G$ is $O(\log^c n)$ for some constant $c$, or $\omega(\log^c n)$ for every constant $c$. This then finishes the proof of \cref{lem:nomadicspec}, and thus also \cref{mthm:singvalue}.
	\end{proof}

	\subsection{Counting hikes when $s \leq \polylog(n)$}
	\label{sec:hikecountsparse}
	We prove \cref{lem:hikecount} when $s \leq \log^c n$ for some $c > 0$ by showing the following lemma.
	\begin{lemma}
	\label{lem:hikecountsparse}
		Fix $\ell \in \N$. Suppose that $G$ is bicycle-free at radius $r$ for some $r \geq 20 \ln \ell$. Then the number of singleton-free $2\ell$-hikes in $G$ is at most 
		\begin{equation*}
			O(\ell^4 m) [(s-1)(t-1)]^{\ell} (2r\ell (s-1)(t-1))^{O(\frac{\log \ell}{r}) \cdot \ell} \enspace.
		\end{equation*}
	\end{lemma}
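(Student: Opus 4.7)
The plan is to adapt the discovery-based counting of \cite{MohantyOP20a} to the biregular bipartite setting, introducing the new bookkeeping ingredient that the fresh steps split roughly evenly between left and right vertices. I encode each singleton-free $2\ell$-hike by recording, at each step $i \in [4\ell]$, a type from $\{\fresh, \bound, \stale\}$ together with whatever ``choice'' information is needed to reconstruct the next edge. A step is \emph{fresh} if it crosses an edge for the first time and arrives at a previously undiscovered vertex; \emph{boundary} if it crosses an edge for the first time but closes a cycle by landing on an old vertex; and \emph{stale} if it reuses an already-seen edge. Since the hike is singleton-free, every fresh step is eventually matched by a corresponding stale traversal of the same edge, so the number of fresh steps $c$ satisfies $c \le 2\ell$, and each boundary step opens a fresh cycle.

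Let $c_R$ (resp.\ $c_L$) denote the number of fresh steps \emph{starting} at a right (resp.\ left) vertex, so $c_R + c_L = c$. The crucial new observation is that after any fresh step, the walker stands at a brand-new vertex, hence the very next step cannot be stale; it is either fresh or boundary. This forces the fresh steps to alternate sides, except where interrupted by boundary or stale steps. Writing $b$ for the number of boundary steps, it follows that
\[
|c_R - c_L| \le 2(b + 1).
\]
A fresh step originating at a right vertex contributes a factor of at most $s-1$ to the number of continuations (one non-backtracking neighbor in $V_L$), while a fresh step originating at a left vertex contributes at most $t-1$. Therefore the fresh contribution to the count is bounded by
\[
(s-1)^{c_L}(t-1)^{c_R} \le [(s-1)(t-1)]^{c/2} \cdot \max\!\left(\tfrac{s-1}{t-1},\tfrac{t-1}{s-1}\right)^{b+1},
\]
and using $c \le 2\ell$ and $s,t \le \polylog(n)$, this is $[(s-1)(t-1)]^{\ell}$ times a $(1+o(1))^{\ell}$ correction provided $b = o(\ell/\log\log n)$.

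To bound $b$ and the number of stale choices, I invoke the bicycle-free assumption exactly as in \cite[Thm.~2.13]{MohantyOP20a}: each boundary step creates a cycle at the current vertex, and bicycle-freeness at radius $r$ forces the boundary steps to be well-separated in the hike, giving $b = O(\ell \log \ell / r)$. For each stale step, the choice information lives within a tree-like neighborhood and can be encoded using $O(\log \ell)$ bits via the standard ``pairing'' argument of \cite{MohantyOP20a}, which contributes an overall factor of $(2r\ell(s-1)(t-1))^{O(\log \ell / r)\cdot \ell}$. Finally, the starting vertex contributes a factor of $m$, and the freedom to backtrack at the middle step of the hike contributes an extra factor of at most $\ell^4$ from the usual bookkeeping. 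Multiplying these together gives the claimed bound $O(\ell^4 m)\cdot [(s-1)(t-1)]^{\ell} \cdot (2r\ell(s-1)(t-1))^{O(\log \ell / r)\cdot \ell}$.

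The main obstacle is making the balance argument $|c_R - c_L| \le O(b)$ fully rigorous in the presence of the single permitted backtracking step between positions $2\ell$ and $2\ell+1$: that ``middle break'' can create one extra alternation violation, and must be absorbed into the $+1$ above without double-counting. Handling this cleanly, together with verifying that the encoding of stale/boundary data from \cite{MohantyOP20a} ports to the bipartite setting without hidden $s$-vs-$t$ asymmetries, is where I would expect to spend most of the proof.
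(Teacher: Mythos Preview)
Your overall strategy matches the paper's: classify steps as fresh/boundary/stale, bound the number of boundary steps by $O((\log\ell)/r)\cdot\ell$ via the excess bound $|E(G_\cH)|+1-|V(G_\cH)|$ from \cite{MohantyOP20a}, and use the key new observation that a fresh step must be followed by a fresh or boundary step (hence $|c_R-c_L|\le O(b)$) to write the fresh contribution as $[(s-1)(t-1)]^{c/2}$ times a correction absorbed into the error term. This part is correct and is exactly what the paper does.

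The genuine gap is your treatment of stale steps. You write that ``for each stale step, the choice information \ldots\ can be encoded using $O(\log\ell)$ bits,'' but there are roughly $2\ell$ stale steps, so this would cost $\ell^{\Theta(\ell)}$, which swamps the target bound. The paper does \emph{not} encode stale steps individually. It groups them into maximal contiguous non-backtracking \emph{stretches}; each stretch must begin immediately after a boundary step or at the hike's midpoint, so there are at most $b+2$ of them, and after cutting each into pieces of length at most $r$ one gets $O((\log\ell)/r)\cdot\ell$ stretches in total. A stretch is then a non-backtracking walk of length $\le r$ inside a radius-$r$ ball of $G$, which by hypothesis is bicycle-free; hence it is determined by its endpoint (at most $4\ell$ choices, specified by the time it first appeared), the number of times it winds around the unique cycle in that ball (at most $r/4$), and the winding direction. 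This is what yields the $(2r\ell)^{O((\log\ell)/r)\cdot\ell}$ factor; your ``standard pairing argument'' handwave does not supply it. Two smaller corrections: the lemma itself does not assume $s,t\le\polylog(n)$, so that assumption should not appear in the proof---the extra $(s-1)(t-1)$ factors coming from the $O(b)$ imbalance and from boundary steps are exactly what populate the $(s-1)(t-1)$ inside the error term; and the $O(\ell^4)$ prefactor arises from recording $c_\fresh,c_\bound,c_\stale$ and the length of the compressed shape, not from the midpoint backtracking.
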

	
	\begin{proof}[Proof of \cref{lem:hikecount} for $s \leq \polylog(n)$ from \cref{lem:hikecountsparse}]
	Let $c>0$ so that $s \leq \log^c n$. By \cref{prop:goodGraph}, it holds with high probability that $G$ is bicycle-free at radius $r$ for $r = c' \log n/\log s$ for some absolute constant $c'$. Assume that this event holds. As $\ell \leq \log_2^2 m \leq \log_2^2 n$, it follows that $20 \ln \ell = O(\log \log n)$ and $r = \Omega(\log n/\log \log n)$, so that $r \geq 20 \ln \ell$ for $n$ sufficiently large. Hence, by \cref{lem:hikecountsparse}, we conclude that the number of singleton-free $2\ell$-hikes in $G$, and in particular the number of even $2\ell$-hikes in $G$, is at most
	\begin{flalign*}
				O(\ell^4 m) [(s-1)(t-1)]^{\ell} (2r\ell (s-1)(t-1))^{O(\frac{\log \ell}{r}) \cdot \ell} \enspace.
	\end{flalign*}
	We clearly have that $O(\ell^4) = (1+o(1))^{\ell}$. We have shown that $\log \ell/r = O((\log ''\log n)^2/\log n)$,  and we have $2 r \ell (s-1)(t-1) \leq \log^{c''} n$ for some constant $c''$. We thus have that
	\begin{flalign*}
	(2r\ell (s-1)(t-1))^{O(\frac{\log \ell}{r})} = \exp{\poly(\log\log n)/\log n} = 1 + o(1) \enspace,
	\end{flalign*}
	which shows that the number of even $2\ell$-hikes is at most 
	\begin{equation*}
	m (1+o(1))^{\ell} [(s-1)(t-1)]^{\ell} \enspace,
	\end{equation*}
	as required.
	\end{proof}
	
	\begin{proof}[Proof of \cref{lem:hikecountsparse}]
	We now turn to proving \cref{lem:hikecountsparse}. Recall that since $G$ is bicycle-free at radius $r$, the subgraph $\ball_G(v, r)$ of $G$ contains at most one cycle for every $v \in V_L$. We will count the number of even $2\ell$-hikes via an encoding argument.
	
	Let us first consider a fixed even $2\ell$-hike $\cH$. For this hike, we let $G_{\cH}$ denote the subgraph of $G$ consisting of the edges traversed by $\cH$. For intuition, one should think of the graph $G_{\cH}$ as being ``discovered'' by the decoding algorithm as it follows the hike $\cH$.
	
	We classify each step of $\cH$ as either \emph{fresh}, \emph{stale}, or \emph{boundary}, as follows:
	\begin{itemize}
	    \item If a step of $\cH$ traverses a new edge and it steps into a previously unexplored vertex, it is \emph{fresh}.
	    \item If a step of $\cH$ traverses a new edge but it steps into a previously explored vertex, then it is \emph{boundary}.
	    \item If a step of $\cH$ traverses a previously used edge, then it is \emph{stale}.
	\end{itemize}
	We note that a singleton-free $2\ell$-hike must have at most $2\ell$ fresh steps, as every edge must be traversed at least twice. Next, we observe that the number of boundary steps is exactly $\abs{E(G_{\cH})} + 1 - \abs{V(G_{\cH})}$. As $\abs{V(G_{\cH})} \leq 4 \ell$, we have $r \geq 10 \ln (4 \ell)$. By \cite[Theorem 2.13]{MohantyOP20a}, we thus have that the number of boundary steps is at most
	\begin{flalign*}
	\abs{E(G_{\cH})} + 1 - \abs{V(G_{\cH})} \leq \frac{1 + \ln \abs{V(G_{\cH})}}{r} \cdot \abs{V(G_{\cH})} \leq O(\frac{\log \ell}{r}) \cdot \ell \enspace.
	\end{flalign*}
	
	We group the stale steps into ``stretches'' of contiguous stale steps. First, we can group the stale steps into maximal contiguous blocks of stale steps. Note that we have at most $O(\frac{\log \ell}{r}) \cdot \ell$ blocks, as each block of stale steps must follow a boundary step or the midpoint of the hike. If the midpoint of the hike is contained in one block, we split the block into two. This makes the sequence of stale steps in each block non-backtracking. Next, we can divide the blocks so that they each have size at most $r$. As there are trivially at most $4 \ell$ stale steps, this can only increase the number of blocks by at most $\frac{4\ell}{r} = O(\frac{\log \ell}{r}) \cdot \ell$ (as the ``worst case'' is when all blocks have size exactly $r + 1$). We have thus argued that we can divide the stale steps into $O(\frac{\log \ell}{r}) \cdot \ell$ blocks of contiguous stale steps, each of size at most $r$, and all steps in each block are non-backtracking.
	
	We now describe the encoder and decoder simultaneously. For any hike $\cH$, we let the \emph{shape} of $\cH$ be the sequence $\sigma \in \{\fresh,\bound,\stale\}^{4\ell}$ that indicates which steps are ``fresh'', ``boundary'', and ``stale''. The first part of the encoding is the following sequence of numbers:
	\begin{itemize}
	\item $c_{\fresh}$, the number of times $\fresh$ appears in $\sigma$,
	\item $c_{\bound}$, the number of times $\bound$ appears in $\sigma$,
	\item $c_{\stale}$, the number of ``stale stretches''. That is, the number of blocks of contiguous stale steps where each block has size at most $r$ and the midpoint of the hike does not occur inside the block.
	\end{itemize}
	There are naively at most $O(\ell^3)$ choices for these numbers.
	
	Next, we let $\sigma'$ denote the shape $\sigma$, only we compress a stale stretch into one symbol $\stale$. We call $\sigma'$ the \emph{compressed shape} of $\cH$. The next part of the encoding is $\sigma'$. Note that there are naively $O(\ell) \cdot \ell^{O(\frac{\log \ell}{r}) \cdot \ell}$ choices of $\sigma'$ (once $c_{\fresh}$, $c_{\bound}$ and $c_{\stale}$ are chosen), as we can specify its length ($O(\ell)$ choices), and then the locations of the $\bound$ and $\stale$ symbols ($\ell$ choices for each symbol, and there are $O(\frac{\log \ell}{r}) \cdot \ell$ symbols).
	
	Now, the encoder specifies the start vertex of $\cH$ ($m$ choices). For each $\fresh$ and $\bound$ step, the encoder specifies which neighbor\footnote{We fix some arbitrary order for the neighbors of each vertex.} the hike moves to ($(s-1)$ choices if we start at a right vertex, and $(t-1)$ if we start at a left vertex\footnote{Note that there are $s$ choices for the first step of the hike, but we will count this as $s-1$ and instead pay an extra factor of $\frac{s}{s-1} \leq 2$ in the final bound.}). A stale stretch is encoded as follows. Let $u$ be the start vertex of the stretch, and let $v$ be the end vertex of the stretch. Let $K = \ball_G(u, r)$, which is bicycle-free. The stale stretch is a path of length $\leq r$ from $u$ to $v$ in $K$, so it can be specified by: \begin{inparaenum}[(1)]
	\item specifying the end vertex $v$ (at most $4\ell$ choices, and is specified by giving the first step\footnote{Note that since the step is \emph{stale}, the hike must have reached $v$ at some earlier point.} at which $v$ is reached in the hike),
	\item specifying the number of the times the (unique) cycle in $K$ is traversed by the path (which is an integer $\leq \frac{r}{4}$), and
	\item specifying which direction\footnote{The cycle can only be traversed in one direction, as $K$ has only one cycle and each stale stretch is non-backtracking.} the cycle is traversed in (at most $2$ choices)
	\end{inparaenum}.
	For each stale stretch, the encoder specifies the above information.
	
	Above, we have argued that the information given by the encoder is enough to reconstruct the hike $\cH$ uniquely. This implicitly defines the decoder. It therefore remains to upper bound the total number of strings that can be outputted by the encoder.
    	
	Let $\cH$ be a singleton-free $2\ell$-hike with shape $\sigma$, and let $c_{\fresh}$, $c_{\bound}, c_{\stale}$ be the counts defined earlier (which are fixed by the shape $\sigma$). We observe that the number of singleton-free $2\ell$-hikes with shape $\sigma$ is at most
	\begin{flalign*}
	 2m (s-1)^{c_{\fresh}^{(L)} + c_{\bound}^{(L)}} \cdot (t-1)^{c_{\fresh}^{(R)} + c_{\bound}^{(R)}} \cdot (2 r \ell)^{c_{\stale}}\enspace,
	\end{flalign*}
	where $c_{\fresh}^{(L)}, c_{\bound}^{(L)}$ are the number of fresh/boundary steps taken that start at a left vertex, and similarly for $c_{\fresh}^{(R)}, c_{\bound}^{(R)}$. Note that this is well-defined, as $c_{\fresh}^{(L)}, c_{\bound}^{(L)}, c_{\fresh}^{(R)}, c_{\bound}^{(R)}$ are determined by the shape $\sigma$.
	
	We next observe that every $c_{\fresh}^{(R)}$ step is followed by either a $c_{\fresh}^{(L)}$ or $c_{\bound}^{(L)}$ step, so $c_{\fresh}^{(R)} \leq c_{\fresh}^{(L)} + c_{\bound}^{(L)}$, and similarly $c_{\fresh}^{(L)} \leq c_{\fresh}^{(R)} + c_{\bound}^{(R)}$. Thus, we have
	\begin{flalign*}
		&\abs{c_{\fresh}^{(R)} - c_{\fresh}^{(L)}} \leq  c_{\bound}^{(L)} + c_{\bound}^{(R)} = c_{\bound} \\
		&\implies \abs{c_{\fresh}^{(R)}  - \frac{c_{\fresh}}{2}} \leq \frac{c_{\bound}}{2} \text{ and } \abs{c_{\fresh}^{(L)}  - \frac{c_{\fresh}}{2}} \leq \frac{c_{\bound}}{2} \enspace.
	\end{flalign*}
	Hence, the number of singleton-free $2\ell$-hikes with shape $\sigma$ is at most 
	\begin{equation}
	 2m (s-1)^{\frac{1}{2}c_{\fresh} + \frac{3}{2} c_{\bound}} \cdot (t-1)^{\frac{1}{2}c_{\fresh} + \frac{3}{2}c_{\bound}} \cdot (2 r \ell)^{c_{\stale}}\label{eq:hikecountsparsebound}\enspace.
	\end{equation}
	As the above bound is determined completely by the compressed shape $\sigma'$, we see that it upper bounds the number of singleton-free $2\ell$-hikes with compressed shape $\sigma'$.
	
	Recall that the number of valid compressed shapes $\sigma'$ is at most $O(\ell^4) \cdot \ell^{O(\frac{\log \ell}{r}) \cdot \ell}$. Because \cref{eq:hikecountsparsebound} is an increasing function in $c_{\fresh}$, $c_{\bound}$ and $c_{\stale}$, we can upper bound the total number of singleton-free $2\ell$-hikes by $(\text{\# of choices of $\sigma'$}) \cdot (\text{maximum value of \cref{eq:hikecountsparsebound}})$. We thus have that the number of singleton-free $2\ell$-hikes is at most
	\begin{flalign*}
	 &O(\ell^4) \ell^{O(\frac{\log \ell}{r}) \cdot \ell} \cdot 2m (s-1)^{\ell + O(\frac{\log \ell}{r}) \cdot \ell} \cdot (t-1)^{\ell + O(\frac{\log \ell}{r}) \cdot \ell} \cdot (2 r \ell)^{O(\frac{\log \ell}{r}) \cdot \ell}\\
	 &= O(m \ell^4) [(s-1) (t-1)]^{\ell} \cdot (2 r \ell(s-1)(t-1))^{O(\frac{\log \ell}{r}) \cdot \ell} \enspace,
	\end{flalign*}
	which finishes the proof of \cref{lem:hikecountsparse}.
	\end{proof}

	\subsection{Counting hikes when $s = \omega(\polylog(n))$}
	\label{sec:hikecountdense}
	\begin{lemma}
	\label{lem:hikecountdense}
	Let $s = \omega(\log^c n)$ for every constant $c$, let $\ell = \floor{\log_2^2 m}$, and let
	$G$ be an arbitrary $(t,s)$-biregular graph. Then the number of even $2\ell$-hikes in $G$ is at most
	\begin{equation*}
	m \cdot (s-1)^{\ell} \cdot (t-1)^{\ell} \cdot \left(1 + \frac{\ell^{O(1)}}{t}\right) \enspace.
	\end{equation*}
	\end{lemma}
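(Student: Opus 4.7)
The plan is to adapt the standard moment-method counting argument of Tao \cite[Section 2.3.6]{Tao12} for bounding the operator norm of a random Wigner matrix, exploiting the bipartite structure to split the contribution of each fresh step between $(s-1)$ and $(t-1)$ factors. Unlike the sparse regime of \cref{lem:hikecountsparse}, we cannot rely on bicycle-freeness of $G$, so stale steps must be handled via bounded degree and a careful shape-counting argument instead.

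First, I would classify each of the $4\ell$ steps of an even $2\ell$-hike $\cH$ as \fresh{} (new edge to a previously-unvisited vertex), \bound{} (new edge to an already-visited vertex), or \stale{} (already-used edge), exactly as in \cref{sec:hikecountsparse}, yielding a shape $\sigma\in\{\fresh,\bound,\stale\}^{4\ell}$. Evenness forces $c_{\fresh}+c_{\bound}\le 2\ell$. The leading contribution comes from hikes with $c_{\bound}=0$, whose set of traversed edges forms a tree; combined with non-backtracking (except at the midpoint), this forces such a hike to be an ``out-and-back'' walk (a non-backtracking path of length $2\ell$ followed by its reverse, backtracking only at the midpoint leaf). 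A direct count shows there are at most $m\cdot s(s-1)^{\ell-1}(t-1)^{\ell}=m(s-1)^{\ell}(t-1)^{\ell}(1+O(1/s))$ such walks.

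For hikes with $c_{\bound}\ge 1$, I would encode each hike by its shape, its starting vertex, and a choice at each step: at most $s-1$ for a fresh left-bound step, $t-1$ for a fresh right-bound step, at most $4\ell$ for a boundary step (target among previously-visited vertices), and at most $\min(2\ell,s)$ for a stale step (edge among those already incident to the current vertex). The main step, and the main obstacle in this regime, is to show that the fresh contribution aggregates to $(s-1)^{\ell}(t-1)^{\ell}$ with nearly balanced exponents, so the fresh cost is at most $(s-1)^{\ell}(t-1)^{\ell}\cdot s^{O(c_{\bound})}$. I would establish this via the parity argument from Step~3 of \cref{sec:techSingularValues}: a fresh step lands at a vertex whose only explored incident edge is the one just used, so the following step must be fresh or boundary, never stale. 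This implies $|c_{\fresh}^{(L)}-c_{\fresh}^{(R)}|\le c_{\bound}+O(1)$ and, together with $c_{\fresh}^{(L)}+c_{\fresh}^{(R)}\le 2\ell$, yields $c_{\fresh}^{(L)},c_{\fresh}^{(R)}\le \ell+O(c_{\bound})$.

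Finally, I would sum over all shapes with $c_{\bound}\ge 1$. The number of shapes with fixed $(c_{\bound},c_{\stale})$ is at most $\binom{4\ell}{c_{\bound},c_{\stale}}\le (4\ell)^{c_{\bound}+c_{\stale}}$, and dividing the resulting total by the ideal factor $m(s-1)^{\ell}(t-1)^{\ell}$ leaves a sum of the form $\sum_{c_{\bound}\ge 1,c_{\stale}\ge 0}(\ell^{O(1)}/t)^{c_{\bound}}\,(\ell^{O(1)}/s)^{c_{\stale}}$. Since $s,t=\omega(\log^c n)$ for every constant $c$ and $\ell=O(\log^2 m)$, each individual factor is $o(1)$, so this geometric series converges and contributes at most $\ell^{O(1)}/t$, matching the claimed error term. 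Combining this with the out-and-back main term completes the bound.
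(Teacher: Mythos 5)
Your analysis of the dominant term is fine and matches the paper: with $c_{\bound}=0$ the explored subgraph is a tree, non-backtracking walks in a tree are simple paths, so the hike is an out-and-back walk, giving $m\cdot s(s-1)^{\ell-1}(t-1)^{\ell}$. The parity argument bounding $|c_{\fresh}^{(L)}-c_{\fresh}^{(R)}|$ by $c_{\bound}+O(1)$ is also sound (and, since $s/t=1/\alpha=O(1)$, it in fact gives a fresh cost of $(s-1)^{\ell}(t-1)^{\ell}\cdot O(1)^{c_{\bound}}\cdot(t-1)^{-(2\ell-c_{\fresh})}$, which is what you actually need rather than the stated $s^{O(c_{\bound})}$); interestingly, the paper's own dense-regime proof does not use this parity argument at all, but instead caps the $(s-1)$-exponent at $\ell$ via the observation that a $2\ell$-hike visits at most $\ell$ distinct left vertices.

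The genuine gap is your treatment of the \stale{} steps, and it is fatal as written. Since every edge of an even hike is traversed at least twice, $c_{\fresh}+c_{\bound}\le 2\ell$ and hence $c_{\stale}\ge 2\ell$ for \emph{every} hike. Your encoding charges each stale step up to $\min(2\ell,s)$ choices and your shape count charges $4\ell$ per stale position, so for any shape with $c_{\bound}\ge 1$ the bound acquires a factor of order $(\ell)^{\Theta(\ell)}$ on top of $(s-1)^{\ell}(t-1)^{\ell}$, which completely swamps the target error $1+\ell^{O(1)}/t$. Correspondingly, the claimed final sum $\sum_{c_{\bound}\ge1,c_{\stale}\ge0}(\ell^{O(1)}/t)^{c_{\bound}}(\ell^{O(1)}/s)^{c_{\stale}}$ does not follow from your encoding: there is no mechanism producing a $1/s$ per stale step (stale steps do not trade off against the fresh budget, which is capped at $2\ell$ independently of $c_{\stale}$), and summing from $c_{\stale}=0$ is incoherent since $c_{\stale}\ge 2\ell$ always. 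What is missing is a way to make almost all stale steps \emph{free}: the paper does this by splitting non-fresh, non-boundary steps into \forced{} steps (the current vertex has a unique available twice-unused edge, so the decoder needs no information) and \unforced{} steps, proving via its \cref{claim:tedious} that the number of unforced steps is at most $4\ell-2f(\cH)$, i.e.\ controlled by the same deficiency ($b$ and high-multiplicity mass) that buys you factors of $1/(t-1)$; it also treats edges of multiplicity greater than $2$ as a separate class, and only pays $\poly(\ell)$ per $\highmult$, \bound{} or \unforced{} step, recovering the positions of \fresh{} versus \forced{} steps at negligible cost (a \forced{} step cannot follow a \fresh{} step except at the midpoint). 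Without some substitute for this forced/unforced bookkeeping, your argument does not yield the lemma.
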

	Note that as $\ell = O(\log^2 m) = O(\log^2 n)$ and $t = \Omega(s) = \omega(\log^c n)$ for every constant $c$, it follows that $\frac{\ell^{O(1)}}{t}  = o(1)$, so \cref{lem:hikecountdense} implies \cref{lem:hikecount} when $s = \omega(\polylog(n))$.
	
	\begin{proof}[Proof of \cref{lem:hikecountdense}]
	We begin by introducing some terminology. Let $\cH$ be an even $2\ell$- hike in $G$, and let $G_{\cH}$ be the subgraph of $G$ induced by the edges in $\cH$. As in the case for $s \leq \polylog(n)$, one should think of the graph $G_{\cH}$ as being ``discovered'' as we traverse the hike $\cH$.
	
	For an edge $e$, we let $a_e$ denote the multiplicity of the edge $e$ in $\cH$, i.e., the number of times that it is traversed (in either direction). We say that $e$ is high multiplicity if $a_e  > 2$. 
	
	A step in $\cH$ is a traversal of an edge in $\cH$. That is, a step is an edge in $\cH$, but we count different occurrences of an edge in $\cH$ as distinct steps. For a step in $\cH$, we say that the endpoint of the step is the vertex where the hike ``ends up'' after this step.	
	
	We say that a step in the hike is high multiplicity (labeled as $\highmult$) if it uses a high multiplicity edge. We say that a non-$\highmult$ step in the hike is fresh (labeled as $\fresh$) if it traverses a new edge for the first time and the endpoint is a previously unvisited vertex. If the endpoint is instead a vertex that has been previously visited, we call it boundary (labeled as $\bound$). We note that any non-high multiplicity edge is traversed exactly twice in $\cH$.  We say that a non-$\highmult$ step is forced (labeled as $\forced$) if at that point in the hike, the current vertex has exactly one non-high multiplicity neighboring edge that has not yet been traversed twice. All other steps in the hike are unforced (labeled as $\unforced$).
	
	For an edge $e$ in $\cH$ with $a_e = 2$, we call the edge ``fresh'' if when $e$ is traversed by $\cH$ for the first time, the step is fresh, and otherwise we call $e$ ``boundary''.
	
    \begin{remark}
    We note that the definitions above differ slightly from those in \cref{sec:hikecountsparse}. For instance, here we separate out the $\highmult$ steps into their own class, and only split non-$\highmult$ steps into $\fresh$ and $\bound$ steps.
    \end{remark}
	
	For a hike $\cH$, we define the following quantities:
	\begin{itemize}
		\item $E(\cH)$, the number of distinct edges in $\cH$,
		\item $n(\cH)$, the number of distinct left vertices in $\cH$,
		\item $m(\cH)$, the number of distinct right vertices in $\cH$,
		\item $h(\cH)$, the number of distinct high multiplicity edges in $\cH$,
		\item $f(\cH)$, the number of distinct fresh edges in $\cH$,
		\item $b(\cH)$, the number of distinct boundary edges in $\cH$,
		\item $u(\cH)$, the number of unforced steps in $\cH$.
		\item $\tau(\cH)$, the type of $\cH$, which the tuple $(h(\cH), f(\cH), b(\cH), u(\cH))$, along with a string $\{\highmult, \fresh, \bound, \forced, \unforced\}^{4 \ell}$ that specifies the type of each step in $\cH$.
	\end{itemize}
	Finally, we say that a type $\tau$ is \emph{valid} if there exists a hike $\cH$ of type $\tau$.
	
	We first prove some basic facts about the above quantities.
	\begin{claim}
		For all $\cH$, $n(\cH) \leq \ell$ and $m(\cH) \leq \ell + 1$.
	\end{claim}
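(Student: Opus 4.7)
The plan is to count vertex visits and combine three features of $\cH$: the bipartite alternating structure, the even-multiplicity condition on edges, and the non-backtracking constraint. Writing the walk as $v_0, v_1, \ldots, v_{4\ell-1}, v_{4\ell} = v_0$ with $v_0 \in V_R$, let $a_v \defeq \abs{\{i \in \{0, 1, \ldots, 4\ell-1\} : v_i = v\}}$ count visits to $v$, and let $a_e$ denote the multiplicity of $e$ in $\cH$. Positions alternate with $v_i \in V_R$ for even $i$ and $v_i \in V_L$ for odd $i$, so $\sum_{v \in V_L} a_v = \sum_{v \in V_R} a_v = 2\ell$. Each visit to $v$ contributes exactly two step-incidences at $v$, hence $2 a_v = \sum_{e \ni v} a_e$; combining this with $a_e \ge 2$ (evenness) yields the key inequality $a_v \ge d_{G_\cH}(v)$.

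For the bound $n(\cH) \le \ell$, I would use the fact that every left vertex $v \in V(G_\cH)$ is visited at some odd position $i \in \{1, 3, \ldots, 4\ell-1\}$, and every such $i$ differs from $2\ell$, the only internal transition permitted to backtrack. Non-backtracking at $i \to i+1$ then forces the incoming and outgoing edges at this visit to be distinct, so $v$ already has two distinct incident edges from this single visit alone, giving $d_{G_\cH}(v) \ge 2$ and hence $a_v \ge 2$. Summing over left vertices yields $n(\cH) \le \tfrac{1}{2} \sum_{v \in V_L} a_v = \ell$.

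The bound $m(\cH) \le \ell+1$ is similar but admits up to two exceptional vertices. Suppose $v \in V_R \cap V(G_\cH)$ has $a_v = 1$, visited at position $i$ (necessarily even). Then $\sum_{e \ni v} a_e = 2$, and together with $a_e \ge 2$ this forces $d_{G_\cH}(v) = 1$ with a single incident edge of multiplicity $2$; both uses of this edge must then occur at the lone visit to $v$, meaning the incoming and outgoing steps at $v$ traverse the same edge, a backtracking step. This is permitted only at the midpoint $i = 2\ell$ or at the wrap-around $i = 0$ (between steps $4\ell$ and $1$), so the only right vertices with $a_v = 1$ are among $\{v_0, v_{2\ell}\}$. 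Letting $x \le 2$ denote this count (and observing that if $v_0 = v_{2\ell}$ then that vertex is visited at least twice and $x = 0$), the inequality $2\ell = \sum_{v \in V_R} a_v \ge x + 2(m(\cH) - x)$ rearranges to $m(\cH) \le \ell + x/2 \le \ell + 1$.

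The main subtle point will be justifying that wrap-around backtracking is indeed permitted by the hike definition: this is implicit since special hikes (which lie in the set being counted, with $e_1 = e_{4\ell}$) are themselves hikes, and this extra allowance is precisely what produces the asymmetry $n(\cH) \le \ell$ versus $m(\cH) \le \ell + 1$ between the two sides of the bipartition.
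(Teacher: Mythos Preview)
Your proof is correct and follows essentially the same idea as the paper's: a vertex visited only once would force a backtracking step at that visit, which is disallowed except at designated positions. The paper's version differs only in bookkeeping---it counts the start and end positions of the closed walk separately (so right-vertex visits total $2\ell+1$), which makes the start vertex automatically appear twice and leaves only the midpoint as the single possible exception; you instead index cyclically over $\{0,\dots,4\ell-1\}$ (so both sides total $2\ell$) and allow both the midpoint and the wrap-around position as exceptions. These conventions are equivalent, and your degree-sum identity $2a_v = \sum_{e\ni v} a_e$ is a clean way to package the argument.
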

	\begin{proof}
		Consider a vertex $v$ traversed in the hike $\cH$ that is not the midpoint. We claim that $v$ must appear at least twice in the hike. (The start and end of the hike, which are the same vertex, are each counted separately.) Indeed, this is because any edge $e = (u,v)$ that is first traversed as $u \to v$ must be traversed at least twice, and cannot be traversed via the path $u \to v \to u$ because $\cH$ is non-backtracking. We note that since $\cH$ has length $4 \ell$, the midpoint of $\cH$ is a right vertex, and $\cH$ traverses $2\ell$ left vertices and $2\ell + 1$ right vertices (counting multiplicities). Since each vertex counted in $n(\cH)$ appears at least twice, it follows that $2 n(\cH) \leq 2 \ell$. Since each vertex except for one (the midpoint) counted in $m(\cH)$ appears at least twice, it follows that $1 + 2(m(\cH) - 1) \leq 2 \ell + 1$, and so $m(\cH) \leq \ell + 1$.
	\end{proof}
	
	\begin{claim}
		\label{claim:tedious}
		$f(\cH) + h(\cH) + b(\cH) = E(\cH)$, $h(\cH) \leq 2 \ell - E(\cH)$, and $u(\cH) \leq 4 \ell - 2 f(\cH)$.
	\end{claim}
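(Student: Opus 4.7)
I will establish the three statements in order. The first identity $f(\cH) + h(\cH) + b(\cH) = E(\cH)$ is immediate from the definitions: every distinct edge in $\cH$ is either high-multiplicity (and contributes to $h(\cH)$) or satisfies $a_e = 2$, in which case its unique first traversal is fresh or boundary according to whether its endpoint was previously unvisited or visited. These three classes partition the distinct edges of $\cH$.

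For the second inequality, the key observation is that $\cH$ is even, so every high-multiplicity edge has $a_e \ge 4$ (being even and strictly greater than $2$), while every other edge has $a_e = 2$. Summing the multiplicities yields
\[
4\ell \;=\; \sum_{e} a_e \;\ge\; 4\,h(\cH) + 2\,\bigl(E(\cH) - h(\cH)\bigr) \;=\; 2\,E(\cH) + 2\,h(\cH),
\]
which rearranges to $h(\cH) \le 2\ell - E(\cH)$.

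For the third inequality $u(\cH) \le 4\ell - 2 f(\cH)$, the strategy is to show that each of the $f(\cH)$ fresh edges accounts for \emph{two} steps of $\cH$ that cannot receive the label $\unforced$. One such step is the edge's first traversal, which by definition is labeled $\fresh$. The other is the edge's (unique) second traversal, which I claim must be labeled $\forced$. The intuition is that if a fresh edge $e = \{u,v\}$ is first traversed $u \to v$, then $v$ is discovered by $\cH$ at that moment; the hike must then leave $v$ and eventually return to perform the second traversal of $e$. At the instant the second traversal occurs, we will show that every other non-high-multiplicity $G_\cH$-edge incident to the starting vertex has already been traversed twice: otherwise, after the second traversal of $e$, the hike could not again reach that vertex through a non-high-multiplicity edge and still use each remaining edge exactly its prescribed number of times within the $4\ell$-step budget, while also returning to the start of $\cH$.

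The main technical obstacle is to make this ``second traversal of a fresh edge is always forced'' claim rigorous. I expect to prove it by induction on the forest structure formed by fresh edges within $G_\cH$, viewing each fresh edge as the root of a sub-walk that must be completely ``absorbed'' before the edge is retraced, and using the non-backtracking property of $\cH$ together with the multiplicity constraints. Granting this, the $f(\cH)$ fresh edges contribute $2 f(\cH)$ non-$\unforced$ steps among the $4\ell$ total, leaving at most $4\ell - 2 f(\cH)$ steps that can be labeled $\unforced$, which yields the inequality.
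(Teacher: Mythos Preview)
Your arguments for the first two statements are correct and match the paper's proof.

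For the third inequality, however, your key lemma---that the second traversal of every fresh edge is labeled $\forced$---is false. Consider the $4$-cycle with $V_R=\{a,b\}$, $V_L=\{1,2\}$ and edges $a\!-\!1,\;1\!-\!b,\;b\!-\!2,\;2\!-\!a$, and the even $2$-hike ($\ell=2$)
\[
a \to 1 \to b \to 2 \to a \to 1 \to b \to 2 \to a.
\]
The edge $a\!-\!1$ is fresh (step~1 discovers vertex $1$), but its second traversal is step~5. At that moment the current vertex $a$ has \emph{two} non-high-multiplicity incident edges in $G_{\cH}$ not yet traversed twice, namely $a\!-\!1$ and $a\!-\!2$ (each used once so far). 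Hence step~5 is $\unforced$, not $\forced$. Your proposed induction on the ``forest structure formed by fresh edges'' cannot succeed, precisely because boundary edges can create cycles that allow the hike to revisit a vertex before all its incident fresh edges have been closed off.

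The paper's proof proceeds quite differently: it tracks, at each vertex $v$, a counter $c_v$ of ``available'' edges (non-high-multiplicity edges traversed exactly once). Each $\unforced$ step from $v$ decrements $c_v$ from a value $\ge 2$, so the number of unforced steps at $v$ is bounded by the number of times $c_v$ increases (minus one). Increases of $c_v$ are then charged to arrivals at $v$ via high-multiplicity or boundary edges, yielding $u(\cH)\le \sum_{e:a_e>2}a_e + 2b(\cH) = 4\ell - 2f(\cH)$. This charging to boundary and high-multiplicity edges is exactly what your approach is missing.
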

	\begin{proof}
		We observe that each distinct edge is counted exactly once in $f(\cH),h(\cH),b(\cH)$, so the first equality holds.
		
		For the first inequality, we observe that $4h(\cH) \leq \sum_{e \in \cH : a_e > 2} a_e = 4 \ell - \sum_{e : \in \cH : a_e  = 2} a_e = 4 \ell - 2 (f(\cH) + b(\cH))$, where we use the fact that $a_e$ is always even. This implies that $h(\cH) \leq 2 \ell - (f(\cH) + b(\cH) + (\cH)) = 2 \ell - E(\cH)$.
		
		To show the final inequality, we first make the following definition. We call a ``return step'' a step that is either $\unforced$ or $\forced$, and we call a non-high multiplicity edge ``available'' if it has been traversed exactly once.
		
		Let $v$ be any vertex in $\cH$ that is not the start vertex. We keep track of a counter $c_v$, that counts the number of available edges at $v$ every time $\cH$ is at $v$.
		When the hike first arrives at $v$, it does so via a $\fresh$ or $\highmult$ step, and so $v$ has at most one available edge. So, $c_v$ starts at either $0$ or $1$. Every time the hike leaves $v$ via a return step, it decreases the number of available edges by $1$. So, $c_v$ decreases by $1$ in this case, and moreover the step must be $\forced$ if $c_v = 1$ prior to the return step.
		Every time the hike leaves $v$ via a non-return step that is also a non-high multiplicity edge, it adds $1$ to the number of available edges adjacent to $v$, and every time the hike returns to $v$ it will remove an available edge, unless it returns to $v$ via a high multiplicity or boundary edge. This means that the $c_v$ increases by at most $1$ (resp.\ $2$) every time $\cH$ returns to $v$ using a high multiplicity edge (resp.\ boundary edge), and cannot increase otherwise. 
		
		By the above, it follows that the number of $\unforced$ steps from $v$ is at most $-1 + \#$ of times $c_v$ increases. We note that $\#$ of times $c_v$ increases is $\leq \#$ of times a high multiplicity edge enters $v$ $+ 2 \cdot \#$ of times a boundary edge enters $v$. Summing over all vertices, we see that $u(\cH) \leq \sum_{e : a_e > 2} a_e + 2 b(\cH)$. Since $\sum_{e : a_e > 2} = 4 \ell - 2 (f(\cH) + b(\cH))$, we are done.
		\end{proof}

	We are now ready to bound the number of even $2\ell$-hikes $\cH$. Fix ordering on the vertices of $G$, so that if we have a vertex $u$, then the neighbors of $u$ are numbered uniquely from $1, \dots, s-1$ (if $u$ is a left vertex) or $1, \dots, t-1$ (if $u$ is a right vertex).
	
	First, we consider the case when $E(\cH) = 2\ell$ and $b(\cH) = 0$. This will be the dominant term. By \cref{claim:tedious}, it follows that $h(\cH) = 0$, so $\cH$ has no high multiplicity edges. Since $b(\cH) = 0$, we see that $f(\cH) = E(\cH) = 2 \ell$. Now, we consider the sequence of steps made by $\cH$. The first step is $\fresh$, and a $\fresh$ step must always be followed by a $\highmult$, $\fresh$, or $\bound$ step, except at the midpoint of the hike. Because $h(\cH) = b(\cH) = 0$, this means that all steps must be $\fresh$ until the midpoint. But then this means that there can be no more $\fresh$ steps, as the number of $\fresh$ steps is at most $2\ell$, and so all the remaining steps must be $\forced$. We can count the number of such hikes by \begin{inparaenum}[(1)] \item picking the start vertex in $V_R$ ($m$ choices), and \item for every fresh step, picking the neighbor of the current vertex to move to\end{inparaenum}. We see that there are exactly $\ell$ fresh steps starting from a right vertex and $\ell$ fresh steps starting from a left vertex, so the total number of such hikes is at most
	\begin{equation}
	m (s-1)^{\ell} (t-1)^{\ell} \enspace. \label{eq:dominantterm}
	\end{equation}
	
	Now, we assume that $E(\cH) < 2\ell$ or $b(\cH) > 0$. Fix a type $\tau$ with either $E < 2 \ell$ or $b > 0$. We bound the number of hikes $\cH$ where $\tau(\cH) = \tau$ via an encoding argument. 
	
	The encoding of $\cH$ is as follows.
	\begin{itemize}
		\item For each high multiplicity edge $(u,v)$ (where the edge is first traversed as $u \to v$ in $\cH$), specify which neighbor $v$ of $u$ is the endpoint of the edge.
		\item Specify the start vertex.
		\item For each $\highmult$ step, specify which of the $h(\cH)$ edges is the high multiplicity edge being used at this step.
		\item For each fresh step $\fresh$, specify which neighbor of the current vertex is the endpoint of the edge.
		\item For each boundary step $\bound$, we specify the first location in the hike of the vertex that is the endpoint of this edge.
		\item For each unforced step $\unforced$, specify the first location in the hike of the vertex that is the endpoint of this edge.
		\item For each forced step $\forced$, we do nothing.
	\end{itemize}
	We now specify the decoder. That is, we show how to uniquely construct the hike $\cH$ from the above data. Indeed, suppose we have reconstructed the hike $\cH$ correctly for the first $i$ steps, ending at vertex $u$. (Note that the base case is trivial, as we are given the start vertex and so we can reconstruct $\cH$ after $0$ steps.) Then, if the $(i+1)$-th step is $\highmult$, we know which high multiplicity edge will be traversed, and so we know which neighbor of $u$ to move to.\footnote{This technically holds only for the first traversal of the high multiplicity edge, as later traversals may be in the other direction. However, after the first traversal we have constructed the edge $(u,v)$, so this can be done trivially on later steps.}
		If the step is $\fresh$, then we also know which neighbor to move to. If the step is $\bound/\unforced$, then we know that we move to the vertex that we were at in step $j < i$, so we can correctly reconstruct this step. Finally, if the step is $\forced$, then there is only one possible edge that can be traversed (and moreover, we can easily find out which edge this is, by simply keeping track of the number of times each ``seen'' edge has been traversed), so we can also reconstruct this step. Thus, one can uniquely reconstruct the hike $\cH$ from the encoding.
	
	We now count the number of hikes $\cH$ with $\tau(\cH) = \tau$. We observe that the total number of vertices specified (other than the start vertex) is $f + h$. When choosing to move to a right vertex, there are $(t-1)$ choices (except for the start vertex, where there are $m$ choices), and when choosing to move to a left vertex there are $(s-1)$ choices. So, we see that there are at most $m \cdot (s-1)^{\ell} \cdot (t-1)^{f + h - \ell}$ possibilities, using the fact that $n(\cH) \leq \ell$ always. This accounts for the choice of vertices from the $\highmult$ and $\fresh$ steps.
	
	For each $\highmult$ step, we specify which of the $h$ edges we use. As $h \leq 4 \ell$ naively, this is at most $4\ell$ choices per $\highmult$ step. Each $\bound$ step and $\unforced$ step has naively at most $4 \ell$ choices, as we traverse at most $4\ell$ distinct vertices. So, the total number of even $2\ell$-hikes $\cH$ of type $\tau$ is therefore at most
	\begin{equation*}
	\label{eq:tempbound}
		m \cdot (s-1)^{\ell} \cdot (t-1)^{f + h  - \ell} \cdot (4\ell)^{h + b + u} \enspace.
	\end{equation*}
	
	Next, we observe that for a fixed $f, h,b,u$, there are at most $(4 \ell) (4\ell)^{2(h + b + u)}$ valid types $\tau$ (namely, types $\tau$ corresponding to at least one $\cH$). Indeed, we only need to specify the string of $\{\highmult, \fresh, \bound, \forced, \unforced\}^{4 \ell}$, which can be done as follows. We choose the locations of the $\highmult$, $\bound$, and $\unforced$ steps, of which there are naively $(4\ell)^{h + b + u}$ choices. This gives us a partially filled in string, with the locations of the $\fresh$ and $\forced$ steps still undetermined. However, we observe that we cannot have a $\forced$ step after a $\fresh$ step (except at the midpoint), and so we can determine these steps by specifying the number of $\forced$ steps in each ``unfilled gap'' in the string, where we split the gap containing the midpoint (if it exists) into two gaps. There are at most $h + b + u + 2$ such gaps, as there are $h + b + u + 1$ unfilled gaps, and we split the gap with the midpoint into two. So, we have at most $(4\ell)^{h + b + u + 2}$ choices here.
	
	Thus, for a fixed $f, h, b, u$, by \cref{eq:tempbound} there are at most
	\begin{flalign*}
		m \cdot (s-1)^{\ell} \cdot (t-1)^{f + h  - \ell} \cdot (4\ell)^{h + b + u} \cdot (4 \ell) (4\ell)^{2(h + b + u)} 
	\end{flalign*}
	 even $2\ell$-hikes with $f(\cH) = f$, $h(\cH) = h$, $b(\cH) = b$ and $u(\cH) = u$. 
	
	By \cref{claim:tedious}, the total number of $\cH$ with $E(\cH) = E$ and $b(\cH) = b$ is therefore at most 
	\begin{flalign*}
		m \cdot (s-1)^{\ell} \cdot (t-1)^{E - b - \ell} \cdot (4\ell)^{15 (2 \ell - E) + 9b} \cdot (4 \ell)^3 \enspace,
	\end{flalign*}
	where we multiply by $(4\ell)^2$ to account for the choices of $h$ and $u$.
	Now, to get the total number of hikes $\cH$ with either $E(\cH) < 2 \ell$ or $b(\cH) > 0$, we sum over all possible choices of $E$ and $b$. We have $b \leq E$, $E \leq 2 \ell$, and that either $E < 2 \ell$ or $b > 0$. Using the fact that $t = \omega(\log^c m)$ for all constants $c$ and that $\ell = \log^2 m$, we conclude that the total number of such hikes is at most
	\begin{flalign*}
		m \cdot (s-1)^{\ell} \cdot (t-1)^{\ell} \cdot \frac{\ell^{O(1)}}{t} \enspace.
	\end{flalign*}
	Thus, by \cref{eq:dominantterm}, the total number of even $2\ell$ hikes is $m \cdot (s-1)^{\ell} \cdot (t-1)^{\ell} \cdot (1 + \frac{\ell^{O(1)}}{t})$, as required.
	\end{proof}

	\appendix
	

	\section{A weak positive bound on $\ell_2$-spread}
	\label{sec:ell2spreadpos}
	We prove \cref{mprop:ell2spreadpos}, which we recall below.
    \elltwospreadpos*
	We use the following lemma whose proof we defer to the end of this section.
	\begin{lemma}
	\label{lem:expandertoell2spread}
	Let $A \in \cM_{m,n,s,t}$ such that $G_A$ is a $(\gamma, \mu)$-unique expander, where $0 <\mu \le \frac 29$, $0 < \gamma \le 2\mu$ and $\gamma n \ge \inparen{\frac 1\mu}^c$. Then, for every $\gamma n$-sparse $x \in \R^n$, there holds \begin{equation}\label{eq:expandertoell2spreadEq}
	    \twonorm{Ax} \geq c_1\cdot \sqrt{t} \cdot \left( \frac{\sqrt{t}}{\norm{A}_2} \right)^{\frac{c_2\cdot \log (\gamma n)}{\log \frac{1}{\mu}}} \cdot \norm{x}_2\enspace.
	\end{equation}
	\end{lemma}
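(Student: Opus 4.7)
The plan is to adapt the ``unique-neighbor peeling'' argument from Basak--Rudelson \cite[Lemma~3.4]{BasakR17}. Write $\lambda := \|A\|_2/\sqrt t \ge 1$ (the lower bound $\|A\|_2 \ge \sqrt t$ follows from each column of $A$ having $\ell_2$-norm exactly $\sqrt t$), and define $J(k) := c_2 \log k/\log(1/\mu)$ for a universal constant $c_2$ to be fixed. I prove by induction on the sparsity $k \le \gamma n$ that every $k$-sparse $x$ satisfies
\[\|Ax\|_2 \ge c_1 \sqrt{t}\,\lambda^{-J(k)}\,\|x\|_2.\]
The base case $k = 1$ is immediate since a single column of $A$ has $\ell_2$-norm $\sqrt t$. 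The conclusion of the lemma then follows by taking $k = \gamma n$.

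For the inductive step, fix a $k$-sparse $x$ with support $S$, normalized so $\|x\|_2 = 1$. Unique expansion of $S$ yields $\sum_{v\in S} N_u(v) \ge t(1-\mu)|S|$, where $N_u(v)$ counts the unique $S$-neighbors of $v$. Since $N_u(v) \le t$, a short double-counting argument shows the \emph{core} $S^* := \{v \in S : N_u(v) \ge t(1-\mu)/2\}$ satisfies $|S\setminus S^*| \le 2\mu|S|/(1+\mu) \le 2\mu|S|$. On each $r \in U(S)$ whose unique $S$-neighbor $v_r$ lies in $S^*$ we have $(Ax)_r = \pm x_{v_r}$, yielding the \emph{unique-neighbor energy bound}
\[\|Ax\|_2^2 \ge \sum_{v\in S^*} N_u(v)\,x_v^2 \ge \tfrac{t(1-\mu)}{2}\,\|x_{S^*}\|_2^2.\]
Now case-split on a threshold $\delta \in (0,1)$. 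If $\|x_{S^*}\|_2^2 \ge \delta$, the energy bound alone gives $\|Ax\|_2 \ge \sqrt{t(1-\mu)\delta/2}$. Otherwise $\|x_{S\setminus S^*}\|_2 \ge \sqrt{1-\delta}$, and since $|S \setminus S^*| \le 2\mu|S| \le \gamma n$, the induction hypothesis applies to $x_{S\setminus S^*}$; the reverse triangle inequality then produces
\[\|Ax\|_2 \ge c_1 \sqrt t\,\lambda^{-J(2\mu k)}\sqrt{1-\delta} - \|A\|_2\sqrt\delta.\]
Tuning $\sqrt\delta \asymp c_1 \lambda^{-J(2\mu k) - 1}$ balances the two regimes at a common bound $\|Ax\|_2 \gtrsim c_1 \sqrt t\,\lambda^{-J(2\mu k) - 1}\|x\|_2$. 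A direct computation gives $J(k) - J(2\mu k) = c_2(1 - \log 2/\log(1/\mu)) \ge c_2/2$ whenever $\mu \le 2/9$, so taking $c_2$ large enough absorbs the ``$-1$'' term and closes the induction.

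The principal obstacle is \emph{constant accounting across $\Theta(\log(\gamma n)/\log(1/\mu))$ recursive levels}: a naive balancing loses a constant factor at each step, which would compound into a multiplicative loss destroying the claimed form of the bound. The remedy is to absorb the per-step losses into the exponent of $\lambda$ by enlarging $c_2$; this is legitimate because $\lambda \ge 1$, so a larger exponent only weakens the target bound (and in fact $\lambda \ge 1/\sqrt\alpha$ by a trace bound on $AA^\top$, giving uniform control). The remaining hypotheses enter at specific spots: $\gamma \le 2\mu$ together with $\mu \le 2/9$ guarantees $|S\setminus S^*| \le 2\mu|S| \le 2\mu\gamma n \le \gamma n$ at every step, keeping the recursion inside the expansion regime; $\mu \le 2/9$ also forces $(1-\mu)/2 > 1/3$ in the energy bound and $\log(1/\mu) > 2\log 2$ so that $J(k) - J(2\mu k) \ge c_2/2$; finally $\gamma n \ge (1/\mu)^c$ ensures the recursion has sufficient depth for $\log(\gamma n)/\log(1/\mu)$ to be at least a constant, which is needed for the asymptotic form of the claimed exponent to dominate any lower-order leftover terms.
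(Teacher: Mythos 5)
Your proof is correct in substance but follows a genuinely different route from the paper's. The paper gives a single-pass ``shelling'' argument: it sorts the coordinates of $x$ by decreasing magnitude, partitions the support into geometrically growing blocks $R_0,\dots,R_M$ of sizes $b^\ell$ with $b=\lfloor 1/(2\mu)\rfloor$, locates the last block $\ell^*$ carrying $\ell_2$-mass at least $\beta^{\ell^*+1}$ (with $\beta = t/(32\twonorm{A}^2)$), and splits $x=u+v$ into the prefix and the tail. The prefix is handled by a \emph{disjoint} unique-neighbor claim (each of $b$ equal consecutive sub-blocks receives $\geq(1-\mu b)tk$ private unique neighbors), where the monotone ordering is essential: it converts ``minimum magnitude in sub-block $S_i$'' into ``average magnitude in sub-block $S_{i+1}$''. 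The tail is controlled by $\twonorm{A}$ and the maximality of $\ell^*$. Your inductive peeling replaces both the sorting and the disjoint-matching claim with a per-vertex core/non-core dichotomy inside a single set, recursing on the $\leq 2\mu k$ non-core vertices; this is more modular and avoids the matching claim entirely, at the cost of running a strong induction whose depth $\log(\gamma n)/\log(1/(2\mu))$ matches the paper's $M$. Both arguments end up balancing an ``expansion energy'' term against an $\twonorm{A}\sqrt\delta$ error term at each of $\Theta(\log(\gamma n)/\log(1/\mu))$ scales, and both yield the same shape of bound.

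One caveat, which you correctly identify but slightly underestimate: closing your induction requires $\lambda^{c_2/2-1}\geq O(1)$ with $\lambda=\twonorm{A}/\sqrt{t}$, which forces $c_2$ to depend on how far $\lambda$ is from $1$; the bound $\lambda\geq 1/\sqrt\alpha$ makes this uniform only for $\alpha$ bounded away from $1$, so strictly speaking $c_2$ is $\alpha$-dependent rather than universal. This is not a defect relative to the paper: its own final step must absorb the factor $32^{-(M+1)/2}$ into $(\sqrt t/\twonorm{A})^{c_2\log(\gamma n)/\log(1/\mu)}$ using $\sqrt t/\twonorm{A}\leq\sqrt\alpha<1$, which carries exactly the same dependence. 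Also, minor point: you invoke $\gamma\le 2\mu$ to keep the recursion in the expansion regime, but $|S\setminus S^*|\le 2\mu|S|\le|S|\le\gamma n$ needs only $2\mu\le 1$; that hypothesis is really only needed in the paper's argument (to fit $b^M$ inside $[n]$), and your proof simply does not use it.
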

	\begin{remark}\label{rem:expandertoell2spreadParameters}
	    For every $A\in \cM_{m,n,s,t}$ we have $\twonorm{A}\ge \twonorm{Ae_1} = \sqrt t$. Hence, the term $\frac{\sqrt t}{\twonorm A}$ in the right-hand side of \cref{eq:expandertoell2spreadEq} is at most $1$.
	\end{remark}
	
	\begin{proof}[Proof of \cref{mprop:ell2spreadpos}]
	By \cref{prop:uniqueExpansion,mthm:singvalue}, it holds with high probability, $G_A$ is a $(\gamma, \mu)$-unique expander with $\gamma = \Omega(\alpha^2/t^4), \mu = 2/t$, and also that $\norm{A}_2 \leq 2 \sqrt{s}$. Assume that these events hold.
	
	Let $y\in \R^n$ be $(\gamma,\eps)$-compressible for some $\eps > 0$ to be determined later, and let $x\in \R^n$ be $\gamma n$-sparse with $\twonorm{x-y} \le \eps$. Then,
	\begin{equation}\label{eq:ell2spreadposAyLowerBound}
	\twonorm{Ay} \ge \twonorm{Ax} - \twonorm{A(x-y)} \ge \twonorm{Ax} - \eps\twonorm{A} \ge \twonorm{Ax} - 2\eps \sqrt s.
	\end{equation}
	
	Note that for $n$ large enough, $\gamma$ and $\mu$ satisfy the hypothesis of \cref{lem:expandertoell2spread}. In particular, $\mu \le \frac 29$ by our assumption that $t\ge 9$. Hence, \cref{eq:expandertoell2spreadEq} applies to $x$. It follows that
	$$\twonorm {Ax} \ge c_1 \sqrt t \cdot\inparen{\frac{\sqrt t}{2\sqrt s}}^{\frac{c_2\log \frac{\alpha^2 n}{t^4}}{\log t}}.$$ 
	
	Denote the right-hand side of the above by $a$, so \cref{eq:ell2spreadposAyLowerBound} yields
	$$\twonorm{Ay} \ge a - 2\eps \sqrt s.$$
	Taking $\eps = \frac{a}{4\sqrt s}$, we then have that $\twonorm{Ay} > 0$, and, in particular, $y\notin \ker(A)$. Thus, $\ker(A)$ is $(\gamma,\eps)$-$\ell_2$-spread. The proposition follows since $\eps \ge \alpha^{O\inparen{\frac{\log n}{\log t}}}$.
	\end{proof}
	
	\smallskip
	\begin{proof}[Proof of \cref{lem:expandertoell2spread}]
	Write $G= G_A = (V_L,V_R,E)$. We need the following claim.
	\begin{claim}\label{claim:epandertoell2spreadmatchingpropert}
	Fix $k,b\in \N$ such that $kb\le \gamma n$. Let $S_1,\dots, S_b\subseteq V_L$ be disjoint sets, each of size $k$. Then, there exist sets $T_1,\dots, T_b$, each of size $\ge (1-\mu b)tk$, such that for each $1\le i\le b$, every $r\in T_i$ has exactly one neighbor in $S_i$ and no neighbors in any of the sets $S_j$ ($j\ne i$).
	\end{claim}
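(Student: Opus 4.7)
The plan is to apply the unique expansion property of $G_A$ once, to the full union $S := \bigsqcup_{i=1}^b S_i$, and then distribute the resulting ``expansion deficit'' among the individual $S_i$'s by a simple edge-counting argument. Note that $|S| = kb \le \gamma n$, so the unique-expansion hypothesis applies and gives $|U(S)| \ge (1-\mu)t|S| = (1-\mu)tkb$.

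For each $i$, I would define
\begin{equation*}
T_i := \{\, r \in V_R : r \text{ has exactly one neighbor in } S, \text{ and that neighbor lies in } S_i\,\}.
\end{equation*}
By construction the $T_i$'s are pairwise disjoint, and every $r \in T_i$ has exactly one neighbor in $S_i$ and no neighbors in any $S_j$ with $j \ne i$. Moreover $\bigsqcup_i T_i = U(S)$, since a unique neighbor of $S$ belongs to exactly one $S_i$. So the $T_i$'s have all the structural properties required by the claim; the only thing left is the lower bound $|T_i| \ge (1-\mu b)tk$.

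For the size bound, I count edges from $S_i$ to $V_R$. By $t$-left-regularity this count is exactly $tk$. Let $R_i$ be the number of edges from $S_i$ to $V_R \setminus U(S)$. Since each vertex in $U(S)$ has exactly one edge back to $S$ (hence to the particular $S_i$ it is unique for), the number of edges from $S_i$ to $U(S)$ equals $|T_i|$. Therefore
\begin{equation*}
|T_i| \;=\; tk - R_i \;\ge\; tk - \sum_{j=1}^b R_j.
\end{equation*}
The sum $\sum_j R_j$ is the total number of edges from $S$ to $V_R \setminus U(S)$, which equals $t|S| - |U(S)| \le t|S| - (1-\mu)t|S| = \mu t|S| = \mu t k b$. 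Substituting gives $|T_i| \ge tk - \mu t k b = (1-\mu b)tk$, as required.

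No real obstacle arises here: the proof is a one-shot application of unique expansion to $S$, combined with the observation that the total ``bad edges'' coming out of $S$ is bounded in aggregate by $\mu t k b$, which is then distributed in the worst case entirely to a single $S_i$. The only thing to be careful about is verifying that vertices in $T_i$ have no neighbors in the other $S_j$'s, which follows immediately from the defining condition that $r$'s unique neighbor in $S$ lies in $S_i$.
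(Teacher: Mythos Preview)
Your proof is correct and takes essentially the same approach as the paper: both define $T_i$ as the unique neighbors of $S=\bigsqcup_j S_j$ whose unique neighbor lies in $S_i$, and apply unique expansion once to $S$. The only cosmetic difference is in the final counting step: the paper bounds $|T_i|\ge |U(S)|-\sum_{j\ne i}|T_j|\ge (1-\mu)tkb-(b-1)tk$ using $|T_j|\le tk$, whereas you count edges from $S_i$ and bound the aggregate deficit by $\mu tkb$; both computations yield $(1-\mu b)tk$.
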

	\begin{proof}
	    Let $S = \sqcup_{i = 1}^b S_i$, and let $T_i = U(S) \cap N(S_i)$. Note that $T_1,\dotsc, T_b$ are pairwise disjoint. As $\abs{S} \leq \gamma n$, the unique expansion of $G$ yields $\sum_{i = 1}^b \abs{T_i} = \inabs{\bigsqcup_{i=1}^b T_i} = \abs{U(S)} \geq t(1-\mu) \abs{S} = t(1 - \mu) k b$. Hence, 
	\begin{flalign*}
	\abs{T_i} \geq t(1 - \mu) k b - \sum_{j \ne i} \abs{T_j} \geq t(1 - \mu) k b - t k(b-1) = (1-\mu b)tk\enspace. \qedhere
	\end{flalign*}
	\end{proof}
	
	\medskip
Let $b = \floor{\frac{1}{2\mu}}\ge 2$. Let $M \in \N$ satisfy $b^{M-1}< \gamma n\le b^M$. We partition the interval $\inbrak{b^M}$ into consecutive intervals $R_0, R_1,\dotsc, R_{M}$ where $R_\ell = \inset{b^{\ell-1}+1,\dotsc, b^\ell}$ for $0\le \ell \le M$.

	Let $x$ be a $\gamma n$-sparse vector with $\norm{x}_2 = 1$. Without loss of generality, assume that $\abs{x_1} \geq \abs{x_2} \geq \dots \geq \abs{x_n}$. In particular,
    $\supp(x) \subseteq \inbrak{\floor{\gamma n}} \subseteq  \inbrak{b^M} =  \bigcup_{\ell=0}^{M}  R_\ell \subseteq [n]$. The last inclusion is due to $b^M =b^{M-1} b< \gamma n b \le \frac{\gamma n}{2\mu} \le n$, where the last inequality uses our assumption that $\gamma \le 2\mu$.
	
	For $0\le \ell\le M$, let $z_{\ell} = x_{R_{\ell}}$, i.e., the vector that is equal to $x$ on the set $R_{\ell}$, and is $0$ otherwise. Let $\beta = \frac{t}{32 \norm{A}_2^2}$. Note that $\beta \leq \frac{1}{2}$, due to \cref{rem:expandertoell2spreadParameters}. Consequently,
	\begin{flalign*}
    \sum_{\ell = 0}^{M} \norm{z_{\ell}}_2^2 = \norm{x}_2^2 = 1 \geq \frac{\beta}{1 - \beta} > \sum_{\ell = 0}^{M} \beta^{\ell + 1} \enspace.
	\end{flalign*}
	Hence, there must exist $0\le \ell\le M$ such that $\norm{z_{\ell}}_2 \geq \beta^{\ell+1}$. Let $\ell^*$ denote the largest $\ell$ for which this occurs. 
			
	Let $u = \sum_{\ell = 1}^{\ell^*} z_{\ell}$ and $v = \sum_{\ell = \ell^* + 1}^{M} z_{\ell}$, so that $x = u + v$.
	Because the $z_{\ell}$'s have disjoint support, we have \begin{equation}\label{eq:expandertoell2spreadvUpperBound}\norm{v}_2^2 = \sum_{\ell = \ell^* + 1}^{M} \norm{z_{\ell}}_2^2 \leq \sum_{\ell = \ell^* + 1}^{\infty} \beta^{\ell + 1} \leq 2 \beta^{\ell^* + 2}\enspace,
	\end{equation}
	 where we used that $\beta \le \frac 12$.
	
	We claim that
	\begin{equation}\label{eq:expandertoell2spreadAuLowerBound}
	    \twonorm{Au}^2 \ge \frac t2 \beta^{\ell^*+1}.
	\end{equation}
	Let $S = \bigcup_{\ell=0}^{\ell^*}R_\ell \supseteq \supp(u)$. We consider two cases. First, if $\ell^* = 0$, then $\abs{S} = 1$, so that $\norm{Au}_2^2 = t \norm{u}_2^2 \geq \frac{t}{2} \beta^{2} = \frac{t}{2} \beta^{2(\ell^* + 1)}$, implying \cref{eq:expandertoell2spreadAuLowerBound}. 
	
	Next, suppose that $\ell^* \geq 1$. Note that $\inabs S =  \inabs{\bigcup_{\ell=0}^{\ell^*} R_\ell} = b^{\ell^*}$. 
	Partition $S$ into $b$ consecutive intervals $S_1, \dots, S_b$, each of size $k := \frac{\abs{S}}{b} = b^{\ell^* - 1}$, defined by $S_\ell = \inset{(\ell-1) k+1,\dotsc, \ell k}$.
	
	Note that $S_2, \dots, S_b$ partition $R_{\ell^*}$, and $S_1 = \cup_{\ell = 0}^{\ell^* - 1} R_{\ell}$.
	By \cref{claim:epandertoell2spreadmatchingpropert}, there exist sets $T_1, \dots, T_b$, each of size $\geq (1-\mu b)tk \ge \frac{tk}2$, such that each $r \in T_i$ is in $U(S_i)$ and is not in $N(S_j)$, for all $j \ne i$. 
	Let $\eta_i = \min_{j \in S_i} \abs{u_{j}}$. Then, $\abs{(Au)_r} \geq \eta_i$ for each $r \in T_i$. Moreover, we must also have $\eta_i^2 \geq \norm{u_{S_{i+1}}}_2^2/k$, where $u_{S_{i+1}}$ denotes the restriction of $u$ to the set $S_{i+1}$, as there are $k$ entries in $S_{i+1}$, each with absolute value $\leq \eta_i$. \cref{eq:expandertoell2spreadAuLowerBound} follows since
	\begin{flalign*}
	\norm{A u}_2^2&\ge \sum_{i=1}^b \sum_{r\in T_i}\inabs{(Au)_r}^2 \geq \sum_{i = 1}^b \frac{tk}2 \eta_i^2 \geq \frac{tk}2 \sum_{i = 2}^b \frac{\twonorm{u_{S_{i}}}^2}{k} =\frac t2 \twonorm{u_{\cup_{i = 2}^b S_{i}}}^2 = \frac t 2 \twonorm{z_{\ell^*}}^2 \ge \frac t 2 \beta^{\ell^*+1}\enspace.
	\end{flalign*}
	
	\cref{eq:expandertoell2spreadvUpperBound,eq:expandertoell2spreadAuLowerBound} now yield
	\begin{flalign*}
	&\norm{Ax}_2 = \norm{A(u+v)}_2 \geq \norm{Au}_2 - \norm{A}_2 \norm{v}_2 \geq \frac{\sqrt{t}}{\sqrt{2}} \cdot \beta^{\frac{\ell^* + 1}2} - 2 \norm{A}_2 \cdot \beta^{\frac{\ell^*+ 2}2} \\
	&= \beta^{\frac{\ell^* + 1}2} \inparen{\frac{\sqrt{t}}{\sqrt{2}}  - 2 \norm{A}_2 \sqrt \beta} \geq \beta^{\frac{\ell^*+1}2} \cdot \frac{\sqrt{t}}{2 \sqrt{2}} \geq \beta^{\frac{M + 1}2}\cdot \frac{\sqrt{t}}{2 \sqrt{2}}\enspace,
	\end{flalign*}
	and the lemma follows from the definitions of $\beta$ and $M$, and from our assumptions $\mu \le \frac 29$ and $\gamma n \ge \inparen{\frac 1\mu}^c$.
	\end{proof}
	
\section{General relations between spread, distortion and restricted isometry properties}
We prove \cref{prop:pspreadtoqspread,prop:riptospread,prop:companddist} in \cref{sec:pspreadtoqspread,sec:riptospread,sec:companddist}, respectively.

\subsection{$\ell_p$-spread implies $\ell_q$-spread for $q < p$}
\label{sec:pspreadtoqspread}
We prove \cref{prop:pspreadtoqspread}, which we restate below.
\pspreadtoqspread*

\begin{proof}
Let $x \in X$, and let $S \subseteq [n]$ with $\abs{S} = k$ denote the $k$ smallest coordinates (in absolute value) in $x$. Let $X_{\bar{S}}$ denote the subspace obtained by projecting $X$ to the coordinates \emph{not in} $S$. We first show the following claim.

\begin{claim}
\label{claim:projectionspread}
$X_{\bar{S}}$ is $(k,\eps)$-$\ell_p$-spread.
\end{claim}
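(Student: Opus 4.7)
The plan is to prove the claim by contradiction, using the defining property of $(2k,\eps)$-$\ell_p$-spread of the ambient subspace $X$. Observe that at this stage the only property of $S$ that will be used is $\abs{S}=k$; the fact that $S$ consists of the $k$ smallest-magnitude coordinates of $x$ will presumably be used later (e.g., via H\"older's inequality) to bound $\norm{x}_q$ by an expression involving $\norm{x_{\bar S}}_p$ and $\abs{S}$. So the claim I would prove is really: for any $S\subseteq [n]$ with $\abs S=k$, if $X$ is $(2k,\eps)$-$\ell_p$-spread then the projection $X_{\bar S}$ is $(k,\eps)$-$\ell_p$-spread.

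Suppose toward contradiction that $X_{\bar S}$ contains a nonzero vector $y$ that is $(k,\eps)$-$\ell_p$-compressible. By definition there is a $k$-sparse $w\in\R^{\bar S}$ with $\norm{y-w}_p \le \eps\norm{y}_p$. Lift $y$ to some $z\in X$ with $z|_{\bar S}=y$, which exists by the definition of $X_{\bar S}$, and note $z\neq 0^n$ because $y\neq 0^{\bar S}$.

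Now define the lift $\tilde w\in\R^n$ of $w$ by setting $\tilde w|_{\bar S}=w$ and $\tilde w|_S = z|_S$. The key observations are (i) $\supp(\tilde w)\subseteq S\cup \supp(w)$, so $\abs{\supp(\tilde w)}\le \abs{S}+k=2k$, and (ii) because $z$ and $\tilde w$ agree on $S$ by construction,
\begin{equation*}
\norm{z-\tilde w}_p \;=\; \norm{z|_{\bar S}-w}_p \;=\; \norm{y-w}_p \;\le\; \eps\norm{y}_p \;=\; \eps\norm{z|_{\bar S}}_p \;\le\; \eps\norm{z}_p.
\end{equation*}
Thus $z\in X\setminus\{0^n\}$ is $(2k,\eps)$-$\ell_p$-compressible, contradicting the hypothesis that $X$ is $(2k,\eps)$-$\ell_p$-spread. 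This establishes the claim.

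There is no real obstacle here; the argument is just bookkeeping about extending a sparse approximation on $\bar S$ to one on all of $[n]$ by copying $z$ on $S$, thereby paying only $\abs{S}=k$ extra coordinates of sparsity while not paying anything in the $\ell_p$-approximation error. The substantive work in the proposition will come in the next step, where I would use that $S$ picks out the $k$ smallest entries of $x$ to compare $\ell_p$- and $\ell_q$-norms on $x_{\bar S}$ via H\"older and the pigeonhole bound $\norm{x_{\bar S}}_\infty \le \norm{x}_q / (n-k)^{1/q}$, thereby turning $\ell_p$-spread of $X_{\bar S}$ into $\ell_q$-spread of $X$ with the claimed parameter $\eps_q=\eps^2 (k/n)^{1/q}$.
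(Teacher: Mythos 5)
Your proposal is correct and uses essentially the same idea as the paper: extend a $k$-sparse approximation on $\bar S$ to a $2k$-sparse approximation of the lifted vector by copying its values on $S$, which incurs no extra $\ell_p$-error, and then invoke the $(2k,\eps)$-spread of $X$ together with $\norm{z|_{\bar S}}_p \le \norm{z}_p$. The only difference is cosmetic — you argue by contradiction with an arbitrary sparse approximant, while the paper argues directly over each possible support $S'$ using the best approximant $y_{S'}$ and the $2k$-sparse vector $y_{S\cup S'}$.
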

\begin{proof}[Proof of \cref{claim:projectionspread}]
Let $y_{\bar{S}} \in X_{\bar{S}}$ be obtained by projecting $y \in X$ to the coordinates not in $S$, and let $S' \subseteq \bar{S}$ be a subset of size $k$. The vector $z$ with $\supp(z) = S'$ that minimizes $\norm{y_S - z}_p$ is obtained by setting $z = y_{S'}$. Letting $w = y_{S \cup S'}$, we see that $y_S - z = y - w$, as both are equal to $y_{\overline{S \cup S'}}$. As $w$ is $2k$-sparse,
\begin{equation*}
\norm{y_S - z}_p = \norm{y - w}_p \geq \eps \norm{y}_p \geq \eps \norm{y_{\bar{S}}}_p\enspace,
\end{equation*}
where the first inequality uses that $y \in X$ and $X$ is $(2k, \eps)$-$\ell_p$-spread. As $S'$ was arbitrary, this proves that $X_S$ is $(k,\eps)$-$\ell_p$-spread.
\end{proof}

We now use \cref{claim:projectionspread} to finish the proof. Let $y = x_S$. We have that
\begin{flalign*}
&\norm{x_{\bar{S}}}_p = \norm{x - y}_p \geq \eps \norm{x}_p \geq \frac{\eps}{n^{\frac{1}{q} - \frac{1}{p}}} \norm{x}_q \enspace,
\end{flalign*}
by H\"{o}lder's inequality (\cref{lem:holder}) and using that $X$ is $(2k,\eps)$-$\ell_p$-spread. We also have that
\begin{equation*}
\norm{x_{\bar{S}}}_p \leq \norm{x_{\bar{S}}}_q \cdot \frac{\Delta_{q,p}(X_{\bar{S}})}{n^{\frac{1}{q} - \frac{1}{p}}} \enspace.
\end{equation*}
As $X_{\bar{S}}$ is $(k,\eps)$-$\ell_p$-spread, we have $\Delta_{q,p}(X_{\bar{S}}) \leq \frac{1}{\eps} \cdot \left(\frac{n}{k}\right)^{\frac{1}{q}}$ by \cref{prop:companddist}. Hence,
\begin{equation*}
\norm{x_{\bar{S}}}_q \cdot \frac{1}{\eps} \cdot \left(\frac{n}{k}\right)^{\frac{1}{q}} \geq \eps \norm{x}_q \enspace.
\end{equation*}
As $\norm{x_{\bar{S}}}_q = \norm{x - y}_q$ and $S$ was arbitrary, we are done.
\end{proof}

\subsection{$\ell_p$-RIP implies $\ell_p$-spread}
\label{sec:riptospread}
We prove \cref{prop:riptospread}, which we restate below. Our proof is based on an argument of \cite{KT07}.
\riptospread*

\begin{proof}
Let $A$ be a $(k,\eps)$-$\ell_p$-RIP matrix. Without loss of generality, we will assume that the ``normalization factor'' $K$ is $1$. We first upper bound $\norm{A}_p$. Let $x \in \R^n$ be arbitrary. Partition $[n]$ arbitrarily into $b = \ceil{\frac{n}{k}}$ sets $S_1, \dots S_b$, each of size $\leq k$, and for $i \in [b]$, let $z_i := x_{S_i}$. That is, $z_i$ is $x$ restricted to the coordinates in $S_i$. By definition, the $z_i$'s are each $k$-sparse and satisfy $\sum_{i = 1}^b z_i = x$. We thus have
\begin{flalign*}
\norm{Ax}_p \leq \sum_{i = 1}^b \norm{A z_i}_p \leq \sum_{i = 1}^b (1 + \eps) \norm{z_i}_p \leq (1 + \eps) b^{1 - \frac{1}{p}} (\sum_{i = 1}^b \norm{z_i}^p)^{1/p} = (1 + \eps) b^{1 - \frac{1}{p}} \norm{x}_p\enspace,
\end{flalign*}
where the second inequality uses that $A$ is $(k,\eps)$-$\ell_p$-RIP and the third inequality is by \cref{lem:holder}. We thus have that $\norm{A}_p \leq (1 + \eps) b^{1 - \frac{1}{p}}$. 

Now, let $x \in \R^n$ with $\norm{x}_p = 1$ be a $(k,\eps')$-compressible vector, where $\eps' < \frac{1 - \eps}{2 + \eps(1 + b^{1 - \frac{1}{p}})}$. We show that $\norm{Ax}_p > 0$, so that $x \notin \ker(A)$, and thus $\ker(A)$ is $(k,\eps')$-spread. Let $y \in \R^n$ be a $k$-sparse vector such that $\norm{x - y}_p \leq \eps'$. Note that $\norm{y}_p \geq \norm{x}_p - \norm{x - y}_p \geq 1 - \eps'$. We have that
\begin{flalign*}
\norm{Ax}_p \geq \norm{Ay}_p - \norm{A}_p \norm{x - y}_p \geq (1 - \eps)(1 - \eps') - (1 + \eps) b^{1 - \frac{1}{p}} \eps' > 0 \enspace,
\end{flalign*}
by choice of $\eps'$.

Finally, we note that $b^{1 - \frac{1}{p}} < (\frac{2n}{k})^{1 - \frac{1}{p}}$, and so we can take $\eps' = \frac{1 - \eps}{2 + \eps(1 + (\frac{2n}{k})^{1 - \frac{1}{p}})}$. 
\end{proof}

\subsection{Equivalence between $\ell_p$-spread and $(\ell_q,\ell_p)$-distortion}
\label{sec:companddist}

We prove \cref{prop:companddist}, restated below. Our proof generalizes an argument in~\cite{GLR10}.
\companddist*
	\begin{proof}
		Clearly, it suffices to prove both statements under the assumption that $\norm{x}_p = 1$. Let $a = \frac{1}{q} - \frac{1}{p}$.
		
		\begin{enumerate}
			\item Let $y\in \R^n$ be $k$-sparse with $\smnorm{x-y}_p\le \eps$. Clearly, we may take $y$ so that that $y$ and $x-y$ have disjoint supports. In particular, this implies that $\smnorm{y}_p \le \norm{x}_p = 1$. Thus,
			$$\frac{1}{\Delta_{q,p}(x)} = \norm{x}_q\cdot n^{-a} \le  \inparen{\smnorm {y}_q + \smnorm{x-y}_q} \cdot n^{-a}\le \inparen{\smnorm{y}_p\cdot k^a + \smnorm{x-y}_p\cdot n^a}\cdot n^{-a}\le \inparen{\frac{k}{n}}^a  + \eps,$$
			where the penultimate inequality follows from \cref{lem:holder}.
			
			\item Let $S\subseteq [n]$ be a set consisting of the $k$ largest coordinates of $x$ in absolute value (with arbitrary tie-breaking). Namely, $\abs{S} = k$ and $\abs{x_j}\ge \abs{x_i}$ for all $j\in S$, $i\in [n]\setminus S$. Define $y \in \R^n$ by
			\begin{equation*}y_i = \begin{cases}
				x_i &\text{if }i\in S\\
				0 &\text{if }i\in [n]\setminus S \enspace.
			\end{cases}
			\end{equation*}
			
			It now suffices to show that $\smnorm{x-y}_p \le \inparen{\tfrac{n}{k}}^{1/q} \tfrac{1}{\Delta_{p,q}(v)}$. Let $\sigma = \min\{\abs{x_i} \mid i\in S\}$. Note that
			\begin{equation*}
			\frac{1}{\Delta_{p,q}(x)} = \frac{\norm{x}_q}{n^a} \ge \frac{\smnorm{y}_q}{n^a} \ge \frac{k^{\frac{1}{q}} \sigma}{n^a} \enspace, \quad 
			\text{so that} \quad 
			\sigma \le \frac{n^a}{\Delta_{p,q}(x)\cdot k^{\frac{1}{q}}} \enspace.
			\end{equation*}
			The claim follows since
			\begin{equation*} \norm{x-y}_p \le \infnorm{x-y}\cdot n^{\frac{1}{p}} \le \sigma\cdot n^{\frac{1}{p}} \le \inparen{\frac{n}{k}}^{\frac{1}{q}}\cdot \frac{1}{\Delta_{p,q}(x)} \enspace . 
			\end{equation*}
		\end{enumerate}		
		Finally, suppose that $X$ is $(k,\eps)$-$\ell_p$-spread. Let $x \in X$ be any vector, and note that $x$ is $(k,\eps)$-$\ell_p$-spread. We also have that $x$ is $\inparen{k,\frac{\inparen{\frac {n}{k}}^{\frac{1}{q}}}{\Delta_{q,p}(x)}}$-$\ell_p$-compressible, which implies that $\eps \leq \frac{\inparen{\frac {n}{k}}^{\frac{1}{q}}}{\Delta_{q,p}(x)}$. Rearranging and taking the $\sup$ over $x \in X$, we conclude that $\Delta_{q,p}(X) \leq \frac{1}{\eps} \left(\frac{n}{k}\right)^{\frac{1}{q}}$.
	\end{proof}
\section*{Acknowledgements}
We thank Sidhanth Mohanty for helpful discussions about the works of \cite{BritoDH18, Bordenave19, BordenaveC19, MohantyOP20a, MohantyOP20b, ODonnellW20}, and Yuval Peled for helpful discussions about convergence theorems for graph spectra.

We thank Amir Shpilka for bringing the work of \cite{Karnin11} to our attention, and Ioana Dumitriu for bringing the works of \cite{BritoDH18, Zhu20} to our attention.

	
	\bibliographystyle{alpha}
\bibliography{SparseSections.bbl}
\end{document}